\documentclass[aps,twocolumn,floatfix,superscriptaddress,longbibliography,normalem]{revtex4-2}

\usepackage{adjustbox}
\usepackage{amssymb,amsmath,amstext,amsthm}
\DeclareFontFamily{U}{mathx}{}
\DeclareFontShape{U}{mathx}{m}{n}{<-> mathx10}{}
\DeclareSymbolFont{mathx}{U}{mathx}{m}{n}
\DeclareMathAccent{\widecheck}{0}{mathx}{"71}
\usepackage{dsfont}
\usepackage{graphicx}
\usepackage{bm}
\usepackage{algorithm,algpseudocode}
\usepackage{appendix}
\usepackage[T1]{fontenc}
\usepackage{physics}
\usepackage{xcolor}
\usepackage[colorlinks=true,citecolor=blue,linkcolor=magenta]{hyperref}
\usepackage{tikz-network}
\usepackage{orcidlink}

\newtheorem{theorem}{Theorem}
\newtheorem{definition}{Definition}
\newtheorem{statement}{Statement}
\newtheorem{construction}{Construction}

\newcommand{\quantummotionaus}{Quantum Motion, 4 Cornwallis St, Eveleigh NSW 2015, Australia}
\newcommand{\quantummotionuk}{Quantum Motion, 9 Sterling Way, London N7 9HJ, United Kingdom}
\newcommand{\unimelb}{School of Physics, The University of Melbourne, Parkville, VIC 3010, Australia}
\newcommand{\deloitte}{Deloitte Consulting, LLP}
\newcommand{\lanltdivision}{Theoretical Division, Los Alamos National Laboratory, Los Alamos, New Mexico 87545, USA}

\begin{document}

\title{Multivariate quantum state preparation with optimized tensor networks}

\author{Matthew~L.~Sims-Goh\,\orcidlink{0000-0002-7478-4026}}
\email{matt.sims-goh@quantummotion.tech}
\affiliation{\quantummotionaus}
\affiliation{\quantummotionuk}
\affiliation{\unimelb}
\author{Lukasz~Cincio\,\orcidlink{0000-0002-6758-4376}}
\affiliation{\lanltdivision}
\author{Annina~Z.~Lieberherr\,\orcidlink{0000-0002-6432-3138}}
\affiliation{\quantummotionuk}
\author{Mekena~McGrew\,\orcidlink{0000-0001-9040-2924}}
\affiliation{\deloitte}
\author{Thomas~R.~Bromley\,\orcidlink{0000-0002-7480-7478}}
\affiliation{\quantummotionuk}

\begin{abstract}
Quantics tensor trains are attracting intense interest for quantum-inspired computing and quantum state preparation. These methods, which approximate continuum functions by representing their amplitude encoding as a matrix product state (MPS), are exceedingly powerful for univariate functions but rapidly become challenging when handling multivariate functions, since the linear chain topology leads to a large distance between highly-entangled qubits. We overcome this limitation by introducing \textsc{scent} (Spectral Clustering for Entanglement miNimizing Trees).  \textsc{scent} is a protocol that utilizes efficiently-computable pairwise entanglement metrics to determine a suitable tree tensor network (TTN) structure, which can then be efficiently approximated using tensor cross-interpolation (TCI); we find that it substantially outperforms MPS methods and improves upon previous TTN methods. We then apply these optimized TTNs to state preparation, introducing an approximate circuit compilation method based on environment-tensor methods without significantly conceding overall accuracy. Importantly, the inherent gauge freedom of TTNs can be directly exploited in this method, resulting in higher fidelity at a given circuit depth. We demonstrate this quantum state-preparation pipeline on archetypal state-preparation problems in quantum chemistry and financial portfolio optimization. In our flagship demonstration, we encode a 20-variable probability distribution with long-ranged, non-nearest neighbor inter-variable correlations in a 200-qubit state-preparation circuit with infidelity $7.44\times 10^{-9}$ using only 43284 CNOTs; depth and fidelity can be traded, allowing the same distribution to be prepared to infidelity $10^{-3}$ with as few as 5584 CNOTs.
\end{abstract}
\maketitle

\section{Introduction\label{sec:introduction}}
The remarkable experimental advances in quantum computing of recent years \cite{bluvstein2023logical,reichardt2024demonstration,google2025quantum} suggest that we are rapidly approaching useful quantum advantage, spurring a renewed search for applications of quantum processors. There are clear prospects for practically or scientifically useful applications leveraging quantum algorithms for simulating quantum systems \cite{lloyd1996universal,mcardle2020quantum,bauer2020quantum,daley2022practical,huang2025fullqubit}, solving systems of linear equations~\cite{harrow2009quantum}, simulating the dynamics of high-dimensional linear~\cite{morales2024quantum} and nonlinear~\cite{tennie2025quantum} systems, performing generalized matrix arithmetic with QSVT~\cite{gilyen2019quantum}, and accelerating unstructured search tasks~\cite{grover1996fast,brassard2000quantum}. However, these proposed applications face a common threat: each relies upon the assumption that suitable input states can efficiently be prepared by some oracular process. Preparing general quantum states is exponentially costly~\cite{knill1995approximation,mottonen2004transformation}, raising the possibility that in-principle quantum advantage in these lucrative applications may be removed by state-preparation bottlenecks \cite{aaronson2015read,lee2023evaluating}. Efficient state preparation is therefore a crucial enabler for quantum advantage on most useful applications, and is typically built upon some parsimonious classical representation well-suited to the type of information to be encoded~\cite{kivlichan2018quantum,goh2023lie,berry2025rapid,ran2020encoding,rudolph2022decomposition,bohun2024entanglement,sugawara2025embedding,holmes2020efficient,endo2020quantum,garcia2021quantuminspired,mcardle2022quantum,moosa2023linear,gonzalez2024efficient,mori2024efficient,rosenkranz2025quantum,rattew2022preparing,marin2021quantum,huggins2025efficient,kitaev2008wavefunction,rattew2021efficient,markov2022generalized,iaconis2024quantum,grover2002creating,zoufal2019quantum,rupprecht2026sparse,fomichev2024initial,barligea2026enabling}.

One of the most popular embeddings for inputting data to quantum algorithms acting on $n$ qubits is the amplitude encoding, whereby an $\ell^2$-normalized, $2^n$-dimensional vector $\vec{\alpha}=(\alpha_0,\dots,\alpha_{2^{n-1}})$ is embedded in the amplitudes of a quantum state $\ket{\psi}=\sum_{j}\alpha_j \ket{j}$. In this work we focus on amplitude-encoded functions, which have a broad range of applications including financial modeling~\cite{rebentrost2018quantum, orus2019quantum, stamatopoulos2020option, stamatopoulos2022towards}, first-quantized chemistry~\cite{chan2023grid} and the simulation of linear and nonlinear systems \cite{morales2024quantum,tennie2025quantum}, such as those encountered in computational fluid dynamics~\cite{gourianov2024tensor,lee2026quantum}. In the standard function encoding \cite{zalka1998simulating,grover2002creating} now commonly known as the `quantics' representation in tensor-network literature \cite{khoromskij2011d,khoromskij2014tensor,shinaoka2023multiscale,ritter2024quantics,tindall2024compressing,jolly2025tensorized,nunez2025learning,niedermeier2025solving,kim2025strong,ishida2025low}, a univariate function $f:[0,1)\to \mathbb{C}$ is embedded in the state
\begin{equation}
    \ket{\psi}=\sum_j \frac{f(x_j)}{\mathcal{Z}}\ket{j}, \quad x_j\equiv \sum_{b=1}^{n}\frac{\sigma_{jb}}{2^b} = \underbrace{0.\sigma_{j1} \sigma_{j2} \dots \sigma_{jn}}_{\text{binary representation}},\label{eqn:quantics_representation}
\end{equation}
where $\mathcal{Z}^{2} = \sum_{j} |f(x_{j})|^{2}$ is chosen to $\ell^2$-normalize $\ket{\psi}$ and $\ket{j} \equiv \ket{\sigma_{j1} \sigma_{j2} \dots \sigma_{jn}}$ indexes the discretized spatial points in binary counting order.
This discretizes the variable onto an exponentially dense grid with spacing $\delta=2^{-n}$. A generic domain $y\in[a,b]$ can be trivially accommodated by function rescaling, and multivariate functions can in principle be encoded by entangling multiple registers of this type. In practice, the state preparation of multivariate functions has proven highly challenging~\cite{kitaev2008wavefunction,garcia2021quantuminspired,rattew2022preparing,moosa2023linear,mori2024efficient,manabe2025state,rosenkranz2025quantum}.

\begin{figure*}
    \centering
    \includegraphics[width=\textwidth]{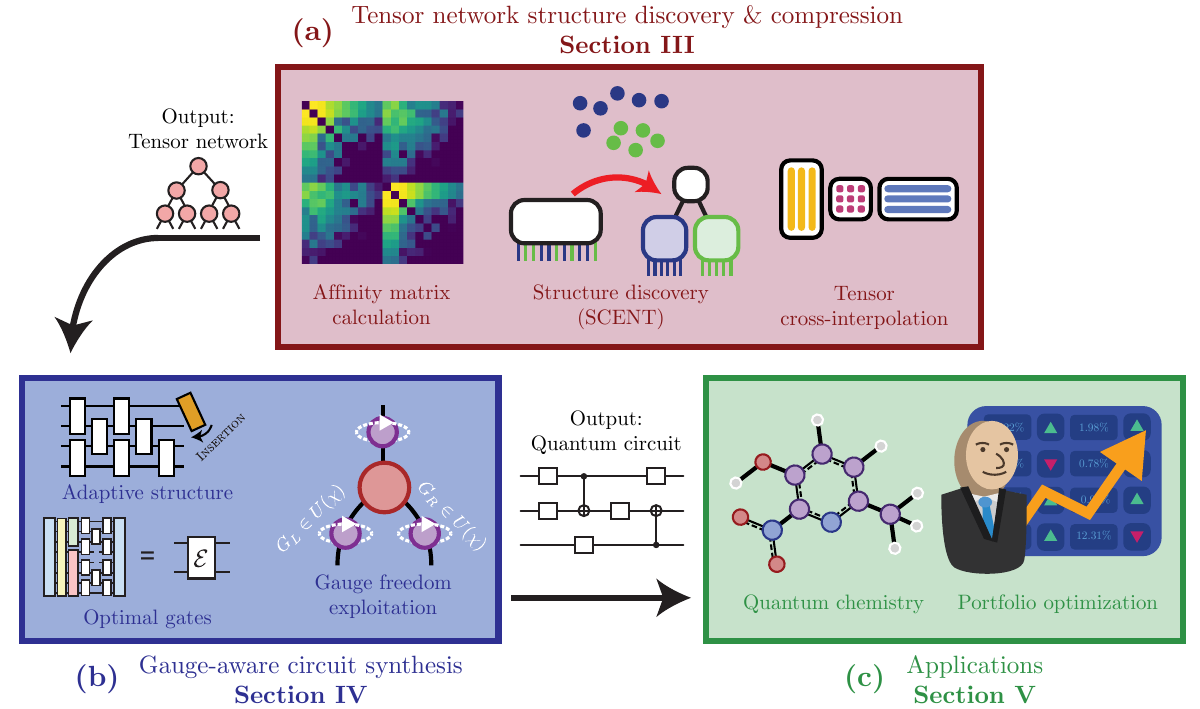}
    \caption{\label{fig:summary}\textbf{Summary of our approach.} \textbf{(a)} Our approach to tensor network structure discovery (Section~\ref{sec:ttn_compression}) begins with the calculation of affinity matrices (left), which quantify the correlation between qubits for the desired state. These are used to recursively partition qubits (middle), assigning qubit indices to branches in a way that seeks to minimize graph distance between highly entangled qubits. This tree structure is then employed in tensor cross-interpolation (right), yielding an interpolative tensor network approximation of the desired function. \textbf{(b)} We demonstrate how to synthesize a quantum circuit from this tensor network (Section~\ref{sec:compilation}). Our approach adaptively modifies circuit structure, exactly computes locally-optimal gate replacements, and exploits the inherent gauge freedom of tensor networks to construct a suitable quantum circuit. \textbf{(c)} We explore applications of multivariate function preparation on quantum computers (Section~\ref{sec:applications}), including quantum chemistry and financial portfolio application.}
\end{figure*}

A variety of distinct methods have been proposed for quantum state preparation of functions in the quantics representation \eqref{eqn:quantics_representation}, each targeting different classes of functions limited by different constraints. These include functions with sparse Fourier, polynomial or Chebyshev representations~\cite{holmes2020efficient,endo2020quantum,garcia2021quantuminspired,mcardle2022quantum,moosa2023linear,gonzalez2024efficient,mori2024efficient,rosenkranz2025quantum}, smooth univariate probability distributions~\cite{grover2002creating,zoufal2019quantum} and complex-valued functions \cite{rattew2022preparing,marin2021quantum,bohun2024entanglement}, normal distributions~\cite{kitaev2008wavefunction,rattew2021efficient,markov2022generalized,iaconis2024quantum,manabe2025state}, and Gaussian-type basis functions for quantum chemistry~\cite{huggins2025efficient}.

The methodology of the state-preparation routines mentioned above is extremely diverse, ranging from random walks~\cite{rattew2021efficient} to the quantum singular value transformation~\cite{mcardle2022quantum}. In this work, we focus on tensor network methods as a particularly promising avenue for further development. In the case of matrix product states (MPS), the maturity of density matrix renormalization group (DMRG) techniques~\cite{white1992density,schollwock2005density,schollwock2011density} coupled with a broad range of procedures to compile state-preparation circuits for MPS \cite{schon2005sequential,iten2016quantum,ran2020encoding,rudolph2022decomposition,gundlapalli2022deterministic,melnikov2023quantum,malz2024preparation,berry2025rapid} enables high-fidelity state preparation for some high-value quantum simulation problems~\cite{berry2025rapid}. Amplitude-encoded univariate functions that are sufficiently smooth or have few sharp discontinuities have provably low bond dimension when approximated by an MPS \cite{holmes2020efficient,ali2023approximation,bohun2024entanglement,holmes2020entanglement,garcia2021quantuminspired,marin2021quantum,lindsey2023multiscale}, raising the possibility of similar successes for state preparation of multivariate functions~\cite{ali2024multivariate}.

Until recently, due to the lack of a direct rank-revealing construction for MPS function approximations, MPS-based state-preparation routines were limited in their approach, relying upon expansion into analytically-constructible polynomial terms~\cite{holmes2020efficient,iaconis2024quantum}. In the past few years, however, there has been a flurry of interest in tensor cross-interpolation (TCI)~\cite{nunez2025learning}, a rank-revealing, active-learning algorithm to efficiently construct MPS via few queries to the full tensor (a generalized form of which is detailed in Section~\ref{sec:tci_intro}). TCI has enabled powerful quantum-inspired approaches to numerous problems including option pricing~\cite{sakurai2025learning}, nonlinear dynamics \cite{gourianov2024tensor,niedermeier2025solving}, and the evaluation of Feynman diagrams~\cite{nunez2022learning,ishida2025low}. TCI is now often combined with the quantics representation~\eqref{eqn:quantics_representation} for high-precision, highly-compressible representation of functions \cite{shinaoka2023multiscale,ritter2024quantics,tindall2024compressing,jolly2025tensorized,nunez2025learning,niedermeier2025solving,kim2025strong,ishida2025low}, a capability which has recently been applied to quantum state preparation~\cite{melnikov2023quantum,bohun2024entanglement}.

Although MPS can be used to represent multivariate functions via the `interleaved' and `serial' representations \cite{rodriguez2024chebyshev,ritter2024quantics,nunez2025learning,tindall2024compressing} (Section~\ref{sec:mps_qtt_intro}), their 1D topology is ill-suited to this, particularly in the presence of many variables and/or strong inter-variable correlations --- in general, the bond dimension may scale exponentially in the number of variables. Substantial advances have been made in tensor-network models of physical systems by considering more general network topologies that are better suited to the underlying entanglement structure \cite{shi2006classical,vidal2007entanglement,verstraete2004renormalization}, and it stands to reason that similar advances should be possible for the encoding of functions. Of the various generalizations of MPS, tree tensor networks (TTNs)~\cite{shi2006classical} seem a particularly appealing choice, since they can represent more general correlation structures, their loop-free structure enables efficient contractions, and they can easily be placed in useful gauges \cite{shi2006classical,tindall2023gauging,tindall2024compressing}.

Recent work by Tindall \emph{et al.} has generalized TCI to TTNs~\cite{tindall2024compressing}, showing that TTNs with heuristically chosen network structure outperform MPS in parsimoniously approximating some multivariate functions. 
These heuristically-chosen structures are now being applied to quantum state preparation~\cite{ballarin2025efficient}, but network structures need to be chosen with care, and different states will have different optimal network structures. This creates a pressing need for systematic approaches to constructing TTNs of efficient topology. While local-update-based structural optimization techniques \cite{hikihara2023automatic,manabe2025state,hikihara2025improving} can provide some assistance in this regard, they require one to first construct an MPS to `unfold', which is not guaranteed to have tractable bond dimension even if the final TTN does. 

In response to these needs, we introduce a framework for state preparation of multivariate functions, depicted schematically in Figure~\ref{fig:summary}. Specifically, we make the following contributions:
after a brief review of the relevant tensor network tools (Section~\ref{sec:preliminaries}), Section~\ref{sec:ttn_compression} introduces \textsc{scent} (Spectral Clustering for Entanglement miNimizing Trees), a novel and scalable approach to discovering an optimal TTN structure. Since it does not require an unstructured initial efficient representation of the underlying state (e.g. an MPS), it can be applied to functions too complicated to initialize for existing methods. We provide a Python implementation of these tools in the GitLab repository of Ref.~\cite{gitlab_repo}.
Section~\ref{sec:compilation} subsequently describes a new method for approximately compiling isometries --- the building blocks of the TTNs found by \textsc{scent} --- to quantum circuits.
Finally, we demonstrate these new capabilities on a range of benchmarking problems, as well as archetypal state-preparation problems in quantum chemistry and financial portfolio optimization in Section~\ref{sec:applications}.

We note additionally that tensor-network-based treatments of functions and state preparation are extremely active fields of research at present, and we present a detailed comparison to related works in the literature in Appendix~\ref{app:literature_comparison}.

\section{Preliminaries\label{sec:preliminaries}}
\subsection{Tensor networks}
Tensor networks have proven to be an extremely powerful tool for parsimonious approximation of otherwise exponentially-sized mathematical objects describing quantum systems. For the purposes of this work, a tensor network is defined as follows:
\begin{definition}[Tensor network]
    A tensor network is a collection of tensors $\{T_v\}_{v\in V}$, each with a finite set of indices, together with a specification of which indices are to be pairwise identified and summed over. This specification defines a network graph $G=(V,E)$, where vertices correspond to tensors and each edge $(i,j)\in E$ corresponds to a shared index contracted between $T_i$ and $T_j$.\label{defn:tensor_network}
\end{definition}
The summed-over indices are commonly known as internal indices or bonds, while the open indices $\bm{\sigma}=[\sigma_1,\dots,\sigma_n]$ are referred to as external or physical indices. For the purposes of this work, these correspond to the indices of each qubit in an $n$-qubit Hilbert space. The overall value of the network is the multidimensional array $\mathcal{T}(\bm{\sigma})$ obtained by performing all internal contractions, depending only on external indices $\bm{\sigma}$. The complexity of a tensor network is commonly quantified by its bond dimension $\chi$, the maximum dimension of the internal indices.
\subsection{Matrix product states and quantics tensor trains}
\label{sec:mps_qtt_intro}
The simplest, and most popular tensor network for representing quantum systems is the aforementioned MPS:
\begin{definition}[Matrix product state]
    A matrix product state (MPS) is a tensor network (Definition~\ref{defn:tensor_network}) whose underlying graph $G$ is a path graph (i.e. forms a linear chain structure).
\end{definition}

As noted above, we consider $n$-qubit states whose indices are labeled $\bm{\sigma}=[\sigma_1,\dots,\sigma_n]$ with $\sigma_i \in \{0, 1\}$. The amplitudes $\Psi_{\bm{\sigma}}=\bra{\sigma_1\dots\sigma_n}\ket{\psi}$ of such a quantum state $\ket{\psi}$ can be written as the MPS
\begin{equation}
\Psi_{\bm{\sigma}}=\prod_{b=1}^n T_b^{\sigma_b}=[T_1]^{\sigma_1}_{\alpha_1}[T_2]^{\sigma_2}_{\alpha_1\alpha_2}\dots [T_{n-1}]^{\sigma_{n-1}}_{\alpha_{n-2}\alpha_{n-1}} [T_n]^{\sigma_n}_{\alpha_{n-1}},
\label{eqn:mps_equation_form}
\end{equation}
where Einstein summation over repeated internal indices $\alpha_b$ is assumed. This is classically efficient for states of 1D systems with area-law entanglement~\cite{eisert2010colloquium}. Throughout this work, we use standard diagrammatic notation for tensor networks \cite{penrose1971applications,bridgeman2017hand}, representing tensors as nodes connected by lines; lines connecting two nodes correspond to bonds, whilst `external' lines with a free end correspond to physical indices. In this notation, an MPS is written as
\begin{equation}\Psi_{\bm{\sigma}}=
\begin{array}{c}
\begin{tikzpicture}
\Vertex[x=0,y=0,label=$T_1$,position=above,size=0.25,color=blue!30]{A1}
\Vertex[x=1,y=0,label=$T_2$,position=above,size=0.25,color=blue!30]{A2}
\Vertex[x=2.5,y=0,label=$T_{n-1}$,position=above,size=0.25,color=blue!30]{Anm1}
\Vertex[x=3.5,y=0,label=$T_n$,position=above,size=0.25,color=blue!30]{An}
\draw[line width = 1] (A1) -- ++(0,-0.35) node[below] {\scriptsize$\sigma_1$};
\draw[line width = 1] (A2) -- ++(0,-0.35) node[below] {\scriptsize$\sigma_2$};
\draw[line width = 1] (Anm1) -- ++(0,-0.35) node[below] {\scriptsize$\sigma_{n-1}$};
\draw[line width = 1] (An) -- ++(0,-0.35) node[below] {\scriptsize$\sigma_n$};
\Edge[lw=1,position={below},label={\makebox[1em][c]{$\alpha_{n-1}$}}](Anm1)(An)
\Edge[lw=1,position={below},label={\makebox[1em][c]{$\alpha_1$}}](A1)(A2)
\Edge[lw=1,label=$\cdots$](A2)(Anm1)
\end{tikzpicture}
\end{array}.
\label{eqn:mps_diagram_form}
\end{equation}
When used to represent a quantics function~\eqref{eqn:quantics_representation}, where each qubit represents a binary bit of significance in a function variable, an MPS is sometimes referred to as a quantics tensor train (QTT); we use these terms interchangeably in this work.  The multivariate case is more complicated: consider a $D$-dimensional multivariate function $f(\vec{x})$, where $\vec{x}=(x^{(1)},x^{(2)},\dots,x^{(D)})$ with $x^{(d)}\in[0,1)^B$ is a vector-valued argument. This can be encoded in a quantum state analogously to \eqref{eqn:quantics_representation} as
\begin{equation}
    \ket{\psi} = \sum_{\bm{j}}\frac{f(\vec{x}_{\bm{j}})}{\mathcal{Z}}\ket{\bm{j}}, \quad \ket{\bm{j}}=\bigotimes_{k=1}^D \ket{j_k},\label{eqn:multivariate_quantics}
\end{equation}
where $x_b^{(d)}$ is the $b$-th bit of precision in the binary expansion of the $d$-th function variable $x^{(d)}$ and the individual $\ket{j_k}$ follow the form in Equation~\ref{eqn:quantics_representation}. For univariate functions, the assignment of qubits to external tensor indices is made in order of bit significance. Limited accommodation of multivariate functions is possible by the interleaved and serial representations~\cite{ritter2024quantics,nunez2025learning,tindall2024compressing}, where in the MPS we have bit orderings
\begin{equation}
    \bm{\sigma} = \begin{cases}
        (x^{(1)}_1 \dots x^{(D)}_1 x^{(1)}_2 \dots x^{(D)}_2 \dots ) & \text{interleaved}, \\
        (x^{(1)}_1 \dots x^{(1)}_B x^{(2)}_1 \dots x^{(2)}_B \dots)& \text{serial},
    \end{cases}
\end{equation}
respectively, where $B$ bits of precision are used for each variable; this is depicted graphically in Figure~\ref{fig:tensor_network_diagrams}(a). Although the interleaved representation is typically more performant for functions with strong inter-variable correlations, generically both representations are poorly suited to the linear chain topology of a QTT for larger $D$ (Figure~\ref{fig:tensor_network_diagrams}(b)). Both can lead to large distances between bits that are often strongly correlated. In the interleaved representation, sequential bits of the same variable are separated by $D$ bonds, which can cause scaling of bond dimension exponential in $D$ due to bond crossings between different function variables. In the serial representation, the highest-significance bits of neighboring variables are separated by $B$ bonds, and those of the most distant variables are separated by $DB$ bonds, leading to very poor scaling for functions with strong inter-variable correlations. Clearly, there is a pressing motivation to explore more problem-tailored network topologies.
\begin{figure}
    \centering
    \includegraphics[width=0.5\textwidth]{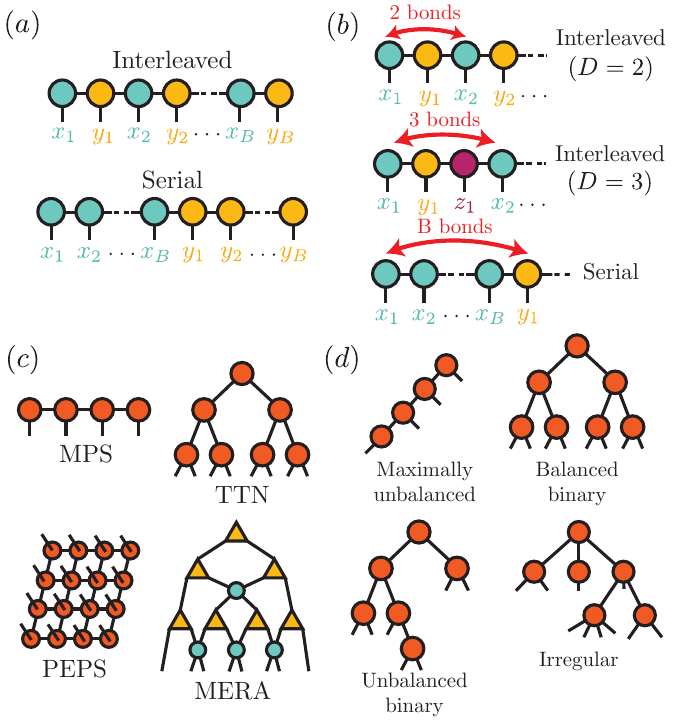}
    \caption{\label{fig:tensor_network_diagrams}\textbf{Tensor networks for function encoding.} \textbf{(a)} The interleaved (top) and serial (bottom) representations, depicted for $D=2$ dimensions.
    \textbf{(b)} Distance between closely correlated bits of the same variable. We depict how sequential bits of the same variable are separated by $D$ bonds in the interleaved representation, while the serial representation will separate the highest-significance bits of two adjacent variables by $B$ bonds. In both cases, this leads to many bond crossings that may be avoided by more problem-tailored network topologies.
    \textbf{(c)} Examples of the diverse range of tensor network structures common in the literature. Pictured are MPS (linear chain structure), TTN (acyclic structure), PEPS (Projected Entangled Pair State; loops permitted in graph structure), and MERA (Multi-scale Entanglement Renormalization Ansatz; balanced TTN with physically-motivated unitary disentanglers).
    \textbf{(d)} Examples of different TTN structures. An MPS is a maximally unbalanced TTN (top left). While balanced binary trees (top right) are common in other works, \textsc{scent} produces binary trees that may in general be unbalanced (bottom left). Furthermore, irregular trees with vertices of different degree can also be used to build valid TTNs (bottom right).}
\end{figure}

\subsection{Tree tensor networks\label{sec:ttn_methods}}
A range of more general tensor network topologies exist (see examples in Figure~\ref{fig:tensor_network_diagrams}(c)). We focus our attention on TTNs~\cite{shi2006classical}:
\begin{definition}[Tree tensor network]
    A tree tensor network (TTN) is a tensor network (Definition~\ref{defn:tensor_network}) whose underlying graph $G$ is acyclic (i.e. forms a tree-like structure).
\end{definition}
These are suitable for our purposes due to their loop-free structure; it remains unclear how TCI could be generalized to tensor networks with looped topology. TTNs encompass any tensor network where the bonds do not form loops --- various types are depicted in Figure~\ref{fig:tensor_network_diagrams}(d). 
The tensors with open (or physical) indices are also referred to as leaf tensors.

Like all tensor networks, TTNs have gauge freedom that can be exploited to expose useful properties and simplify computations. Given any two tensors $A$ and $B$ connected by a bond of dimension $\chi$, we note that
\begin{equation}
\begin{array}{c}
\begin{tikzpicture}
\Vertex[x=0,y=0,label=$A$,position=above,size=0.3,color=blue!30]{A}
\Vertex[x=0.6,y=0,label=$B$,position=above,size=0.3,color=blue!30]{B}
\draw[line width = 1] (A) -- ++(0,-0.35) node[below] {\scriptsize$\bm{\nu}_A$};
\draw[line width = 1] (B) -- ++(0,-0.35) node[below] {\scriptsize$\bm{\nu}_B$};
\Edge[lw=1](A)(B)
\end{tikzpicture}
\end{array}
=
\begin{array}{c}
\begin{tikzpicture}
\Vertex[x=0,y=0,label=$A$,position=above,size=0.3,color=blue!30]{A}
\Vertex[x=0.5,y=0,label=$X$,position=above,size=0.2,color=red!30]{X}
\Vertex[x=1.0,y=0,label=$X^\dagger$,position=above,size=0.2,color=red!30]{Xdag}
\Vertex[x=1.5,y=0,label=$B$,position=above,size=0.3,color=blue!30]{B}
\draw[line width = 1] (A) -- ++(0,-0.35) node[below] {\scriptsize$\bm{\nu}_A$};
\draw[line width = 1] (B) -- ++(0,-0.35) node[below] {\scriptsize$\bm{\nu}_B$};
\Edge[lw=1](A)(B)
\end{tikzpicture}
\end{array}
=
\begin{array}{c}
\begin{tikzpicture}
\Vertex[x=0,y=0,label=$\tilde{A}$,position=above,size=0.3,color=green!30]{A}
\Vertex[x=0.6,y=0,label=$\tilde{B}$,position=above,size=0.3,color=green!30]{B}
\draw[line width = 1] (A) -- ++(0,-0.35) node[below] {\scriptsize$\bm{\nu}_A$};
\draw[line width = 1] (B) -- ++(0,-0.35) node[below] {\scriptsize$\bm{\nu}_B$};
\Edge[lw=1](A)(B)
\end{tikzpicture}
\end{array}
\end{equation}
for any $\chi\times\chi$ unitary $X\in U(\chi)$, where we have performed the contractions $\tilde{A}=AX$ and $\tilde{B}=BX^\dagger$, and $\bm{\nu}_A$ and $\bm{\nu}_B$ are multi-indices collating all other bond and physical indices of $A$ and $B$, thus creating a gauge freedom whereby different sets of tensors can represent the same underlying state. In this work, we make use of the standard isometric gauge, detailed below, and the interpolative gauge (Section~\ref{sec:tci_intro}). First, recall the following definition:
\begin{definition}[Isometry]
Consider two Hilbert spaces $\mathcal{H}_{\operatorname{in}}$ and $\mathcal{H}_{\operatorname{out}}$, with dimensions $\chi_{\operatorname{in}}\equiv\dim(\mathcal{H}_{\operatorname{in}})$ and $\chi_{\operatorname{out}}\equiv\dim(\mathcal{H}_{\operatorname{out}})$, and where $\chi_{\operatorname{in}}\leq \chi_{\operatorname{out}}$. An isometry is a linear map $W:\mathcal{H}_{\operatorname{in}}\to\mathcal{H}_{\operatorname{out}}$ such that $W^\dagger W=\mathds{1}$, which we write in diagrammatic form as
\begin{equation}
\begin{array}{c}
\begin{tikzpicture}
\Vertex[x=0,y=0,label=$W^\dagger$,position=right,size=0.15,color=yellow!30,shape=semicircle,style={rotate=90}]{Wdag}
\Vertex[x=0.5,y=0,label=$W$,position=left,size=0.15,color=yellow!30,shape=semicircle,style={rotate=-90}]{W}
\draw[line width = 1] (W) -- ++(0.35,0) node[below] {};
\draw[line width = 1] (Wdag) -- ++(-0.35,0) node[below] {};
\Edge[lw=2](W)(Wdag)
\end{tikzpicture}
\end{array}
=
\begin{array}{c}
\begin{tikzpicture}
\draw[line width=1pt] (0,0) -- (1,0);
\end{tikzpicture}
\end{array},
\end{equation}
where the flat edge of the semicircle is chosen to represent the $\mathcal{H}_{\operatorname{out}}$ direction.
\end{definition}
If the Hilbert space dimensions are powers of $2$ ($\chi_{\operatorname{in}}=2^m$, $\chi_{\operatorname{out}}=2^n$), an isometry is a physically implementable map where $m$ qubits in any state are entangled with $(n-m)$ additional uninitialized qubits, with circuit complexity typically scaling as $\mathcal{O}(2^{n+m})$~\cite{iten2016quantum}.

Any tensor in a TTN can be straightforwardly replaced by an isometry (with respect to some bipartition of indices to turn the tensor into a matrix, known as matricization) without changing the overall value of the network as follows:
\begin{construction}[Isometrization of a tensor]
\label{construct:isometrization}Given two bonded tensors $A$ and $B$ in a TTN, we may replace $A\to \tilde{A}$ and $B\to\tilde{B}$ without changing the overall value of the TTN, where $\tilde{B}$ is an isometry, as follows. Matricizing $B$ and computing the singular value decomposition (SVD) $B=USV$, one obtains
\begin{equation}
\label{eqn:svd_push}
\begin{array}{c}
\begin{tikzpicture}
\Vertex[x=0.6,y=0.6,label=$A$,position=below,size=0.3,color=blue!30]{A}
\Vertex[x=0,y=0,label=$B$,position=above,size=0.3,color=blue!30]{B}
\Vertex[x=1.2,y=0,label=$C$,position=above,size=0.3,color=blue!30]{C}
\draw[line width = 1] (A) -- ++(0,0.35) node[above] {\scriptsize$\bm{\nu}_A$};
\draw[line width = 1] (B) -- ++(0,-0.35) node[below] {\scriptsize$\bm{\nu}_B$};
\draw[line width = 1] (C) -- ++(0,-0.35) node[below] {\scriptsize$\bm{\nu}_C$};
\Edge[lw=1](A)(B)
\Edge[lw=1](A)(C)
\end{tikzpicture}
\end{array}
=
\begin{array}{c}
\begin{tikzpicture}
\Vertex[x=1.8,y=0.6,label=$A$,position=below,size=0.3,color=blue!30]{A}
\Vertex[x=0,y=0,label=$U$,position=above,size=0.15,color=yellow!30,shape=semicircle]{U}
\Vertex[x=0.6,y=0,label=$S$,position=above,size=0.2,color=gray!30]{S}
\Vertex[x=1.2,y=0,label=$V$,position=right,size=0.15,color=yellow!30,shape=semicircle,style={rotate=90}]{V}
\Vertex[x=2.4,y=0,label=$C$,position=above,size=0.3,color=blue!30]{C}
\draw[line width = 1] (A) -- ++(0,0.35) node[above] {\scriptsize$\bm{\nu}_A$};
\draw[line width = 1] (U) -- ++(0,-0.35) node[below] {\scriptsize$\bm{\nu}_B$};
\draw[line width = 1] (C) -- ++(0,-0.35) node[below] {\scriptsize$\bm{\nu}_C$};
\Edge[lw=1](A)(V)
\Edge[lw=1](A)(C)
\Edge[lw=1](U)(S)
\Edge[lw=1](V)(S)
\end{tikzpicture}
\end{array}
=
\begin{array}{c}
\begin{tikzpicture}
\Vertex[x=0.6,y=0.6,label=$\tilde{A}$,position=below,size=0.3,color=green!30]{A}
\Vertex[x=0,y=0,label=$\tilde{B}$,position=above,size=0.15,color=green!30,shape=semicircle]{B}
\Vertex[x=1.2,y=0,label=$C$,position=above,size=0.3,color=blue!30]{C}
\draw[line width = 1] (A) -- ++(0,0.35) node[above] {\scriptsize$\bm{\nu}_A$};
\draw[line width = 1] (B) -- ++(0,-0.35) node[below] {\scriptsize$\bm{\nu}_B$};
\draw[line width = 1] (C) -- ++(0,-0.35) node[below] {\scriptsize$\bm{\nu}_C$};
\Edge[lw=1](B)(A)
\Edge[lw=1](A)(C)
\end{tikzpicture}
\end{array},
\end{equation}
where the parent tensor $A$ is replaced by contracting upwards $\tilde{A}=SVA$, and the child tensor $B$ is replaced by the re-labeled isometry $\tilde{B}=U$.
\end{construction}
This enables one to easily place a TTN in isometric gauge:
\begin{definition}[Isometric gauge]
    A tree tensor network is in \emph{isometric gauge} with respect to a root tensor known as the \emph{orthogonality center} if all other tensors are isometries with respect to it.
\end{definition}
Any TTN can be placed in isometric gauge with respect to any choice of orthogonality center by starting at the leaf tensors and `pushing up' with Construction~\ref{construct:isometrization} towards the center, in such an order that no tensor is isometrized before its children. This creates an obvious path to preparation of a quantum state represented by a TTN, since isometries are physical processes that can be decomposed into circuits. Due to the normalization of isometries $W^\dagger W=\mathds{1}$, for state-preparation purposes the TTN can be efficiently renormalized to an $\ell^2$-normalized quantum state by simply Frobenius-normalizing the orthogonality centre. The orthogonality center itself corresponds to a smaller state-preparation problem, or can alternatively be thought of as an isometry for which $\mathcal{H}_{\operatorname{in}}$ is the trivial Hilbert space.

\subsection{Tensor cross-interpolation\label{sec:tci_intro}}
In this work, a rank-revealing construction of TTN function representation is achieved with TCI, which we review here. Our implementation mostly follows the TTN generalization of Ref.~\cite{tindall2024compressing}, and a comprehensive overview from an MPS perspective can be found in Ref.~\cite{nunez2025learning}. To understand this formulation of TCI, we need to first recall the interpolative decomposition (ID) of a matrix:
\begin{definition}[Interpolative decomposition]
    Let $A$ be an $m\times n$ matrix of rank $k$. The (row) interpolative decomposition of $A$ is a factorization $A=XR$ into an $m\times k$ `interpolation matrix' $X$ and a $k\times n$ `skeleton matrix' $R$. Specifically, the skeleton matrix consists of $k$ rows chosen from $A$, denoted $R=A(\mathcal{I}_k,\mathcal{J})$, where $\mathcal{J}$ is the set of all column indices of $A$, and $\mathcal{I}_k$ is a subset of $k$ of the row indices of $A$. The interpolation matrix $X$ consists of a $k\times k$ identity matrix augmented with additional structure that interpolates between the rows $\mathcal{I}_k$.\label{defn:exact_interpolative_decomposition}
\end{definition}
Although the ID always exists, the exact rank $k$ may be large even for highly compressible matrices, thus we make frequent use of the approximate ID:
\begin{statement}[Approximate interpolative decomposition]
\label{stat:approx_interp_decomp}Let $A$ be an $m\times n$ matrix of rank $l\geq k$. The approximate (row) interpolative decomposition of $A$ is a factorization $A\approx XR$ following the structure outlined in Definition~\ref{defn:exact_interpolative_decomposition}. The error $\epsilon_{\operatorname{id}}\equiv\|A-XR\|$ is of the order $\epsilon_{\operatorname{id}}\in\mathcal{O}(\sigma_{k+1})$, where $\sigma_{k+1}$ is the $(k+1)$-th largest singular value of $A$.
\end{statement}
Constructive algorithms to efficiently compute an approximate ID are well known~\cite{nunez2025learning,cheng2005compression,liberty2007randomized}. Similar to Construction~\ref{construct:isometrization}, one can use an (approximate) ID to `interpolatize' any tensor in a TTN:
\begin{construction}[Interpolatization of a tensor]
    \label{construct:interpolatization}Given two bonded tensors $A$ and $B$ in a TTN, we may replace $A\to \tilde{A}$ and $B\to\tilde{B}$ without changing the overall value of the TTN, where $\tilde{B}$ is an interpolation matrix, as follows. Performing an ID $B=XR$ and contracting, one obtains
    \begin{equation}
\label{eqn:interpolative_push}
\begin{array}{c}
\begin{tikzpicture}
\Vertex[x=0.6,y=0.6,label=$A$,position=below,size=0.3,color=blue!30]{A}
\Vertex[x=0,y=0,label=$B$,position=above,size=0.3,color=blue!30]{B}
\Vertex[x=1.2,y=0,label=$C$,position=above,size=0.3,color=blue!30]{C}
\draw[line width = 1] (A) -- ++(0,0.35) node[above] {\scriptsize$\bm{\nu}_A$};
\draw[line width = 1] (B) -- ++(0,-0.35) node[below] {\scriptsize$\bm{\nu}_B$};
\draw[line width = 1] (C) -- ++(0,-0.35) node[below] {\scriptsize$\bm{\nu}_C$};
\Edge[lw=1](A)(B)
\Edge[lw=1](A)(C)
\end{tikzpicture}
\end{array}
=
\begin{array}{c}
\begin{tikzpicture}
\Vertex[x=1.4,y=0.6,label=$A$,position=below,size=0.3,color=blue!30]{A}
\Vertex[x=0,y=0,label=$X$,position=above,size=0.3,color=yellow!30]{X}
\Vertex[x=0.8,y=0,label=$R$,position=above,size=0.3,color=red!30]{R}
\Vertex[x=2,y=0,label=$C$,position=above,size=0.3,color=blue!30]{C}
\draw[line width = 1] (A) -- ++(0,0.35) node[above] {\scriptsize$\bm{\nu}_A$};
\draw[line width = 1] (X) -- ++(0,-0.35) node[below] {\scriptsize$\bm{\nu}_B$};
\draw[line width = 1] (C) -- ++(0,-0.35) node[below] {\scriptsize$\bm{\nu}_C$};
\Edge[lw=1](A)(R)
\Edge[lw=1](A)(C)
\Edge[Direct,lw=1](X)(R)
\end{tikzpicture}
\end{array}
=
\begin{array}{c}
\begin{tikzpicture}
\Vertex[x=0.6,y=0.6,label=$\tilde{A}$,position=below,size=0.3,color=green!30]{A}
\Vertex[x=0,y=0,label=$\tilde{B}$,position=above,size=0.3,color=green!30]{B}
\Vertex[x=1.2,y=0,label=$C$,position=above,size=0.3,color=blue!30]{C}
\draw[line width = 1] (A) -- ++(0,0.35) node[above] {\scriptsize$\bm{\nu}_A$};
\draw[line width = 1] (B) -- ++(0,-0.35) node[below] {\scriptsize$\bm{\nu}_B$};
\draw[line width = 1] (C) -- ++(0,-0.35) node[below] {\scriptsize$\bm{\nu}_C$};
\Edge[Direct,lw=1](B)(A)
\Edge[lw=1](A)(C)
\end{tikzpicture}
\end{array},
\end{equation}
by re-labelling the interpolation tensor $\tilde{B}=X$ and contracting up the sampled amplitudes $\tilde{A}=RA$. In doing so, one also obtains the indices $\mathcal{I}_k$ from the ID, informing which indices of $\bm{\nu}_B$ are exactly interpolated.
\end{construction}
The main utility of Construction~\ref{construct:interpolatization} is to place a TTN in the `interpolative gauge' introduced by Tindall \emph{et al.}~\cite{tindall2024compressing}.
\begin{definition}[Interpolative gauge]
    A tree tensor network is in \emph{interpolative gauge} with respect to a root tensor known as the \emph{interpolative center} if all other tensors are interpolation matrices (Definition~\ref{defn:exact_interpolative_decomposition}) with respect to it.
\end{definition}
Analogously to placing a TTN in isometric gauge, we can place a TTN in interpolative gauge by starting at the leaf tensors and `pushing up' with Construction~\ref{construct:interpolatization}. TTNs in interpolative gauge have the crucial property that the interpolative center consists entirely of exactly sampled amplitudes from the full $2^n$-dimensional tensor (i.e. for a quantics TTN, function values sampled at some points)~\cite{tindall2024compressing}, with the other tensors interpolating between these. The external index configurations $\bm{\sigma}$ at which these amplitudes are sampled are easily determined --- when one pushes up from a leaf tensor, the indices $\mathcal{I}_k$ from the ID reveal the exact configurations of the Hilbert space indices that correspond to each row. Therefore, unlike typical tensor network gauges the bond indices have physical meaning, and arrows in Equation~\ref{eqn:interpolative_push} indicate that the bond index is augmented with this knowledge, pointing in the direction of where those row values are located in the tensor network. Therefore, when placing the TTN in interpolative gauge, one takes the tensor product of child indices $\mathcal{I}_k$ to determine the parent index, then reduces it in rank by applying Construction~\ref{construct:interpolatization}, thereby selecting the subset of index configurations to be passed up the tree. Each time Construction~\ref{construct:interpolatization} is applied, one augments the parent bond (i.e. $(\tilde{B},\tilde{A})$ in Equation~\ref{eqn:interpolative_push}) with this information. The net result is that the interpolative center has purely incoming bonds, each augmented with information about the associated configurations of external Hilbert space indices along the corresponding branch. This is closely related to the `nesting conditions' encountered in TCI with QTTs~\cite{nunez2025learning}.

With these structures in play, we are able to proceed to TCI. Given a quantics function $f(\bm{\sigma})$ which we have efficient query access to, we seek to construct a low-bond-dimension tensor network $\mathcal{T}(\bm{\sigma})$ of fixed structure that approximates it as $\mathcal{T}(\bm{\sigma})\approx f(\bm{\sigma})$ using few queries to $f(\bm{\sigma})$. The function $f(\bm{\sigma})$ may also be thought of as a `full tensor' of exponential size, since it has values defined for all $2^n$ possible index configurations $\bm{\sigma}$. As first outlined in Ref.~\cite{tindall2024compressing}, the interpolative property described above allows one to implement TCI for TTNs, which works by `sweeping' through the bonds, moving the interpolative center and refining the approximation at each bond. We use a slightly different procedure to Ref.~\cite{tindall2024compressing}, which is essentially a TTN generalization of the 2-site, reset-mode TCI algorithm outlined in Ref.~\cite{nunez2025learning}. First, we note that while the interpolative center has values \emph{exactly} equal to the function values at the index configurations carried by its incoming bonds, this interpolative property only holds \emph{approximately} for the 2-site tensor obtained by contracting the interpolative center with its neighbor. Diagrammatically,
\begin{equation}
\label{eqn:2_site_pi_approx}
\begin{array}{c}
\begin{tikzpicture}
\Vertex[x=0,y=0,label=$A$,position=above,size=0.3,color=blue!30]{A}
\Vertex[x=0.7,y=0,label=$B$,position=above,size=0.3,color=blue!30]{B}
\draw[line width = 1,{Latex[length=2mm, width=1.5mm]}-] (A) -- ++(-0.35,0.35) node[left] {\scriptsize$\bm{\nu}_\alpha$};
\draw[line width = 1,{Latex[length=2mm, width=1.5mm]}-] (A) -- ++(-0.35,-0.35) node[left] {\scriptsize$\bm{\nu}_\beta$};
\draw[line width = 1,{Latex[length=2mm, width=1.5mm]}-] (B) -- ++(0.35,0.35) node[right] {\scriptsize$\bm{\nu}_\gamma$};
\draw[line width = 1,{Latex[length=2mm, width=1.5mm]}-] (B) -- ++(0.35,-0.35) node[right] {\scriptsize$\bm{\nu}_\delta$};
\Edge[Direct,lw=1](B)(A)
\end{tikzpicture}
\end{array}
\approx
\begin{array}{c}
\begin{tikzpicture}
\Vertex[x=0,y=0,label=$\Pi$,position=above,size=0.5,color=gray!30]{Pi}
\draw[line width = 1,{Latex[length=2mm, width=1.5mm]}-] (Pi) -- ++(-0.4,0.4) node[left] {\scriptsize$\bm{\nu}_\alpha$};
\draw[line width = 1,{Latex[length=2mm, width=1.5mm]}-] (Pi) -- ++(-0.4,-0.4) node[left] {\scriptsize$\bm{\nu}_\beta$};
\draw[line width = 1,{Latex[length=2mm, width=1.5mm]}-] (Pi) -- ++(0.4,0.4) node[right] {\scriptsize$\bm{\nu}_\gamma$};
\draw[line width = 1,{Latex[length=2mm, width=1.5mm]}-] (Pi) -- ++(0.4,-0.4) node[right] {\scriptsize$\bm{\nu}_\delta$};
\end{tikzpicture}
\end{array},
\end{equation}
where $\begin{array}{c}
\begin{tikzpicture}
\Vertex[x=0,y=0,label=$\Pi$,position=left,size=0.3,color=gray!30]{Pi}
\draw[line width = 1,{Latex[length=1mm, width=1.5mm]}-] (Pi) -- ++(-0.25,0.25) node[left] {};
\draw[line width = 1,{Latex[length=1mm, width=1.5mm]}-] (Pi) -- ++(-0.25,-0.25) node[left] {};
\draw[line width = 1,{Latex[length=1mm, width=1.5mm]}-] (Pi) -- ++(0.25,0.25) node[right] {};
\draw[line width = 1,{Latex[length=1mm, width=1.5mm]}-] (Pi) -- ++(0.25,-0.25) node[right] {};
\end{tikzpicture}
\end{array}$ is the tensor obtained by querying the full function tensor $f(\bm{\sigma})$ at the index configurations carried by the 2-site composite index $\bm{\nu}_\alpha\otimes\bm{\nu}_\beta\otimes\bm{\nu}_\gamma\otimes\bm{\nu}_\delta$. This enables the following update procedure:
\begin{construction}[TCI 2-site update]
    \label{construct:tci_2site_update}Given two bonded tensors $A$ and $B$ in a TTN, where the TTN is in interpolative gauge with $A$ the interpolative center (as in Equation~\ref{eqn:2_site_pi_approx}, an update step replaces $A\to\tilde{A}$ and $B\to\tilde{B}$ as follows. First, query the function $f(\bm{\sigma})$ at the index configurations $\bm{\nu}_\alpha\otimes\bm{\nu}_\beta\otimes\bm{\nu}_\gamma\otimes\bm{\nu}_\delta$ incoming to $A$ and $B$, constructing the exact 2-site tensor $\begin{array}{c}
\begin{tikzpicture}
\Vertex[x=0,y=0,label=$\Pi$,position=left,size=0.3,color=gray!30]{Pi}
\draw[line width = 1,{Latex[length=1mm, width=1.5mm]}-] (Pi) -- ++(-0.25,0.25) node[left] {};
\draw[line width = 1,{Latex[length=1mm, width=1.5mm]}-] (Pi) -- ++(-0.25,-0.25) node[left] {};
\draw[line width = 1,{Latex[length=1mm, width=1.5mm]}-] (Pi) -- ++(0.25,0.25) node[right] {};
\draw[line width = 1,{Latex[length=1mm, width=1.5mm]}-] (Pi) -- ++(0.25,-0.25) node[right] {};
\end{tikzpicture}
\end{array}$. Then, use an approximate interpolative decomposition (Statement~\ref{stat:approx_interp_decomp}) to factorize it as
\begin{equation}
\begin{array}{c}
\begin{tikzpicture}
\Vertex[x=0,y=0,label=$\Pi$,position=above,size=0.5,color=gray!30]{Pi}
\draw[line width = 1,{Latex[length=2mm, width=1.5mm]}-] (Pi) -- ++(-0.4,0.4) node[left] {\scriptsize$\bm{\nu}_\alpha$};
\draw[line width = 1,{Latex[length=2mm, width=1.5mm]}-] (Pi) -- ++(-0.4,-0.4) node[left] {\scriptsize$\bm{\nu}_\beta$};
\draw[line width = 1,{Latex[length=2mm, width=1.5mm]}-] (Pi) -- ++(0.4,0.4) node[right] {\scriptsize$\bm{\nu}_\gamma$};
\draw[line width = 1,{Latex[length=2mm, width=1.5mm]}-] (Pi) -- ++(0.4,-0.4) node[right] {\scriptsize$\bm{\nu}_\delta$};
\end{tikzpicture}
\end{array}
\approx
\begin{array}{c}
\begin{tikzpicture}
\Vertex[x=0,y=0,label=$X$,position=above,size=0.3,color=yellow!30]{A}
\Vertex[x=0.7,y=0,label=$R$,position=above,size=0.3,color=red!30]{B}
\draw[line width = 1,{Latex[length=2mm, width=1.5mm]}-] (A) -- ++(-0.35,0.35) node[left] {\scriptsize$\bm{\nu}_\alpha$};
\draw[line width = 1,{Latex[length=2mm, width=1.5mm]}-] (A) -- ++(-0.35,-0.35) node[left] {\scriptsize$\bm{\nu}_\beta$};
\draw[line width = 1,{Latex[length=2mm, width=1.5mm]}-] (B) -- ++(0.35,0.35) node[right] {\scriptsize$\bm{\nu}_\gamma$};
\draw[line width = 1,{Latex[length=2mm, width=1.5mm]}-] (B) -- ++(0.35,-0.35) node[right] {\scriptsize$\bm{\nu}_\delta$};
\Edge[Direct,lw=1](A)(B)
\end{tikzpicture}
\end{array}
=
\begin{array}{c}
\begin{tikzpicture}
\Vertex[x=0,y=0,label=$\tilde{A}$,position=above,size=0.3,color=green!30]{A}
\Vertex[x=0.7,y=0,label=$\tilde{B}$,position=above,size=0.3,color=green!30]{B}
\draw[line width = 1,{Latex[length=2mm, width=1.5mm]}-] (A) -- ++(-0.35,0.35) node[left] {\scriptsize$\bm{\nu}_\alpha$};
\draw[line width = 1,{Latex[length=2mm, width=1.5mm]}-] (A) -- ++(-0.35,-0.35) node[left] {\scriptsize$\bm{\nu}_\beta$};
\draw[line width = 1,{Latex[length=2mm, width=1.5mm]}-] (B) -- ++(0.35,0.35) node[right] {\scriptsize$\bm{\nu}_\gamma$};
\draw[line width = 1,{Latex[length=2mm, width=1.5mm]}-] (B) -- ++(0.35,-0.35) node[right] {\scriptsize$\bm{\nu}_\delta$};
\Edge[Direct,lw=1](A)(B)
\end{tikzpicture}
\end{array},
\end{equation}
where we have relabeled $\tilde{A}=X$ and $\tilde{B}=R$. The updated TTN is in interpolative gauge, with $\tilde{B}$ the new interpolative center.
\end{construction}
Given a specified tensor network topology, we perform TCI by randomly initializing tensors, determining a `sweep plan' that visits each bond at least once, bringing the initial random TTN into interpolative gauge, and then performing a 2-site update (Construction~\ref{construct:tci_2site_update}) at each bond in the order specified by the sweep plan. A single 2-site update can change the dimension of the bond it acts on (increasing or decreasing it), which will eventually converge. This forms an active learning loop whereby the function queries in a 2-site update both improve the approximation and inform the indices to be queried at the next site.

\section{Efficient function encoding with tree tensor networks\label{sec:ttn_compression}}
Although the techniques of Section~\ref{sec:tci_intro} allow one to efficiently construct a TTN approximation of a function given a suitable network structure, they do not provide a means to discover such a suitable topology (which instead needs to be given as fixed input). Minimizing entanglement flowing through the bonds of a tensor network will lead to lower bond dimension~\cite{verstraete2006matrix,mansuroglu2025preparation}, and therefore network structures that place highly correlated Hilbert space indices far apart will lead to less compressed TTN representations. It is therefore critical to discover efficient topologies, a need that we address in this section. Recent evidence has demonstrated that multivariate quantics functions can often be more efficiently represented by a TTN than an MPS~\cite{tindall2024compressing,manabe2025state}, which can reduce the circuit complexity required for state preparation~\cite{iten2016quantum,manabe2025state}. In the event that a suitable tree structure is known \emph{a priori}, performing TCI will efficiently approximate the function as a TTN to the desired tolerance. In some cases, a well-chosen heuristic for the network structure is enough. For example, the `comb' TTN
\begin{equation}
\label{eqn:comb}
\begin{array}{c}
\begin{tikzpicture}
\Vertex[x=0,y=0,position=below,size=0.3,color=blue!30]{X1}
\Vertex[x=1.3,y=0,position=below,size=0.3,color=blue!30]{Y1}
\Vertex[x=2.8,y=0,position=below,size=0.3,color=blue!30]{Z1}
\Vertex[x=0,y=-0.65,position=below,size=0.3,color=blue!30]{X2}
\Vertex[x=1.3,y=-0.65,position=below,size=0.3,color=blue!30]{Y2}
\Vertex[x=2.8,y=-0.65,position=below,size=0.3,color=blue!30]{Z2}
\Vertex[x=0,y=-1.65,position=below,size=0.3,color=blue!30]{X3}
\Vertex[x=1.3,y=-1.65,position=below,size=0.3,color=blue!30]{Y3}
\Vertex[x=2.8,y=-1.65,position=below,size=0.3,color=blue!30]{Z3}
\Edge[lw=1](X1)(Y1)
\Edge[lw=1](X1)(X2)
\Edge[lw=1](Y1)(Y2)
\Edge[lw=1](Z1)(Z2)
\Edge[lw=1,label=$\cdots$](Y1)(Z1)
\Edge[lw=1,label=$\vdots$](X2)(X3)
\Edge[lw=1,label=$\vdots$](Y2)(Y3)
\Edge[lw=1,label=$\vdots$](Z2)(Z3)
\draw[line width = 1] (X1) -- ++(0.35,-0.3) node[below, right] {\scriptsize$x^{(1)}_1$};
\draw[line width = 1] (X2) -- ++(0.35,-0.3) node[below, right] {\scriptsize$x^{(1)}_2$};
\draw[line width = 1] (X3) -- ++(0.35,-0.3) node[below, right] {\scriptsize$x^{(1)}_B$};
\draw[line width = 1] (Y1) -- ++(0.35,-0.3) node[below, right] {\scriptsize$x^{(2)}_1$};
\draw[line width = 1] (Y2) -- ++(0.35,-0.3) node[below, right] {\scriptsize$x^{(2)}_2$};
\draw[line width = 1] (Y3) -- ++(0.35,-0.3) node[below, right] {\scriptsize$x^{(2)}_B$};
\draw[line width = 1] (Z1) -- ++(0.35,-0.3) node[below, right] {\scriptsize$x^{(D)}_1$};
\draw[line width = 1] (Z2) -- ++(0.35,-0.3) node[below, right] {\scriptsize$x^{(D)}_2$};
\draw[line width = 1] (Z3) -- ++(0.35,-0.3) node[below, right] {\scriptsize$x^{(D)}_B$};
\end{tikzpicture}
\end{array}
\end{equation}
typically outperforms an interleaved QTT~\cite{tindall2024compressing} for smooth multivariate functions. However, this cannot be optimal in general if the order of the variables $x^{(d)}$ is presupposed (i.e. the order of the teeth of the comb), since strongly correlated variables should be placed near each other to minimize bond crossings. Moreover, even with optimal variable ordering it is possible for correlations to exist between non-consecutive variables, as we demonstrate in Section~\ref{sec:compilation_results}. This motivates our development of a systematic method to identify a suitable tree structure. Importantly, as outlined in Section~\ref{sec:introduction}, our approach does not require the condition of a (possibly intractable) initial MPS, thus enabling it to operate on some functions that may be impractical for local-update-based structural optimization techniques~\cite{hikihara2023automatic,manabe2025state,hikihara2025improving}. A detailed comparison of this section's methods to the state of the art in the literature can be found in Appendix~\ref{app:literature_compression_optimization}. We have published a Python library implementing these methods in the GitLab repository of Ref.~\cite{gitlab_repo}.

\subsection{Tree structure discovery with \textsc{scent}}
To optimize a tensor network structure, one should seek to arrange external indices in a way that minimizes entanglement flowing through the bonds of the network. The key observation underlying our method is that a binary TTN can be thought of as a hierarchical sequence of bipartitionings of the qubits. Starting with the set of all qubits at the root of the tree, each tensor divides the qubits along its branch into two new branches, continuing recursively down the tree until reaching the leaves (single-qubit external indices). Our key insight is that to obtain a TTN structure that is well-suited to approximate a particular state, one can recursively minimize the entanglement between the two branches leaving each tensor. Accordingly, our approach to structural discovery is as follows:
\begin{enumerate}
    \item Choose and compute a pairwise entanglement/correlation metric between qubits.
    \item Construct an affinity matrix with suitable rescaling.
    \item Recursively perform spectral clustering to assign the qubit indices to branches, subdividing each branch into further branches.
    \item Perform TCI to approximate the function using the discovered network structure.
\end{enumerate}
This approach is referred to as \textsc{scent}. It is particularly suitable when there is hidden structure in the correlation of function variables --- in other words, when some pairs of variables are much more strongly correlated than others, and especially when this correlation forms a tree-like structure. The final step (TCI) was detailed in Section~\ref{sec:tci_intro}; below, we remark on some subtleties regarding the first three steps.

\subsubsection{Pairwise entanglement/correlation metrics\label{sec:computing_entanglement_metrics}}
Each recursive step of \textsc{scent} requires one to select a bipartition of $m\leq n$ qubits. Doing this optimally can be exponentially expensive: not only are there $2^{m-1}-1$ possible bipartitions, but computing a bipartite entanglement or correlation metric (e.g. entanglement entropy) across a single bipartition will typically require handling a quantum state of dimension $2^m$. A scalable approach therefore depends on quantities that are cheaper to compute --- in our case, we rely upon pairwise entanglement/correlation metrics that depend only upon tractable 2-qubit states. The correct choice of metric is highly problem-dependent, and which metric is most suitable for which problem will depend highly on its properties.

In this work, we make use of two metrics: the pairwise quantum mutual information (QMI) $I(i,j)$, and the `Boolean Fourier entropy' $\kappa(i,j)$, a measure of how inseparable the dependence of the function is on bits $i$ and $j$ when only low-degree Boolean Fourier coefficients are accounted for. Both are outlined in Appendix~\ref{app:entanglement_metrics}.

In either case, for each pair of qubits we will need to access a reduced mathematical object approximately describing the 2-qubit component of the state --- for example, for the pairwise QMI, the reduced density matrix (RDM)
\begin{equation}
    \rho_{ij}=\Tr_{\overline{ij}}[\rho],\label{eqn:reduced_density_matrix}
\end{equation}
where $\overline{ij}\equiv\{1,\dots,n\}\setminus\{i,j\}$ denotes the set of all qubits except $i$ and $j$ and where $\rho=\ket{\psi}\bra{\psi}$ for $\ket{\psi}$ the state encoding the quantics function; or when computing the Boolean Fourier entropy one requires the marginalized pure state
\begin{equation}
    \ket{\mu_{ij}}=\frac{1}{\mathcal{Z}}\Big(\bigotimes_{k \in\overline{ij}}\bra{+_k}\Big)\ket{\psi},
\end{equation}
where $\mathcal{Z}$ is the normalization constant that $\ell^2$-normalizes $\ket{\mu_{ij}}$. We note that in the case where one already has an efficient TTN (including MPS) representation of the state, either of these objects can be computed efficiently without sampling error using tensor network contractions --- to compute $\rho_{ij}$, one follows the standard TTN contraction algorithm for RDMs~\cite{shi2006classical}, and to compute $\ket{\mu_{ij}}$ one contracts $\bra{+_k}$ to each open index $k\in\overline{ij}$ and then contracts the internal bonds in tree order. In cases where one does not already have an efficient representation of the state, one can approximate either using Monte Carlo sampling so long as one has efficient query access to $f(\bm{\sigma})$ (as in Section~\ref{sec:benchmarking_scent}). Therefore, viable methods exist to compute the requisite affinity matrices at least approximately in any case. We also note that the vast majority of functions of interest do not have a sign problem for Monte Carlo calculations. In many cases (e.g. financial applications, studied in Section~\ref{sec:finance_applications}) one wishes to prepare a multivariate probability distribution, which is strictly non-negative and obviously unafflicted by the sign problem. Even in cases where one wishes to prepare an antisymmetrized fermionic wavefunction in first quantization, as we outline in Section~\ref{sec:chemistry_applications} our methods could be used to prepare individual basis functions, which can then be correctly antisymmetrized using the approach of Ref.~\cite{huggins2025efficient} --- again circumventing the sign problem for Monte Carlo estimation.

When one already has a prior tensor network and wishes to compress it further to reduce state-preparation requirements, one can therefore obtain exact, noise-free affinity matrices. Without this, one can instead use Monte Carlo sampling to obtain noisy affinity matrices, apply these to compute a reasonably efficient TTN representation via \textsc{scent}, and then use this TTN to \emph{noiselessly} compute the affinity matrices, which may lead to an even more efficient TTN via a second round of \textsc{scent} --- thereby `bootstrapping' the structural optimization, and avoiding the need to construct a potentially intractable initial MPS seen in local-update-based structural optimization methods~\cite{hikihara2023automatic,manabe2025state,hikihara2025improving}. We demonstrate this later in Section~\ref{sec:benchmarking_scent}. By combining the isometric gauge with exact circuit decompositions for isometries~\cite{iten2016quantum,malvetti2021quantum,berry2025rapid}, one could already use this to produce tractable circuits to encode $f(\bm{\sigma})$ in a quantum state. Since there is already a level of error present in the function approximation with TCI (controlled via $\epsilon_{\operatorname{id}}$), it is desirable to investigate approximate circuit synthesis methods (since it would be wasteful to spend resources on compiling an approximation to greater fidelity than its underlying error). We introduce new methods for this in Section~\ref{sec:compilation}. 

\subsubsection{Rescaling affinity matrices\label{sec:rescaling_affinity_matrices}}
Binary spectral clustering approximately minimizes the mean affinity pairwise between elements of each cluster (Theorem~\ref{theorem:scent_bipartition}, Appendix~\ref{app:spectral_clustering_mean_affinity}). Since our methods support the use of a wide range of metrics for the affinity matrix --- the choice of which depends heavily on the problem structure --- this may present a challenge depending on the dynamic range and (sub-)additivity properties of the metric in question. While any monotone rescaling may work, for the purposes of this work we found that a power transform $A_{ij}\to A_{ij}^{\alpha}$ often performs well, where $A_{ij}$ are elements of the (unscaled) affinity matrix and $\alpha\in(0,1]$ is a shape hyperparameter that may need to be tuned for different metrics and problem instances.

\subsubsection{Recursive spectral clustering}
With suitably rescaled affinity matrices, we are now ready to recursively apply spectral clustering to determine the tree structure, with each bipartition of qubits determining the branching of the tree. Our approach is outlined in Algorithm~\ref{alg:scent} --- the process essentially consists of beginning with a (possibly rescaled) affinity matrix $\bm{A}$, separating its elements (the qubit indices) into two clusters with spectral clustering, then recursively performing the process on submatrices of $\bm{A}$ corresponding to the qubit indices assigned to each cluster. Each iteration of this recursive process determines a branching of the tree, which concludes when every qubit is in its own cluster --- this tree graph structure then defines the structure of the TTN.

\begin{figure}
\begin{algorithm}[H]
\caption{Spectral Clustering for Entanglement miNimizing Trees (\textsc{scent})}\label{alg:scent}
\begin{algorithmic}[1]
\Procedure{scent}{$f$}
    \ForAll{pairs $(i,j)$ of qubits}
        \State Compute affinity $A_{ij}$ (Appendix~\ref{app:entanglement_metrics})
        \State Apply monotone rescaling to $A_{ij}$
    \EndFor
    \State Construct affinity matrix $\bm{A}$ with entries $A_{ij}$
    \State \Return \Call{BuildTree}{$\bm{A}$}
\EndProcedure

\Function{BuildTree}{$\bm{A}'$}
    \If{size of $\bm{A}'$ is 1}
        \State \Return leaf node (qubit index)
    \Else
        \State Spectral clustering of indices of $\bm{A}'$ into clusters $C_1$, $C_2$
        \State $T_1 \gets$ \Call{BuildTree}{$\bm{A}'|_{C_1}$}
        \State $T_2 \gets$ \Call{BuildTree}{$\bm{A}'|_{C_2}$}
        \If{size of $\bm{A}'$ is $n$}
            \State \Return graph with edge connecting $T_1$, $T_2$
        \Else
            \State \Return node with edges connected to $T_1$, $T_2$
        \EndIf
    \EndIf
\EndFunction
\end{algorithmic}
\end{algorithm}
\end{figure}

We note that while we have presented this in terms of binary clustering, it generalizes trivially to higher-order partitioning (and our codebase supports this). We have focused on binary clustering for this work, since our numerical tests demonstrated it to have better performance.

\subsection{Numerical results\label{sec:benchmarking_scent}}

In this section, we benchmark and demonstrate essential features of our method via numerical experiments. Throughout this work, when approximating a quantics function $f(\bm{\sigma})$ by a tensor network $\mathcal{T}(\bm{\sigma})$, we consider both the size and accuracy of $\mathcal{T}$. Although the bond dimension $\chi$ is the most common metric in the literature for a tensor network's size, it is misleading for comparison between tensor networks of differing topology --- a degree-$N$ tensor with $N$ bonds will have $\mathcal{O}(\chi^N)$ amplitudes, and exact state-preparation costs for isometries scale roughly in the number of amplitudes~\cite{iten2016quantum}. Since we will be comparing tensor networks of differing structure (e.g. `comb' TTNs~\eqref{eqn:comb} have tensors up to degree $N=4$, while MPS~\eqref{eqn:mps_diagram_form} have tensors only up to $N=3$) and are focused on state-preparation costs, we will use the total number of tensor amplitudes in the network (proportional to its classical memory footprint) as a metric for TTN size throughout this work.

The error for a single index configuration $\bm{\sigma}$ may be written as $\delta_{\mathcal{T}}(\bm{\sigma})\equiv|f(\bm{\sigma})-\mathcal{T}(\bm{\sigma})|$, and we define the average error as
\begin{equation}
    \overline{\delta_{\mathcal{T}}} \equiv \mathbb{E}\left[\delta_{\mathcal{T}}(\bm{\sigma})\right],
\end{equation}
where the $\bm{\sigma}$ are drawn from the underlying probability distribution (with probability mass corresponding to the $\ell^1$ or $\ell^2$ norm of $f$, depending on the application). If TCI has converged to a good approximation $\mathcal{T}$, then one can always approximately generate these samples efficiently by using tensor network sampling algorithms~\cite{stoudenmire2010minimally,ferris2012perfect}.

To benchmark and illustrate our method, we perform numerical experiments on the $D$-variable multivariate normal distribution
\begin{equation}
    \mathcal{N}_{(\vec{\mu}, \bm{\Sigma})}(\vec{x})= \frac{1}{\sqrt{(2\pi)^D |\bm{\Sigma}|}} \exp\left(-\frac{1}{2} (\vec{x} - \vec{\mu})^T \bm{\Sigma}^{-1} (\vec{x} - \vec{\mu})\right),\label{eqn:multivariate_normal_distribution}
\end{equation}
with mean values $\vec{\mu}$ and correlation matrix $\bm{\Sigma}$. This provides a versatile testbed for our methods, since the level of correlation between variables can be directly controlled via $\bm{\Sigma}$. Following the example of Ref.~\cite{tindall2024compressing}, we find it particularly convenient to parametrize this by drawing $\bm{\Sigma}$ from the $D$-variable Lewandowski-Kurowicka-Joe (LKJ) distribution $\operatorname{LKJ}_D(\eta)$, a probability distribution over $D\times D$ positive-definite correlation matrices and whose probability density is~\cite{lewandowski2009generating}
\begin{equation}
    p_{\eta,D}(\bm{\Sigma}) = c_{\eta,D}(\bm{\Sigma})\det(\bm{\Sigma})^{\eta-1},
\end{equation}
for shape parameter $\eta > 0$ and normalization constant $c_{\eta,D}(\bm{\Sigma})$. Consequently, a larger shape parameter $\eta$ corresponds to weaker inter-variable correlations (with $\eta=1$ corresponding to a uniform probability distribution over correlation matrices).

\begin{figure}
    \centering
    \includegraphics[width=0.5\textwidth]{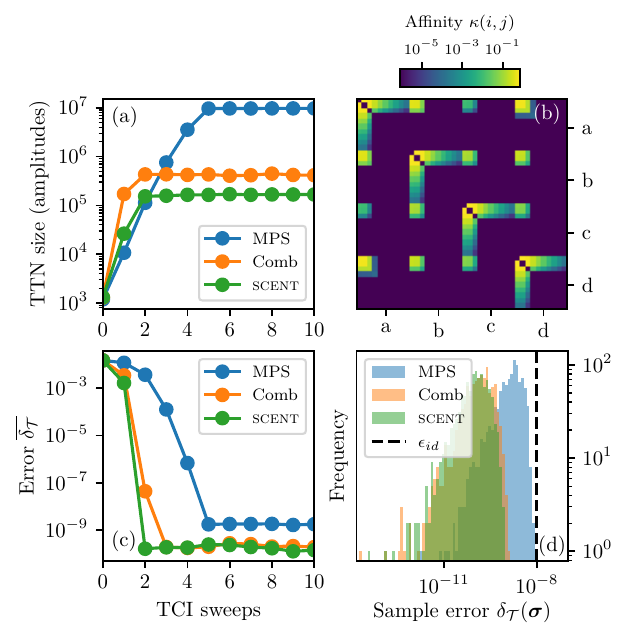}
    \caption{\textbf{Comparison of tensor network structures on an example problem.} We demonstrate the approximation of a 40-qubit quantum state encoding a strongly correlated ($\eta=1$) quadrivariate normal distribution $\mathcal{N}_{(\vec{\mu},\bm{\Sigma})}(a,b,c,d)$, for a single realization of $\bm{\Sigma}\sim \operatorname{LKJ}_{D=4}(\eta=1)$ at $B=10$ bits of precision per variable. TCI is performed with an MPS (blue), comb TTN (orange), and \textsc{scent}-optimized TTN (green) at tolerance $\epsilon_{\operatorname{id}}=10^{-8}$. \textbf{(a)} Convergence of the tensor network size in TCI --- all three topologies converge within 5 sweeps, their sizes plateauing. Relative to MPS, the comb TTN compresses the function by a factor of $23$, whereas the \textsc{scent}-optimized TTN compresses it by a factor of $58$. \textbf{(b)} Affinity matrix for \textsc{scent}, consisting of the Boolean Fourier entropy $\kappa(i,j)$ (Appendix~\ref{app:entanglement_metrics}). \textbf{(c)} Convergence of error $\overline{\delta_{\mathcal{T}}}$ in TCI --- all three topologies converge to low error within 5 sweeps, with the TTN-based approaches achieving better error. \textbf{(d)} Histogram of sample errors $\delta_{\mathcal{T}}(\bm{\sigma})$, again showing that the TTN-based approaches converge to better approximations. The TCI tolerance $\epsilon_{\operatorname{ID}}$ is denoted as a dashed black line.}
    \label{fig:multivariate_normal_example}
\end{figure}

In Figure~\ref{fig:multivariate_normal_example}, we demonstrate the approximation of a single realization of the multinormal distribution~\eqref{eqn:multivariate_normal_distribution} for $D=4$ variables each at $10$ bits of precision and strong intervariable correlations $\eta=1$, which we denote as $\mathcal{N}_{(\vec{\mu},\bm{\Sigma})}(a,b,c,d)$. The tensor network approximations $\mathcal{T}$ are produced using TCI with tolerance $\epsilon_{\operatorname{ID}}=10^{-8}$ using an interleaved MPS (blue), comb TTN (orange), and \textsc{scent}-optimized TTN (green). In Figure~\ref{fig:multivariate_normal_example}(a), we show the convergence of TCI for these three topologies over 10 sweeps, with the size of each tensor network converging within 5 sweeps. In line with the results of Ref.~\cite{tindall2024compressing}, we see that the comb TTN achieves significant compression versus MPS ($23\times$ smaller). Improving further upon these capabilities, we find that \textsc{scent} compresses the function even further ($58\times$ smaller) while also achieving a slight reduction in the approximation error (see \ref{fig:multivariate_normal_example}(d)). In Figure~\ref{fig:multivariate_normal_example}(b), we show the affinity matrix used to optimize the TTN structure with \textsc{scent} --- in this case, the Boolean Fourier entropy $\kappa(i,j)$ outlined in Appendix~\ref{app:entanglement_metrics}. Here, the largest-scale structure can be seen visually: we see that the intervariable correlation between $a$ and $c$ is weaker than any other pair of distribution variables. This leads \textsc{scent} to discover a TTN structure in which the bits of $a$ and $c$ are more distant, placing the most strongly correlated bits close together and therefore minimizing bond crossings. Even in this simple case, this enables a much more parsimonious approximation of $\mathcal{N}_{(\vec{\mu},\bm{\Sigma})}(a,b,c,d)$. In Figure~\ref{fig:multivariate_normal_example}(c) we see the convergence of error in TCI for the three network topologies, as $\overline{\delta_{\mathcal{T}}}$ rapidly decreases with sweeps until convergence is achieved within 5 sweeps. Here, $\overline{\delta_{\mathcal{T}}}$ is sampled from 1000 index configurations randomly drawn from $\mathcal{N}_{(\vec{\mu},\bm{\Sigma})}(a,b,c,d)$ using Cholesky decomposition. While all three network topologies converge, we see that \textsc{scent} and the comb TTN achieve slightly lower error than the MPS despite being much more efficient representations. We note that our approach of minimizing entanglement at each branch point is expected to results in TNs with low size, but there is no guarantee this will also reduce the error at convergence. Nonetheless, as also observed in Ref.~\cite{tindall2024compressing}, in some problems it fortunately does. In Figure~\ref{fig:multivariate_normal_example}(d), we plot histograms of the sample errors $\delta_{\mathcal{T}}(\bm{\sigma})$ of the final converged tensor networks (after 10 TCI sweeps), observing the full distribution of errors and noting how the comb and \textsc{scent} TTN structures achieve a median error approximately two orders of magnitude better than that achieved with an MPS.

In Figure~\ref{fig:size_and_correlation}, we study the multinormal distribution~\eqref{eqn:multivariate_normal_distribution} in a more systematic manner, studying the effect of the number of variables and the correlation strength on the size of the TTN. In Figure~\ref{fig:size_and_correlation}(a), where we study the effect of the variable count at fixed $\eta$ (moderate inter-variable correlation), we see that both comb and \textsc{scent}-optimized TTNs achieve significantly better scaling than MPS, with the advantage of \textsc{scent} becoming particularly apparent from $D=4$ onwards. For even a small number of variables, the advantage of TTN methods versus MPS encoding can yield a reduction of orders of magnitude in the amplitudes required, which translates directly to circuit cost reductions~\cite{iten2016quantum}. In Figure~\ref{fig:size_and_correlation}(b), where we study the effect of correlation strength at $D=4$ variables, we see again that both comb and \textsc{scent}-optimized TTNs perform substantially better than MPS, with \textsc{scent} achieving additional compression compared to the state of the art reported in Ref.~\cite{tindall2024compressing}. In all cases the TTN-based methods perform better than MPS by several orders of magnitude; furthermore, we see that TTN-based methods have considerably reduced complexity at weaker correlation strengths (higher $\eta$), whereas MPS methods achieve almost no reduction in size between the strongest and weakest correlations studied. This strongly supports that appropriately structured TTNs are much better tailored than MPS to the entanglement structure of multivariate functions encoded in quantum states.

\begin{figure}
    \centering
    \includegraphics[width=0.5\textwidth]{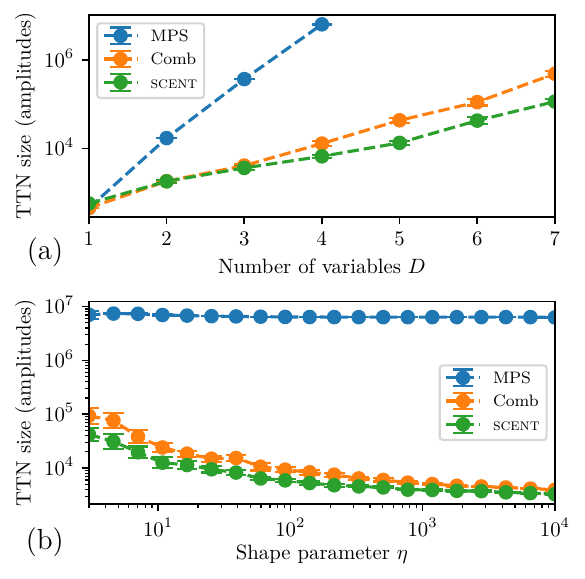}
    \caption{\textbf{Effect of variable count and correlation strength on tensor network function construction.} We study systematically the effect of the number of variables $D$ and inverse correlation strength $\eta$ on multivariate normal distributions $\mathcal{N}_{(\vec{\mu},\bm{\Sigma})}(\vec{x})$. For each data point, 20 random covariance matrices are sampled from $\bm{\Sigma}\sim \operatorname{LKJ}_{D}(\eta)$, and 10 sweeps of TCI (sufficient for convergence, see e.g. Figure~\ref{fig:multivariate_normal_example}(c)) are performed to tolerance $\epsilon_{\operatorname{id}}=10^{-8}$ with $B=10$ bits of precision per variable. Error bars represent bootstrapped 95\% confidence intervals in the logarithmic mean. \textbf{(a)} The effect of variable count $D$ on tensor network size at moderate correlation $\eta=50$. \textbf{(b)} The effect of correlation strength on tensor network size for $D=4$ variables.}
    \label{fig:size_and_correlation}
\end{figure}

As noted in Section~\ref{sec:computing_entanglement_metrics}, the affinity matrices for \textsc{scent} can be constructed efficiently and noiselessly if one already has access to an MPS/TTN representation of the function, which can be used to further compress the function if a more suitable structure is found by \textsc{scent}. However, in many cases an MPS representation of a function will be impractically large even when an efficient TTN representation exists, creating a `chicken-and-egg' problem whereby one needs to somehow first bootstrap their way to a suitable structure. As noted previously, this can be achieved by constructing the affinity matrix with Monte Carlo sampling, which unlike initializing with an MPS does not depend on the level of inter-variable correlation in the function. In Figure \ref{fig:monte_carlo_convergence}, we demonstrate the convergence of Monte Carlo sampling of the affinity matrix for \textsc{scent}. In Figure \ref{fig:monte_carlo_convergence}(a), we see the improvement in the affinity matrix with increasing sample count $N_s$, with strong agreement with the noiseless case for all but the weakest correlations by $N_s=10^4$. Strong correlation between bits of high significance can be determined accurately with relatively few samples; weaker correlations between low-significance bits require more samples to accurately resolve, but are of comparatively little importance when determining an efficient network structure. In Figure \ref{fig:monte_carlo_convergence}(b), we compare convergence traces for TCI using network structures determined with \textsc{scent} at a range of Monte Carlo sample counts $N_s\in [150,10^4]$. For this problem instance, while a minimum of $N_s\approx 10^3$ samples seems necessary for TCI to converge to a tractable network structure, by $N_s=10^4$ samples the \textsc{scent}-optimized network structure performs almost as well as in the noiseless case. In this case, we see empirically that it is feasible to bootstrap optimization by first executing \textsc{scent} on a Monte-Carlo-sampled affinity matrix to get an initial TTN, use this to obtain a noiseless affinity matrix, and then use this latter affinity matrix to perform optimization unafflicted by sampling noise. In doing so, we are able to overcome the `chicken-and-egg' problem of local-update-based structural optimization by entirely bypassing the initial MPS using few random queries of the full function. We stress that this process does not compete with structural optimization methods based on local updates~\cite{hikihara2023automatic,hikihara2025improving} --- indeed, executing these methods on the \textsc{scent}-optimized TTN may provide further refinements, a possibility we discuss in Section~\ref{sec:conclusions}.

\begin{figure}
    \centering
    \includegraphics[width=0.5\textwidth]{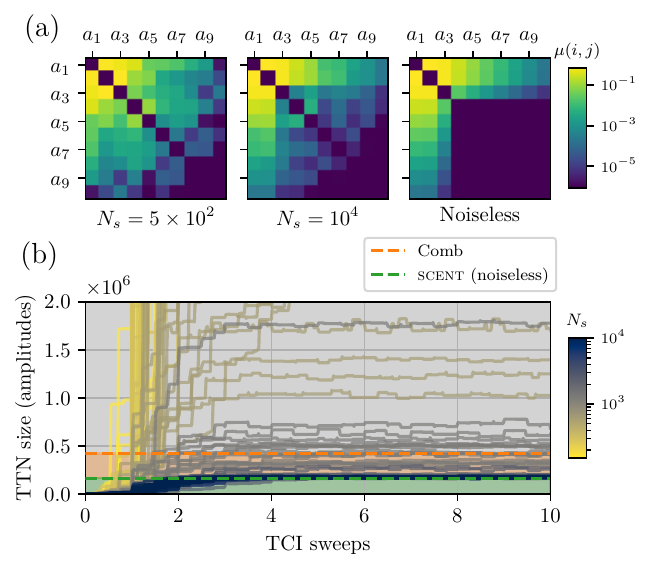}
    \caption{\textbf{Convergence of Monte Carlo sampling to construct the affinity matrix.} Here, we study the approximation of a 40-qubit quantum state encoding a strongly correlated quadrivariate normal distribution $\mathcal{N}_{(\vec{\mu},\bm{\Sigma})}(a,b,c,d)$, for a single realization of $\bm{\Sigma}\sim \operatorname{LKJ}_{D=4}(\eta=1)$ (as in Figure~\ref{fig:multivariate_normal_example}) using both Monte Carlo and exact methods for \textsc{scent}. \textbf{(a)} Single-variable submatrices of affinity matrices constructed with Monte Carlo for $N_s=5\times 10^2$ samples (left), $N_s=10^4$ samples (middle), and noiseless tensor network methods (right). \textbf{(b)} Convergence traces for TCI using network structures determined with \textsc{scent} for varying Monte Carlo sample counts $N_s$. As a point of comparison, we also plot the final TTN size for noiseless \textsc{scent} (green dashed line) and a comb network (orange dashed line).}
    \label{fig:monte_carlo_convergence}
\end{figure}

Taken together, the methods of this section enable us to significantly compress the tensor-network representations of multivariate functions. While we will introduce enhanced methods for circuit synthesis in Section~\ref{sec:compilation}, we note that this is already greatly beneficial for state preparation using standard circuit synthesis methods. Since any TTN (including an MPS) can be easily placed in isometric gauge by the methods of Section~\ref{sec:ttn_methods}, we can compile a TTN (up to normalization) to a state-preparation circuit by using standard quantum circuit decompositions for isometries \cite{knill1995approximation,iten2016quantum,malvetti2021quantum}. Reducing the number of amplitudes constraining the quantum state reduces the resources required for quantum state preparation, since the entangling gate count in these decompositions scales roughly proportionately to the number of amplitudes in the isometry (in the absence of special structure like sparsity \cite{malvetti2021quantum}).

In Figure~\ref{fig:size_vs_compilation_costs}, we demonstrate this empirically. Once again, we study the multinormal distribution~\eqref{eqn:multivariate_normal_distribution} for $D=4$ and varying $\eta$, this time quantifying the circuit costs for state preparation. For all numbers here we use the improved Knill decomposition of Ref.~\cite{iten2016quantum}, which we found to be more efficient than alternative decompositions for the tasks at hand. We sample many realizations of the distribution $\mathcal{N}_{(\vec{\mu},\bm{\Sigma})}(a,b,c,d)$ with $\bm{\Sigma}\sim \operatorname{LKJ}_{D=4}(\eta)$ at varying $\eta$, apply the \textsc{scent} protocol, and then perform TCI at tolerance $\epsilon_{\operatorname{id}}=10^{-8}$ to produce a highly-optimized TTN approximation of each realization. We then follow the procedures of Section~\ref{sec:ttn_methods} to place the TTN in isometric gauge centred on the lowest-eccentricity tensor. This represents the state as a sequence of isometries with tree-like causal structure, enabling it to be compiled to a quantum circuit by applying standard circuit decompositions to each isometry. This allows us to produce CNOT counts and circuit depths required for state preparation of the TTN representations, which we plot as filled circles in Figure \ref{fig:size_vs_compilation_costs}(a). When the same realizations of $\mathcal{N}_{(\vec{\mu},\bm{\Sigma})}(a,b,c,d)$ are instead represented by MPS in this process (hollow circles), state preparation requires a vastly higher CNOT count (between roughly $10^2$ and $2\times10^3$ times more expensive in the observed samples). The effect is more pronounced at weaker inter-variable correlations (higher $\eta)$, since the remaining entanglement between highly-significant bits is long-range in MPS but short-range in the \textsc{scent}-optimized TTN (c.f. Figure \ref{fig:size_and_correlation}(b)). The circuit costs in the MPS case are roughly constant in $\eta$, since the MPS structure is not well-suited to exploitation of the weaker correlation structure --- in comparison, the TTN structure is able to exploit this simpler structure to reduce the amplitude count. An even more pronounced effect is seen in the corresponding circuit depths in Figure~\ref{fig:size_vs_compilation_costs}(b), where the \textsc{scent}-optimized TTN is able to achieve circuit depths that are almost $10^4$ times shallower than their MPS counterpart. This is because in addition to the circuit depth savings of individual isometries, the multi-branching tree-like causal structure of the isometries in the TTN case allows for far greater parallelization than the MPS case, where the isometries can be parallelized over only two branches (in centre-canonical gauge) of sequential operations.

\begin{figure}
    \centering
    \includegraphics[width=0.5\textwidth]{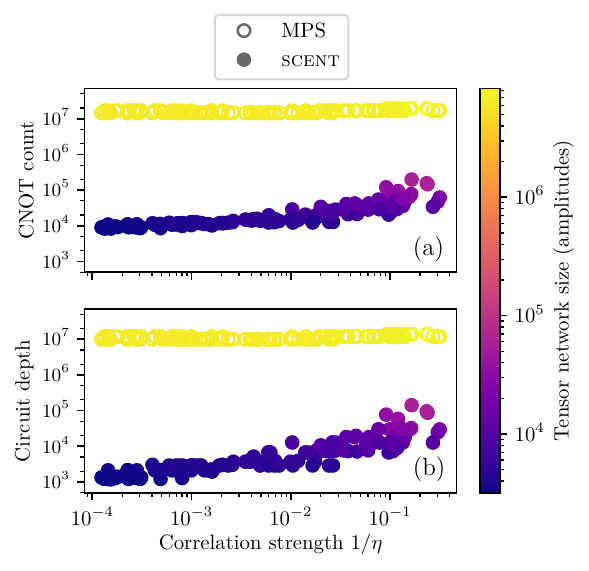}
    \caption{\label{fig:size_vs_compilation_costs}\textbf{Effect of tensor network compression on circuit synthesis costs.} Here, we study the circuit resources required with standard decompositions~\cite{iten2016quantum} to prepare a 40-qubit quantum state encoding a quadrivariate normal distribution $\mathcal{N}_{(\vec{\mu},\bm{\Sigma})}(a,b,c,d)$, for $\bm{\Sigma}\sim \operatorname{LKJ}_{D=4}(\eta)$ sampled at varying $\eta$, demonstrating that tensor network compression directly improves circuit costs. \textbf{(a)} For \textsc{scent}-optimized TTNs (filled circles), we see that CNOT count scales in the correlation strength. When the same distributions are prepared via an MPS representation (hollow circles), a drastically larger CNOT count is required (between roughly $10^2$ and $2\times10^3$ times more expensive for the observed samples), at cost nearly constant in $\eta$ --- as the MPS representation fails to take advantage of the weakened correlation strength. \textbf{(b)} A similar effect is observed for circuit depth, albeit even more pronounced (up to almost $10^4$ times shallower circuits) due to the greater parallelization afforded by branching in a tree-like causal structure.}
\end{figure}

Overall, we see that the methods of this section are greatly advantageous for state preparation of multivariate functions, improving gate counts and circuit depths by orders of magnitude compared to MPS methods even when standard isometry circuit decompositions are used. In the section that follows, we achieve further gains on end-to-end state preparation by improving the circuit-synthesis component.

\section{Gauge-aware circuit synthesis of tree tensor networks\label{sec:compilation}}
The methods of Section~\ref{sec:ttn_compression} enable one to represent multivariate functions as tensor networks more efficiently than would be possible with simple heuristically-chosen TTNs~\cite{tindall2024compressing} (and substantially more efficiently than would be possible with an MPS). Now that we have the means to prepare and compress efficient TTN representations of quantum states encoding multivariate functions, we now introduce a novel approach for synthesizing quantum circuits that prepare these states.

State preparation of matrix product states is well-studied~\cite{schon2005sequential,iten2016quantum,ran2020encoding,rudolph2022decomposition,gundlapalli2022deterministic,melnikov2023quantum,malz2024preparation,berry2025rapid}, and typically begins with the observation that an MPS in isometric gauge consists of isometries, which are physically implementable maps in qubit systems when $\dim(\mathcal{H}_{\operatorname{in}})=2^{n_{\operatorname{in}}}$ and $\dim(\mathcal{H}_{\operatorname{out}})=2^{n_{\operatorname{out}}}$ for $n_{\operatorname{in}},n_{\operatorname{out}}\in \mathbb{N}_0$ (i.e. they are powers of 2) and $n_{\operatorname{out}} > n_{\operatorname{in}}$. Physically, such a map corresponds to a unitary operation on $n_{\text{out}}$ qubits, where $n_{\text{in}}$ qubits are in an arbitrary (possibly entangled) state and the $(n_{\text{out}}-n_{\text{in}})$ remaining qubits are guaranteed to be initialized in a known state (typically the computational zero state). Combined with exact circuit decompositions for isometries~\cite{knill1995approximation,iten2016quantum,malvetti2021quantum}, or uncontrolled approximate methods such as unitary disentanglers~\cite{ran2020encoding,rudolph2022decomposition,bohun2024entanglement,sugawara2025embedding}, one can prepare an MPS with a cost scaling with its bond dimension.

Since a general TTN can be placed in isometric gauge, a straightforward generalization of this process can be used to prepare low-bond-dimension TTNs on quantum computers, as studied in Figure~\ref{fig:size_vs_compilation_costs}. While this has previously been applied to multivariate function preparation~\cite{manabe2025state}, several issues remain with this approach. First, prior approaches rely upon exact circuit decompositions for isometries~\cite{knill1995approximation,iten2016quantum,malvetti2021quantum} --- since the underlying function typically is only accurate to order $\epsilon_{\operatorname{id}}$ due to underlying approximations in the factorization step of TCI, exact circuit decompositions can be needlessly wasteful in terms of required depth and entangling gate count. It is therefore desirable to spend this flexibility granted by the existing approximation error by using approximate circuit synthesis methods that reduce circuit costs. However, standard approximate methods for MPS synthesis based on unitary disentanglers \cite{ran2020encoding,rudolph2022decomposition,bohun2024entanglement} are more challenging for TTNs~\cite{sugawara2025embedding} --- and furthermore introduce uncontrolled error, making it challenging to reduce circuit overheads within the tolerance $\epsilon_{\operatorname{id}}$. Second, the bond dimension will rarely be exactly a power of $2$, and in prior approaches this was overcome by naively padding to the nearest power of $2$~\cite{manabe2025state} --- the cost of implementing faithful action on these physically irrelevant padded states is non-negligible and would ideally be removed. Third, for near-term applications it is desirable to tailor circuit compilation directly to the specific architecture of the target quantum computer. Fourth, the inherent gauge freedom of tensor networks has not previously been included in circuit synthesis, presenting an exciting opportunity to greatly simplify circuits.

In this section, we present a novel method for TTN circuit synthesis that resolves all four of these concerns. A detailed comparison of our circuit synthesis methods to the state of the art in the literature can be found in Appendix~\ref{app:literature_circuit_synthesis}.

\subsection{Methods\label{sec:compilation_methods}}
The basic methodology of our compilation method follows the Variable Ansatz (VAns) algorithm of Ref.~\cite{bilkis2021semi}, which chooses between a set of rules (adding, simplifying, removing and optimizing gates) to approximately synthesize a circuit. Our approach combines this with tensor-network-based fidelity calculations that take into account padding of bond dimension and the relative ease of compiling isometries compared to unitaries, as well as structural optimization options that can be directly tailored to a desired architecture and incorporate the flexibility afforded by gauge freedom. In order to apply VAns to TTN synthesis, we need to define a cost function for the fidelity of an ansatz circuit with respect to a target isometry, which we fulfill in this section.

At each stage of compilation, we synthesize a circuit for a target isometry $V:\mathcal{H}_{\operatorname{in}}\to \mathcal{H}_{\operatorname{out}}$ drawn from a TTN prepared with the methods of Section~\ref{sec:ttn_compression}. Since \textsc{scent} produces binary trees, all tensors other than the orthogonality center can be written with `left' ($L)$ and `right' ($R$) output bonds
\begin{equation}
V=\begin{array}{c}
\begin{tikzpicture}
\Vertex[x=0,y=0,position=right,size=0.25,color=yellow!30,shape=semicircle,style={rotate=90}]{V}
\draw[line width = 1] (V) -- ++(+0.6,-0.35) node[right] {$\mathcal{H}_L$};
\draw[line width = 1] (V) -- ++(+0.6,+0.35) node[right] {$\mathcal{H}_R$};
\draw[line width = 1] (V) -- ++(-0.6,+0) node[left] {$\mathcal{H}_{\operatorname{in}}$};
\end{tikzpicture}
\end{array},
\end{equation}
where we have labeled the index legs of the tensor with the associated Hilbert spaces (in contrast to the conventions of Section~\ref{sec:preliminaries}). The orthogonality center itself can be written $V=\begin{array}{c}
\begin{tikzpicture}
\Vertex[x=0,y=0,position=right,size=0.2,color=yellow!30,shape=semicircle,style={rotate=90}]{V}
\draw[line width = 1] (V) -- ++(+0.5,-0.2) node[right] {\scriptsize$\mathcal{H}_L$};
\draw[line width = 1] (V) -- ++(+0.5,+0.2) node[right] {\scriptsize$\mathcal{H}_R$};
\draw[line width = 1] (V) -- ++(+0.5,+0) node[right] {\scriptsize$\mathcal{H}_C$};
\end{tikzpicture}
\end{array}$ --- it is an isometry with trivial input dimension, which is equivalent to state preparation of a $\chi_{\operatorname{out}}$-dimensional state, and is implementable straightforwardly as a special case of the methods we detail in this section.

We seek to synthesize a circuit $U=U_N\cdots U_1$ with $N$ gates that physically implements $V$ (up to exploitable gauge freedoms which we discuss below).
In the optimization procedure described later in this section, $N$ may increase or decrease as gates are added and removed.
To avoid redundantly paying the cost for faithful compilation on states originating from padding the bond dimension to a power of $2$, we compile with respect to a set of training states. First, we take a set of states
\begin{equation}
    \mathcal{S}=\{\ket{\psi_\alpha}\}_{\alpha=1,\dots,\chi_{\operatorname{in}}},
\end{equation}
where $\mathcal{S}$ spans $\mathcal{H}_{\operatorname{in}}$. We assume without loss of generality that the training states are orthonormalized. Noting that these are states on $\mathcal{H}_{\operatorname{in}}$, we now map the operation to a physical process that can be implemented in qubits. We pad the Hilbert spaces to the nearest power of 2, such that the isometry acts on $n_{\operatorname{in}}\equiv\lceil \log_2(\chi_{\operatorname{in}}) \rceil$ input qubits and $n_{\operatorname{out}}\equiv\lceil \log_2(\chi_{\operatorname{out}}) \rceil$ output qubits, and therefore isometries can be thought of as $(2^{n_{\operatorname{out}}}\times 2^{n_{\operatorname{out}}})$ unitary operations that need only act faithfully under the promise that  $n_{\operatorname{new}}=(n_{\operatorname{out}}-n_{\operatorname{in}})$ uninitialized qubits are in the computational zero state. This is a profoundly simpler operation than a general unitary, with exponentially (in $n_{\operatorname{new}}$) lower circuit costs~\cite{iten2016quantum} --- and therefore allows great simplification compared to previous TTN synthesis methods which rely upon full unitary embedding~\cite{sugawara2025embedding}.
Our training states are embedded in the padded output space $\mathbb{C}^{2^{n_{\operatorname{out}}}}$
\begin{equation}
    \widehat{\mathcal{S}}=\{\widehat{\ket{\psi_\alpha}}\}_{\alpha=1,\dots,\chi_{\operatorname{in}}},
\end{equation}
where
\begin{equation}
\widehat{\ket{\psi_\alpha}}=\widecheck{\ket{\psi_\alpha}}\otimes\ket{\bm{0}_{\operatorname{new}}},
\end{equation}
$\widecheck{\ket{\psi_\alpha}}\in \mathbb{C}^{2^{n_{\operatorname{in}}}}$ is the padding of $\ket{\psi_\alpha}$ into the padded input space $\mathbb{C}^{2^{n_{\operatorname{in}}}}$, and $\ket{\bm{0}_{\operatorname{new}}}$ is the computational zero state of the $n_{\operatorname{new}}$ uninitialized qubits. The goal of the circuit optimization will be to ensure $U\widehat{\ket{\psi_\alpha}}\approx\widehat{V}\widehat{\ket{\psi_\alpha}}$ for all $\alpha$ with some controlled approximation error, where $\widehat{V}$ is any unitary embedding of $V$ in the padded output space. This training of $U$ with training states $\widehat{\mathcal{S}}$ should ensure faithful action on relevant states without unnecessarily constraining optimization on faithful action for irrelevant states --- while our cost function will be calculated in the padded output space, this use of training states avoids unnecessary added constraints seen in other works from padding~\cite{manabe2025state} or full unitary embedding~\cite{sugawara2025embedding}.

A final powerful simplification comes from the fact that we are not seeking to compile individual isometries, but a whole TTN state made of a treelike sequence of isometries --- and therefore we are free to exploit the unitary gauge freedom at each bond as a resource. To the best of our knowledge, this gauge freedom has not previously been exploited in any state-preparation routine. Rather than compiling $V$, we are free to instead compile $(G_L \otimes G_R)V$, where $G_L\in U(\dim(\mathcal{H}_L))$ and $G_R\in U(\dim(\mathcal{H}_R))$ are arbitrary unitaries that may be freely chosen in whatever way most greatly simplifies the compilation procedure. Then, we may contract the gauge unitaries down the tree, leaving 
\begin{equation}
\begin{array}{c}
\begin{tikzpicture}
\Vertex[x=0,y=0,label=$V$,position=right,size=0.25,color=yellow!30,shape=semicircle,style={rotate=90}]{V}
\Vertex[x=0.4,y=-0.3,label=$G_L$,position=below,size=0.25,color=blue!30,shape=circle]{GL}
\Vertex[x=0.4,y=0.3,label=$G_R$,position=above,size=0.25,color=blue!30,shape=circle]{GR}
\Vertex[x=1.0,y=-0.3,label=$V_L$,position=left,size=0.25,color=yellow!30,shape=semicircle,style={rotate=90}]{VL}
\Vertex[x=1.0,y=0.3,label=$V_R$,position=right,size=0.25,color=yellow!30,shape=semicircle,style={rotate=90}]{VR}
\Edge[lw=1](V)(GL)
\Edge[lw=1](V)(GR)
\Edge[lw=1](GL)(VL)
\Edge[lw=1](GR)(VR)
\draw[line width = 1] (V) -- ++(-0.6,+0) node[left] {};
\draw[line width = 1] (VL) -- ++(+0.4,-0.2) node[right] {};
\draw[line width = 1] (VL) -- ++(+0.4,+0.2) node[right] {};
\draw[line width = 1] (VR) -- ++(+0.4,-0.2) node[right] {};
\draw[line width = 1] (VR) -- ++(+0.4,+0.2) node[right] {};
\end{tikzpicture}
\end{array}
=
\begin{array}{c}
\begin{tikzpicture}
\Vertex[x=0,y=0,label=$V$,position=right,size=0.25,color=yellow!30,shape=semicircle,style={rotate=90}]{V}
\Vertex[x=0.6,y=-0.3,label=$\tilde{V}_L$,position=left,size=0.25,color=green!30,shape=semicircle,style={rotate=90}]{VL}
\Vertex[x=0.6,y=0.3,label=$\tilde{V}_R$,position=right,size=0.25,color=green!30,shape=semicircle,style={rotate=90}]{VR}
\Edge[lw=1](V)(VL)
\Edge[lw=1](V)(VR)
\draw[line width = 1] (V) -- ++(-0.6,+0) node[left] {};
\draw[line width = 1] (VL) -- ++(+0.4,-0.2) node[right] {};
\draw[line width = 1] (VL) -- ++(+0.4,+0.2) node[right] {};
\draw[line width = 1] (VR) -- ++(+0.4,-0.2) node[right] {};
\draw[line width = 1] (VR) -- ++(+0.4,+0.2) node[right] {};
\end{tikzpicture}
\end{array},
\end{equation}
where we have contracted $\begin{array}{c}
\begin{tikzpicture}
\Vertex[x=0,y=0,position=below,size=0.2,color=blue!30,shape=circle]{GL}
\Vertex[x=0.35,y=0,position=left,size=0.2,color=yellow!30,shape=semicircle,style={rotate=90}]{VL}
\Edge[lw=1](GL)(VL)
\draw[line width = 1] (GL) -- ++(-0.35,+0) node[left] {};
\draw[line width = 1] (VL) -- ++(+0.35,-0.15) node[right] {};
\draw[line width = 1] (VL) -- ++(+0.35,+0.15) node[right] {};
\end{tikzpicture}
\end{array}
=\begin{array}{c}
\begin{tikzpicture}
\Vertex[x=0,y=0,position=left,size=0.2,color=green!30,shape=semicircle,style={rotate=90}]{VL}
\draw[line width = 1] (VL) -- ++(-0.35,+0) node[left] {};
\draw[line width = 1] (VL) -- ++(+0.35,-0.15) node[right] {};
\draw[line width = 1] (VL) -- ++(+0.35,+0.15) node[right] {};
\end{tikzpicture}
\end{array}$ on both branches, and $\tilde{V}_L$ ($\tilde{V}_R$) is still an isometry due to unitarity of $G_L$ ($G_R$). Since bonds of higher dimension will typically be located closer to the center of the tree, exploiting this gauge freedom allows us to simplify the cost of compilation by deferring certain rotations until further down the tree, where the reduced dimension leads to far simpler circuit structures. In the optimization procedure that follows, we optimize over $G_L$ and $G_R$ with (nearly) full unitary freedom as well as the circuit's gates, allowing us to transform each isometry into the local gauge that makes compilation easiest.

With these ingredients, we are able to define the cost function for optimization. The fidelity $\mathcal{F}_V(U)\in[0,1]$ of a circuit $U$ implementing the isometry $V$ (up to gauge freedom on the child bonds) is defined as
\begin{equation}
    \label{eqn:compilation_cost_func}
    \mathcal{F}_V(U)\equiv \Re\left(\frac{1}{\chi_{\operatorname{in}}} \sum_{\alpha=1}^{\chi_{\operatorname{in}}} \widehat{\bra{\psi_\alpha}}
    \widehat{V}^\dagger (\widehat{G}_L \otimes \widehat{G}_R)  U \widehat{\ket{\psi_\alpha}}\right),
\end{equation}
where $\widehat{G}_L$ ($\widehat{G}_R$) is any unitary embedding of $G_L$ ($G_R$) on the space of qubits it acts upon. To see why this is an appropriate choice, consider the following:
\begin{theorem}[Hilbert-Schmidt distance of circuit and isometry]
    Maximizing $\mathcal{F}_V(U)$~\eqref{eqn:compilation_cost_func} minimizes the Hilbert-Schmidt (Frobenius) distance between $U$ and $V$ on the subspace spanned by $\widehat{\mathcal{S}}$.\label{theorem:hilbert_schmidt_distance}
\end{theorem}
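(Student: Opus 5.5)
We restrict both operators to the training subspace. Let $P=\sum_{\alpha}\widehat{\ket{\psi_\alpha}}\widehat{\bra{\psi_\alpha}}$ be the orthogonal projector onto $\operatorname{span}\widehat{\mathcal{S}}$. This uses the assumption, made without loss of generality, that the training states are orthonormal; it carries over to their padded embeddings, since padding and tensoring with $\ket{\bm{0}_{\operatorname{new}}}$ are isometric. Write $W\equiv(\widehat{G}_L\otimes\widehat{G}_R)^\dagger\widehat{V}$ for the gauged target. The quantity to control is the Hilbert--Schmidt distance on this subspace,
\begin{equation*}
d^2\equiv\|(U-W)P\|_F^2=\sum_{\alpha=1}^{\chi_{\operatorname{in}}}\big\|(U-W)\widehat{\ket{\psi_\alpha}}\big\|^2 .
\end{equation*}
The plan is to expand this and show it is an affine, decreasing function of $\mathcal{F}_V(U)$.

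Expanding gives
\begin{equation*}
d^2=\operatorname{Tr}(PU^\dagger UP)+\operatorname{Tr}(PW^\dagger WP)-2\Re\operatorname{Tr}(PW^\dagger UP).
\end{equation*}
The first two terms are constant. The circuit $U$ is unitary, so $\operatorname{Tr}(PU^\dagger UP)=\operatorname{Tr}P=\chi_{\operatorname{in}}$. The operators $\widehat{V}$ and $\widehat{G}_{L,R}$ are unitary embeddings, so $W$ is unitary and the second term is also $\chi_{\operatorname{in}}$. For the cross term, $W^\dagger=\widehat{V}^\dagger(\widehat{G}_L\otimes\widehat{G}_R)$. Writing the trace in the orthonormal basis $\widehat{\mathcal{S}}$ turns it into exactly the sum in \eqref{eqn:compilation_cost_func}. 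Hence
\begin{equation*}
d^2=2\chi_{\operatorname{in}}\big(1-\mathcal{F}_V(U)\big),
\end{equation*}
and maximizing $\mathcal{F}_V(U)$ over $U$ (and jointly over $G_L,G_R$) minimizes $d$, which is the statement of Theorem~\ref{theorem:hilbert_schmidt_distance}.

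The main obstacle is making sure this identity does not depend on which unitary embeddings $\widehat{V}$ and $\widehat{G}_{L,R}$ are chosen. The point to check is that on inputs from $\operatorname{span}\widehat{\mathcal{S}}$, which carry $\ket{\bm{0}_{\operatorname{new}}}$ on the new qubits, $\widehat{V}$ acts as $V$ composed with the padding isometry, so the cross term does not depend on the embedding of $V$. The gauge embeddings do act on the output, possibly on padded directions. However, any choice is unitary, so the constant terms are unaffected, and the cross term is simply evaluated with that fixed gauged target.

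A secondary point is the range $\mathcal{F}_V(U)\in[0,1]$. The upper bound $\mathcal{F}_V(U)\le 1$ follows from Cauchy--Schwarz, each summand having modulus at most $1$, and it is attained precisely when $UP=WP$, i.e.\ when $d=0$. Negative values could in principle occur, but the affine relation to $d^2$ holds regardless, so this does not affect the argument.
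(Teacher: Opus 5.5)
Your proof is correct and is essentially the paper's: you expand the squared Frobenius norm on the training subspace, use unitarity and orthonormality of the padded training states to reduce the diagonal terms to $2\chi_{\operatorname{in}}$, and identify the cross term with $2\chi_{\operatorname{in}}\mathcal{F}_V(U)$, giving $d^2=2\chi_{\operatorname{in}}(1-\mathcal{F}_V(U))$. Your $\|(U-W)P\|_F$ equals the paper's $\|(\widehat{G}_L\otimes\widehat{G}_R)U\mathcal{P}-\widehat{V}\mathcal{P}\|_{\operatorname{HS}}$ by unitary invariance of the norm, and your side remark that $\mathcal{F}_V(U)$ can in fact be negative is correct and does not affect the argument.
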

A proof is given in Appendix~\ref{app:hilbert_schmidt_distance_proof}. Intuitively, the use of the real part in the definition of the cost function is justified by the fact that all the objects over which we optimize (the gates in $U$, as well as the gauge matrices $\widehat{G}_L$ and $\widehat{G}_R$) contain arbitrary complex phases. Furthermore, within the tensor network framework, the real part is easier to optimize than the absolute value, as explained below. Since Equation~\ref{eqn:compilation_cost_func} depends only on the training states $\widehat{\ket{\psi_\alpha}}$, in which the new qubits are initialized as $\ket{\bm{0}_{\operatorname{new}}}$, the compilation process is equally free to converge to $(\widehat{G}_L\otimes\widehat{G}_R)U$ approximating \emph{any} unitary embedding of $V$. This ensures that there is no unnecessary compilation overhead from unitary embedding of isometries --- unlike Ref.~\cite{sugawara2025embedding}, we do not actually embed isometries in unitaries and this is simply a mathematical conceptual convenience. The fidelity $\mathcal{F}_V(U)$ is depicted diagrammatically in Figure \ref{fig:compilation_cost_function}(a).

Given this fidelity, we are able to apply the VAns procedure of Ref.~\cite{bilkis2021semi}. As detailed therein, a circuit can be grown via \textsc{Insertion} rules (increasing $N$ to improve expressibility), simplified via \textsc{Simplification} rules (reducing $N$ to remove redundant gates and improve depth), and individual gates can be optimized using methods which we detail below.

\begin{figure*}
    \centering
    \includegraphics[width=6in]{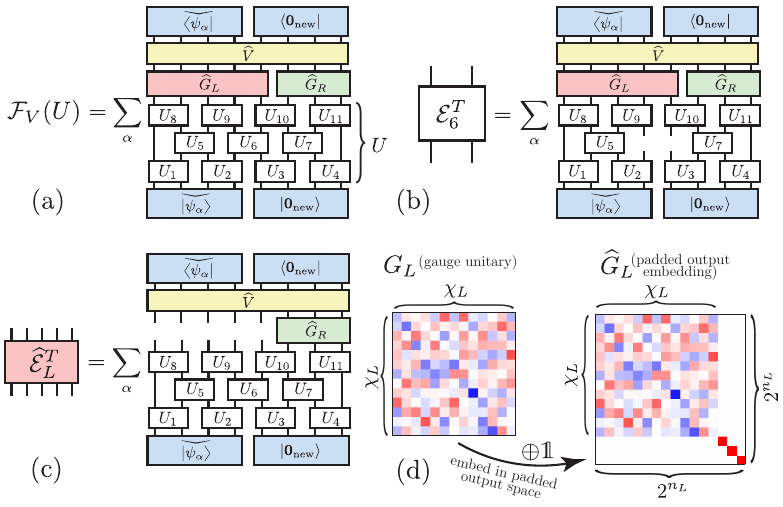}
    \caption{\textbf{Diagrammatic representation of the gate-replacement procedure.} Throughout these diagrams, note that $\sum_\alpha$ is used as a shorthand for $\frac{1}{\chi_{\operatorname{in}}}\sum_{\alpha=1}^{\chi_{\operatorname{in}}}$ for notational compactness. \textbf{(a)} Tensor-network calculation of the circuit fidelity $\mathcal{F}_V(U)$. \textbf{(b)} Tensor-network calculation of the environment tensor $\mathcal{E}$, describing the possible effect of a gate on the overall fidelity. This is obtained by removing the gate in question from the fidelity calculation, leaving its tensor indices open. \textbf{(c)} Tensor-network calculation of the (left) gauge unitary's environment tensor $\mathcal{E}_L$. As in (b), this is obtained by removing the gauge unitary and leaving its indices open. \textbf{(d)} The $\chi_L\times \chi_L$ (left) gauge unitary, whose dimension is not in general a power of $2$, is embedded in a larger $2^{n_L}\times 2^{n_L}$ unitary corresponding to actions on $n_L$ qubits.}
    \label{fig:compilation_cost_function}
\end{figure*}

Unlike related methods~\cite{bilkis2021semi,ballarin2025efficient}, we do not rely upon gradient-based optimization, but instead use tensor-network methods that are optimal for each local gate replacement (and based on a single simple matrix decomposition rather than requiring iterative convergence). Each gate is treated as an arbitrary unitary acting on some small number of qubits $n_{\operatorname{gate}}$, chosen according to the target architecture alongside iterations of the \textsc{Insertion} and \textsc{Simplification} rules. When an individual gate requires optimization, we compute an `environment tensor'~\cite{kukliansky2023qfactor,rudolph2022decomposition,gibbs2024deep}. Specifically, we take the fidelity function (Equation~\ref{eqn:compilation_cost_func} and Figure~\ref{fig:compilation_cost_function}(a)) and remove the gate tensor in question, leaving open the internal indices that were bonded to it. The resulting environment tensor $\mathcal{E}$, depicted in Figure \ref{fig:compilation_cost_function}(b), describes all possible effects of a gate at that location on the overall fidelity $\mathcal{F}_V(U)$. Finding the optimal unitary gate at that location essentially takes the form of a unitary Procrustes problem, which enables the following gate-replacement procedure:
\begin{construction}[Gate replacement]
    Let $\mathcal{E}$ be the environment tensor for an $m$-qubit gate $U_g$, computed as depicted in Figure~\ref{fig:compilation_cost_function}(b), and matricized across its bra-ket indices. Let $\mathcal{E}=XDY^\dagger$ for $X,Y\in U(2^m)$ and diagonal $D$ be a (non-economical) SVD of $\mathcal{E}$. Then, replace $U_g$ in the circuit $U$ with $YX^\dagger$.\label{construct:gate_replacement}
\end{construction}
This procedure is similar in spirit to the unitary compilation approach of Ref.~\cite{kukliansky2023qfactor}: it requires no iterative or gradient-based optimization and yields a locally optimal replacement for the gate $U_g$. Updating all gates, however, still requires iterating this procedure over the circuit, typically through multiple sweeps. The following theorem explains the local update.
\begin{theorem}[Local optimality of gate replacement]
\label{theorem:local-optimality-gate-replacement}
    The gate obtained from Construction~\ref{construct:gate_replacement} is the locally optimal gate replacement in terms of the fidelity $\mathcal{F}_V(U)$, meaning
    \begin{equation}
        U_g = \operatorname{argmax}_{U_g\in U(2^m)}\mathcal{F}_V(U).
    \end{equation}
\end{theorem}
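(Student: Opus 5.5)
The plan is to reduce the fidelity to a linear function of the single gate $U_g$ and then solve the resulting unitary Procrustes problem with von Neumann's trace inequality.

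\textbf{Step 1: linearity in the gate.} The fidelity $\mathcal{F}_V(U)$ in Equation~\eqref{eqn:compilation_cost_func} is the real part of a sum of contractions. In each contraction the gate $U_g$ appears exactly once, inside the product $U=U_N\cdots U_1$. The conjugate $U_g^\dagger$ never appears: $\widehat{V}^\dagger$ and the gauge unitaries are fixed, and the bra side involves only the training states. Hence the complex quantity inside the real part is linear in the entries of $U_g$.

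Write $U=U_{>}\,(U_g\otimes\mathds{1})\,U_{<}$, where $U_{>}$ and $U_{<}$ are the products of gates after and before $U_g$. Collect everything else into the environment tensor $\mathcal{E}$ of Figure~\ref{fig:compilation_cost_function}(b). Matricized so that its row index contracts with the output (bra-side) leg of $U_g$ and its column index with the input leg, this gives
\begin{equation}
    \mathcal{F}_V(U)=\Re\,\Tr\!\left(\mathcal{E}^{\mathsf{T}}U_g\right),
\end{equation}
or $\Re\Tr(\mathcal{E}U_g)$, depending on the index convention.

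Two bookkeeping points need care here. The prefactor $1/\chi_{\operatorname{in}}$ and the sum over $\alpha$ are absorbed into $\mathcal{E}$. The identity on the other qubits is traced over, which is a partial-trace contraction. The only real work is to fix the matricization convention so that Construction~\ref{construct:gate_replacement}, with $\mathcal{E}=XDY^\dagger$, corresponds to $\Re\Tr(\mathcal{E}U_g)$. I will assume that this convention, with bra leg to rows and ket leg to columns as in the figure, gives $\Re\Tr(\mathcal{E}U_g)$. Otherwise the construction should be read with $\mathcal{E}$ transposed.

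\textbf{Step 2: bound over all unitaries.} Substitute the non-economical SVD $\mathcal{E}=XDY^\dagger$ with $D=\operatorname{diag}(d_1,\dots,d_{2^m})$ and $d_i\ge 0$. Then
\begin{equation}
    \Re\Tr(\mathcal{E}U_g)=\Re\Tr\!\left(D\,Y^\dagger U_g X\right)=\sum_i d_i\,\Re\,W_{ii},
\end{equation}
where $W\equiv Y^\dagger U_g X$ is unitary. Every entry of a unitary has modulus at most $1$, so $\Re W_{ii}\le 1$. This gives
\begin{equation}
    \Re\Tr(\mathcal{E}U_g)\le\sum_i d_i=\|\mathcal{E}\|_1
\end{equation}
for every $U_g\in U(2^m)$, which is the elementary form of von Neumann's trace inequality.

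\textbf{Step 3: attaining the bound.} Equality holds when $W=\mathds{1}$, that is, $U_g=YX^\dagger$, which is exactly the output of Construction~\ref{construct:gate_replacement}. It is a valid element of $U(2^m)$ because $X$ and $Y$ are unitary; this is why the full rather than the economical SVD is needed. Since every other gate and both gauge unitaries are held fixed, this shows $YX^\dagger$ maximizes $\mathcal{F}_V(U)$ over $U_g\in U(2^m)$, which is the claimed local optimality.

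If some $d_i=0$, the maximizer is not unique: any $W$ equal to the identity on the support of $D$ also achieves the maximum. The argmax should then be read as ``a maximizer''.

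\textbf{Main obstacle.} The optimization itself is immediate once Step 1 holds, so the expected difficulty is Step 1. It requires a careful check, against the diagram in Figure~\ref{fig:compilation_cost_function}, that $U_g$ enters linearly and only once, that the complex conjugate $U_g^\dagger$ never appears, and that the transpose and conjugation conventions of $\mathcal{E}$ match the SVD used in the construction.
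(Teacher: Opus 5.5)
Your proposal is correct and follows essentially the same route as the paper's proof. Both write $\mathcal{F}_V(U)=\Re\Tr[\mathcal{E}U_g]$, substitute $\mathcal{E}=XDY^\dagger$, set $W=Y^\dagger U_g X$, bound $\sum_j D_{jj}\Re W_{jj}\le\sum_j D_{jj}$, and attain the bound at $W=\mathds{1}$, i.e.\ $U_g=YX^\dagger$; your added remark that the maximizer is not unique when some singular values vanish is a minor refinement the paper does not state.
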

\begin{proof}
    We first note from the tensor diagrams in Figures~\ref{fig:compilation_cost_function}(a) and \ref{fig:compilation_cost_function}(b) that $\mathcal{F}_V(U)$ can be written as
    \begin{equation}
    \mathcal{F}_V(U)=\Re(
    \begin{array}{c}
    \begin{tikzpicture}
    \Vertex[x=0,y=0,label=$\mathcal{E}$,position=left,size=0.35,color=blue!30]{A}
    \Vertex[x=0.6,y=0,label=$U_g$,position=right,size=0.35,color=blue!30]{B}
    \Edge[lw=1](A)(B)
    \Edge[bend=-135](A)(B)
    \end{tikzpicture}
    \end{array})
    =\Re(\Tr[\mathcal{E}U_g]).
    \end{equation}
    The proof then proceeds in analogy with that of Ref.~\cite{kukliansky2023qfactor}. Expanding $\mathcal{E}=XDY^\dagger$, using trace cyclicity and defining the unitary $W=Y^\dagger U_g X$, it follows that $\mathcal{F}_V(U)=\Re(\Tr[DW])$, and therefore
    \begin{equation}
        \mathcal{F}_V(U)=\sum_{j=1}^{2^m} D_{jj}\Re(W_{jj})\leq \sum_{j=1}^{2^m} D_{jj},
    \end{equation}
    since $D$ is diagonal. Since the singular values are strictly non-negative, it follows that $\mathcal{F}_V(U)$ is maximized when all $W_{jj}=1$; since $W$ is unitary this is achieved by $W=\mathds{1}$, which corresponds to $U_g=YX^\dagger$, completing the proof.
\end{proof}

This protocol can also be used to exploit the inherent gauge freedom of TTNs located at every bond.
In this approach, the (embedded) gauge unitaries $\widehat{G}_L$ and $\widehat{G}_R$ are treated as `big gates' whose positions are fixed at the end of the ansatz and which act on all qubits associated with that tensor leg (as depicted in Figure~\ref{fig:compilation_cost_function}(c)). Exploiting the high expressivity of a general many-qubit unitary in this manner allows for significant simplification at each local isometry. A small constraint is included to account for padding of local bond dimensions to powers of 2. Letting $\chi_L$ denote the left child bond, where $\chi_{\operatorname{out}}=\chi_L\chi_R$, then $n_L\equiv \lceil \log_2(\chi_{\operatorname{L}})\rceil$ qubits act as the output space for this bond, serving as the initialized qubits for the next isometry down the tree. To preserve faithful action on the padded states, the padded gauge unitary takes the form
\begin{equation}
    \widehat{G}_L = (G_L)_{\chi_{L}} \oplus (\mathds{1})_{2^{ {n_L} } - \chi_{L}}.\label{eqn:gauge_direct_sum}
\end{equation}
Therefore, when optimizing $\widehat{G}_L$, we take the $\chi_L \times \chi_L$ submatrix of the environment tensor $\mathcal{E}_L$ (corresponding to the first $\chi_L$ states in its Hilbert space), and perform the SVD-based optimization on this submatrix to obtain the optimized $G_L$. For use in further optimization sweeps, one computes the direct sum~\eqref{eqn:gauge_direct_sum} to obtain the optimized $\widehat{G}_L$. However, $G_L$ (not $\widehat{G}_L$) is contracted directly down the tree when compilation of the isometry has concluded. This constraint on the gauge unitary is depicted in Figure~\ref{fig:compilation_cost_function}(d).
The situation is analogous for the right child bond, with $n_R\equiv \lceil \log_2(\chi_{\operatorname{R}})\rceil$.

\begin{figure}
\begin{algorithm}[H]
\caption{Iterative TTN compilation}\label{alg:compiler}
\begin{algorithmic}[1]

\Procedure{CompileTTN}{$\mathcal{T}=\{T_\nu\}_\nu,k,\delta k,
                        \vec{\varepsilon}$}
    \State $\widetilde{\mathcal{T}} \gets \mathcal{T}$
    \State $\pi \gets$
        \Call{GaugeCompatibleOrder}{$\widetilde{\mathcal{T}}$}

    \ForAll{$\nu\in\pi$}
        \State \textbf{Step 1: Compile up to gauge}
        \State $(C_\nu^{(k)},G_L,G_R) \gets$
            \Call{CompileUpToGauge}{$\widetilde{T}_\nu,k,\varepsilon_1$}
        \State \Call{AbsorbGauges}
            {$\widetilde{\mathcal{T}},\nu,G_L,G_R$}
            \Comment{Push gauges to uncompiled neighbors}

        \State \textbf{Step 2: Reduce gate bodyness}
        \State $C_\nu^{(2)} \gets$
            \Call{LowerGateBodyness}{$C_\nu^{(k)},k,\delta k,\varepsilon_2$}

        \State \textbf{Step 3: Construct gate-reduction family}
        \State $\mathcal{R}_\nu \gets$
            \Call{GateReductionFamily}
                {$C_\nu^{(2)},\widetilde{T}_\nu,\varepsilon_3$}
    \EndFor

    \State $(C_\nu^\star)_\nu \gets$
        \Call{SelectCompilations}
            {$\{\mathcal{R}_\nu\}_\nu,\varepsilon_4$}
    \State \Return
        \Call{AssembleTTNCircuit}{$(C_\nu^\star)_\nu$}
\EndProcedure
\end{algorithmic}
\end{algorithm}
\end{figure}

Our TTN compilation method is presented in Algorithm~\ref{alg:compiler}. In addition to the TTN $\mathcal{T}=\{T_\nu\}_\nu$ itself, the algorithm takes several control parameters as input. The parameter $k$ denotes the size of the unitary gates used in the early stages of compilation and $\delta k$ the amount by which this gate size is reduced at each subsequent stage, while $\vec{\varepsilon}$ specifies the decomposition tolerances for the objects compiled at different stages of the algorithm. The algorithm is described in more detail in Appendix~\ref{app:TTNcompialtionAlg}.

We note that following these procedures, our circuit is compiled down to the level of arbitrary 2-qubit rotations $U_g\in \operatorname{SU}(4)$. These may be decomposed to elementary gates by use of a Cartan KAK decomposition \cite{vatan2004optimal}, leading to 3 CNOTs per rotation $U_g$. This is accounted for in all CNOT counts that follow. We also note that further reductions in the CNOT count may be possible by considering approximate two-qubit decompositions beyond standard KAK-based synthesis. With additional optimization, some two-qubit gates may be replaced by nearby unitaries that are close to the identity and admit decompositions using only two CNOT gates. We leave the exploration of such low-level compilation strategies to future work.

\subsection{Results\label{sec:compilation_results}}
With our method outlined, we now proceed to empirical demonstrations. First, in Figure~\ref{fig:single_isometry_compilation}, we demonstrate the essential building block of our procedure: the compilation of a single random isometrized tensor.

We compile a random $3 \times 8 \times 16$ isometry using the relevant components of Algorithm~\ref{alg:compiler} and demonstrate the improvement in accuracy achieved by exploiting the gauge freedom described above. Several hyperparameters govern the execution of the algorithm. These include the size $k$ of the unitary gates used during the early stage of compilation and the precision $\varepsilon$ with which these gates are decomposed into smaller gates during subsequent stages. Using a smaller $k$ reduces the number of stages but makes the optimization more prone to stalling near local minima, which can result in a suboptimal decomposition. Similarly, imposing a very small $\varepsilon$ at each stage can lead to deep decompositions. This issue is partially mitigated by the gate-removal steps performed during compilation. Nevertheless, different hyperparameter choices yield different performance, as reflected in the trade-off between decomposition error and CNOT count. We therefore typically perform several independent compilation runs with different hyperparameter settings and select the best result. This approach is illustrated in Figure~\ref{fig:single_isometry_compilation}. The plotted points represent results from several compilation runs, while the solid lines indicate the best result at each CNOT count. No single hyperparameter choice is optimal in all regimes. The best strategy depends primarily on the target decomposition error, although other considerations, such as runtime and the available degree of parallelism, may also be important.

Overall, we observe the successful approximate compilation of an example isometry at substantially lower circuit costs than standard decompositions~\cite{iten2016quantum}, where the costs can be scaled back in settings where the error tolerance is greater.

\begin{figure}
    \centering
    \includegraphics[width=0.5\textwidth]{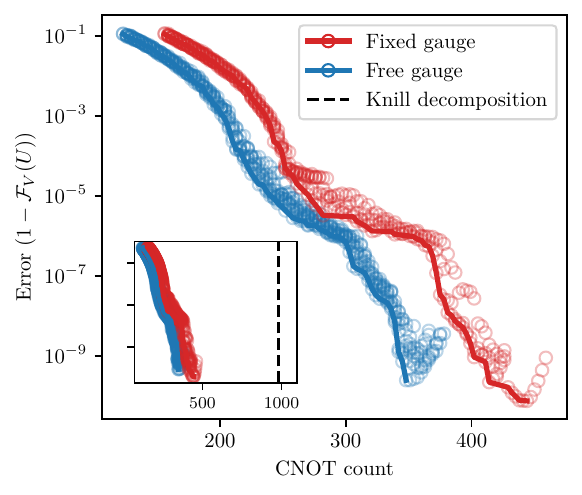}
    \caption{\label{fig:single_isometry_compilation}\textbf{Compilation of an individual isometry.} We compile a $(3\times 8 \times 16)$-dimensional isometrized tensor at a range of hyperparameter configurations, see text for details. The plot shows the error $1-\mathcal{F}_V(U)$ achieved versus the number of 2-qubit entangling gates used in the circuit decomposition. Hollow circles represent individual realizations, with the solid lines interpolating between the best realizations. We compare gauge-free compilation (blue) to fixed-gauge compilation (red), finding that the exploitation of gauge freedom consistently improves the compilation procedure. \textbf{Inset:} Both methods achieve substantially better error than standard isometry-to-circuit decompositions~\cite{iten2016quantum}, for example, $8.5\times$ fewer CNOTs at an error of $10^{-9}$ or $12\times$ fewer at an error of $10^{-5}$ in the gauge-free case.}
\end{figure}

This ability to tailor the circuit costs to the desired tolerance in fidelity, rather than waste resources on exact decompositions, is extremely powerful in our setting - since tensor network representations of amplitude-encoded functions already have some level of approximation (set for example by $\epsilon_{\operatorname{id}}$ in TCI). In Figure~\ref{fig:end_to_end_compilation}, we consider the full compilation process as applied to a function (as opposed to an individual isometry). First, each function is compressed to an efficient TTN representation using \textsc{scent} and TCI, resulting in a TTN state $\ket{\psi_{\operatorname{TTN}}}\approx e^{i\theta}\ket{\psi}$, where $\ket{\psi}$ is the target state and $\theta$ is some irrelevant global phase. This approximation error was studied at length in Section~\ref{sec:ttn_compression}; here we are concerned with how effectively we can construct a state-preparation circuit $\mathcal{U}$ for $\ket{\psi_{\operatorname{TTN}}}$, such that $|\bra{\psi_{\operatorname{TTN}}}\mathcal{U}\ket{\bm{0}}|$ is maximized. Considering 7 different realizations of the quadrivariate normal distribution $\mathcal{N}_{(\vec{\mu},\bm{\Sigma})}(a,b,c,d)$ (as studied in Figure~\ref{fig:size_vs_compilation_costs}), we run the full compilation procedure (Algorithm~\ref{alg:compiler}) for each --- this results in a compilation trace whereby infidelity decreases as the circuit is iteratively grown. For convenience, we compare these directly to the reduction in CNOT cost relative to the improved Knill decomposition of Ref.~\cite{iten2016quantum}. We observe significant reductions in CNOT cost regardless of the level of error, even when a very high-fidelity state is demanded (e.g. $\mathcal{O}(10^{-9})$ infidelity in most cases still yields a ${10}\times$ improvement in CNOT costs). In reality, there is inherent error in the TTN approximation (controlled by $\epsilon_{\operatorname{id}}$), and hardware errors (in a near-term setting) or imperfect application of non-transversal operations (in an early-fault-tolerant setting) may also result in some level of error --- it is therefore typically wasteful to spend resources on constraining infidelity beyond the already-extant envelope of error. If one is willing to tolerate some level of error, even further reductions in circuit costs can be achieved --- Figure~\ref{fig:end_to_end_compilation} can therefore be read as an empirical example of the sliding scale whereby one can trade increased error for looser error tolerance (e.g. one can achieve a CNOT cost reduction of over ${20}\times$ if infidelities below $10^{-3}$ are considered tolerable). This highlights the flexibility of our approach --- not only can we overall optimize circuit costs relative to existing methods, but we are granted further freedoms in tuning the trade-offs between cost and fidelity, rather than wasting resources on constraining exact decompositions of state representations that are themselves only approximate.

\begin{figure}
    \centering
    \includegraphics[width=0.5\textwidth]{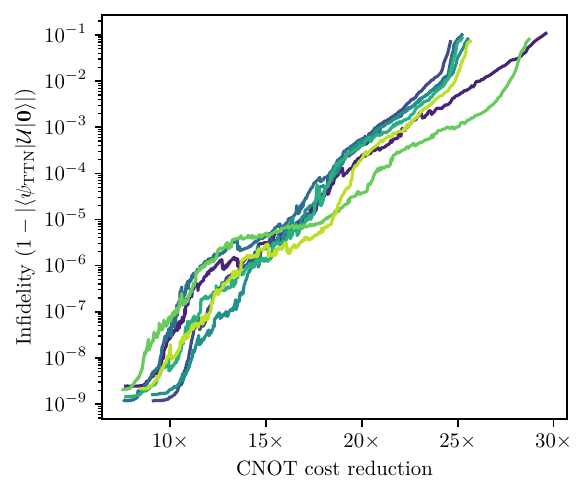}
    \caption{\label{fig:end_to_end_compilation}\textbf{Cost reductions from our compiler in end-to-end state preparation.} Here we demonstrate the successful end-to-end compilation of 7 different realizations of the quadrivariate normal distribution $\mathcal{N}_{(\vec{\mu},\bm{\Sigma})}(a,b,c,d)$ studied in Figure~\ref{fig:size_vs_compilation_costs}. For each realization, we plot the state-preparation infidelity $(1-|\bra{\psi_{\operatorname{TTN}}}\mathcal{U}\ket{\bm{0}}|)$ achieved at various levels of circuit cost reduction (in CNOT count) relative to the improved Knill decomposition of Ref.~\cite{iten2016quantum}. The cost is improved at all levels of error considered, and can be tailored to a desired level of error tolerance.}
\end{figure}

As a validation of our end-to-end methodology, in Figure~\ref{fig:treelike_correlations} we demonstrate our compilation procedure on a very challenging target: a 200-qubit state encoding a 20-variable probability distribution $\mathcal{N}_{(\vec{\mu}, \bm{\Sigma})}(\vec{x})$ with non-trivial inter-variable correlation structure. We establish the underlying probability distribution by first arranging the variables $\vec{x}=(a,\dots,t)$ randomly into a tree-like graph structure $G(V,E)$, depicted in Figure~\ref{fig:treelike_correlations}(a). The edges of this graph correspond to inter-variable correlations, and we construct the elements $\Sigma_{\alpha\beta}$ of the correlation matrix $\bm{\Sigma}$ as
\begin{equation}
    \Sigma_{\alpha\beta}=\begin{cases}
        1 & \text{if } \alpha = \beta, \\
        \rho_{\alpha\beta} & \text{if } (\alpha,\beta)\in E, \\
        0 & \text{otherwise},
    \end{cases}
\end{equation}
where for each (undirected) edge in the tree, the correlation strength is sampled independently as $\rho_{\alpha\beta} \sim \mathcal{U}_{[0.01,0.05]}$, ensuring symmetry ($\rho_{\beta\alpha} = \rho_{\alpha\beta}$). In Figure~\ref{fig:treelike_correlations}(b), we see that despite the large state, our methods are particularly well-suited to this problem due to the underlying tree-like correlation structure --- \textsc{scent} is able to efficiently re-discover the extant but unknown underlying entanglement geometry without any prior knowledge thereof. Comparing TCI on a standard comb TTN geometry~\eqref{eqn:comb} and a \textsc{scent}-optimized TTN, we find that the structural optimization from \textsc{scent} results in a ${24}\times$ compression of the TTN size relative to a comb geometry. An even larger improvement is achieved versus MPS methods (not pictured), which we were unable to quantify as the MPS representation is sufficiently large that it exceeded system memory before converging in our simulations. This shows that TTN methods are a key enabler for representing large multivariate probability distributions, and that structural optimization from \textsc{scent} can drastically improve state-preparation costs on complicated target functions. As expanded in Appendix~\ref{app:literature_circuit_synthesis}, we believe this is an improvement on the capabilities of Ref.~\cite{ballarin2025efficient}, whose large-scale demonstrations relied upon tridiagonal $\bm{\Sigma}$ and therefore nearest-neighbour 1D correlations; in contrast we retain efficient performance for much more general tree-like correlation structures. In Figure~\ref{fig:treelike_correlations}(c), we see that these improvements extend to the accuracy of the approximation: despite \textsc{scent} achieving a much more efficient TTN representation than comb geometry, it also achieves orders of magnitude lower error (average of $(6\times10^4)$ smaller sample error $\delta_{\mathcal{T}}(\bm{\sigma})$). 

In Figure~\ref{fig:treelike_correlations}(d), we show a convergence trace of the fidelity $\mathcal{F}=|\bra{\psi_{\operatorname{TTN}}}\mathcal{U}\ket{\bm{0}}|$ over the compilation process of Algorithm~\ref{alg:compiler}. As the circuit is grown, the infidelity rapidly drops, achieving a minimum infidelity of $7.44\times10^{-9}$ with just 43284 CNOTs. This is a ${1.8}\times$ improvement compared to compiling the \textsc{scent}-optimized TTN with the Knill decomposition of Ref.~\cite{iten2016quantum}; when considering the cumulative effect of the structural compression and the compiler, it is a ${43}\times$ improvement in CNOT count. It is likely that this state-preparation task would not be tractable at all with MPS methods, as we were unable to gain convergence in TCI for the same function on an MPS topology. This represents a relatively shallow ($\mathcal{O}(10^4)$ CNOTs) compilation of a 20-variable probability distribution with long-ranged, non-nearest-neighbour inter-variable correlations, at high fidelity (infidelity $\mathcal{O}(10^{-8})$) despite encoding in a large 200-qubit state. We note that relaxing fidelity requirements can yield substantially shallower circuits --- for example, one can achieve infidelities $<10^{-3}$ with as few as 5584 CNOTs. As this falls within expected capabilities of near-term hardware roadmaps, even though this is a complicated state it is reasonable to expect that these methods may enable landmark state-preparation demonstrations on near-term hardware. We emphasize that we measure the error using global fidelity, a quantity which decays exponentially with the system size. Small decomposition errors of each isometry compound rapidly and substantially reduce the global fidelity. At a level of 200 qubits, global infidelity of $10^{-3}$ becomes highly non-trivial to achieve.
Global fidelity can be a misleadingly punishing metric at larger $n$, and we expect that a global infidelity of $10^{-3}$ will be entirely sufficient for many practical applications.

\begin{figure}
    \centering
    \includegraphics[width=0.5\textwidth]{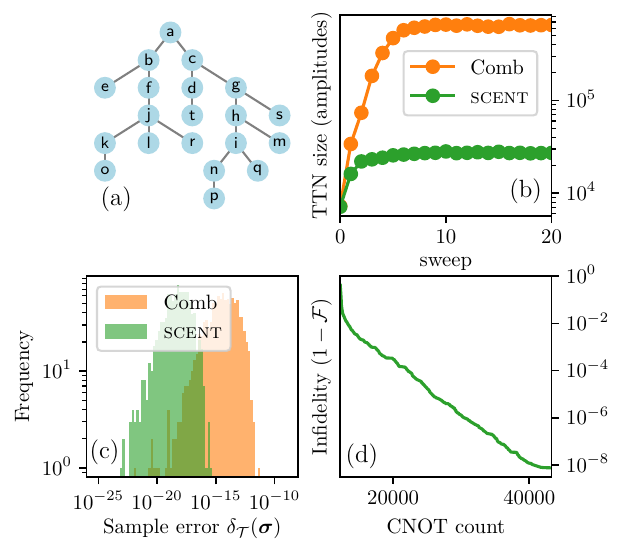}
    \caption{\label{fig:treelike_correlations}\textbf{Compilation of a 200-qubit state encoding a 20-variable correlated probability distribution.} Here, we demonstrate the use of our methods to produce a state-preparation circuit for a highly non-trivial state: a correlated multivariate normal distribution $\mathcal{N}_{(\vec{\mu}, \bm{\Sigma})}(\vec{x})$ over $D=20$ variables at $B=10$ bits of precision (200 qubits). \textbf{(a)} The variables $a,\dots,t$ are arranged into a randomly-generated tree graph $G(V,E)$, depicted here. Edges $E$ correspond to inter-variable correlations and thus non-zero values of $\bm{\Sigma}$. \textbf{(b)} TCI ($\epsilon_{\operatorname{id}}=10^{-8}$) is used to produce an efficient TTN representation of the probability distribution. The \textsc{scent}-optimized TTN (green) achieves a representation approximately ${24}\times$ more efficient than a standard comb TTN (orange); MPS is not pictured as it exceeded memory before converging. \textbf{(c)} Despite being more efficient, the \textsc{scent}-optimized TTN also achieves orders of magnitude lower error than the comb TTN, averaging $(6\times10^4)$ times smaller sample error $\delta_\mathcal{T}(\bm{\sigma})$. \textbf{(d)} Convergence trace for compilation of the \textsc{scent}-optimized TTN to a quantum state-preparation circuit. The infidelity steadily converges as the circuit grows, requiring fewer CNOTs than the decomposition of Ref.~\cite{iten2016quantum} and allowing a controllable tradeoff between circuit costs and fidelity. }
\end{figure}

We have validated our end-to-end approach to state preparation on a challenging target problem that would be impractical with existing methods. Having demonstrated the capabilities of our approach on benchmark problems, we now turn our attention to the state preparation of functions that are crucial inputs for practical end-use calculations on quantum computers.

\section{Applications}
\label{sec:applications}

Amplitude-encoded functions are an exceedingly common form of input to quantum algorithms, and state preparation of this kind is likely to be of use almost anywhere where quantum computing is used to study continuous-variable problems. Preparing such states could be extremely useful, and a non-exhaustive set of possible applications of our methods includes quantum~\cite{rebentrost2018quantum,stamatopoulos2020option} and classical~\cite{sakurai2025learning} algorithms for options pricing, quantum linear system solvers~\cite{morales2024quantum} (including high-dimensional differential equations, quantum machine learning and Green's function computation), computational fluid dynamics and nonlinear systems~\cite{krovi2023improved}, and first-quantized quantum chemistry simulations~\cite{kassal2008polynomial,su2021fault,chan2023grid,huggins2025efficient}. In this section, we demonstrate two specific examples, focusing on state preparation for chemistry simulation (Section~\ref{sec:chemistry_applications}) and preparing probability distributions for value-at-risk calculations in financial portfolio optimization (Section~\ref{sec:finance_applications}).

\subsection{Quantum chemistry\label{sec:chemistry_applications}}
The techniques described here are particularly useful for encoding initial states of grid-based first-quantized chemistry simulations.
Such simulations of chemistry have emerged as a popular application of quantum algorithms, spurred on by a seminal paper by Kassal \textit{et al.},~\cite{kassal2008polynomial} who argued that the resources for first-quantized simulations would only grow logarithmically in system size.
In comparison, second-quantized simulations which are well established in electronic structure theory~\cite{szabo1996modern} require a number of qubits that scales linearly with the system size.
A number of grid-based first-quantized quantum algorithms have been developed since, using various grid basis functions (such as plane waves~\cite{su2021fault, dutkiewicz2026spectral} and discrete variable representations~\cite{chan2023grid}) and for different applications (such as nonadiabatic dynamics~\cite{ollitrault2020nonadiabatic} and vibronic spectroscopy~\cite{feng2025quantum}).
Algorithmic improvements to first-quantized simulations have made the resource requirements and scalings more favorable~\cite{berry2025quantum, dutkiewicz2026spectral}, and some systematic comparisons with second-quantized simulations have been done~\cite{georges2025quantum}.
First quantized simulations are thus not only promising, but they may also become one of the more popular simulation frameworks.

\begin{figure*}
    \centering
    \includegraphics[width=\textwidth]{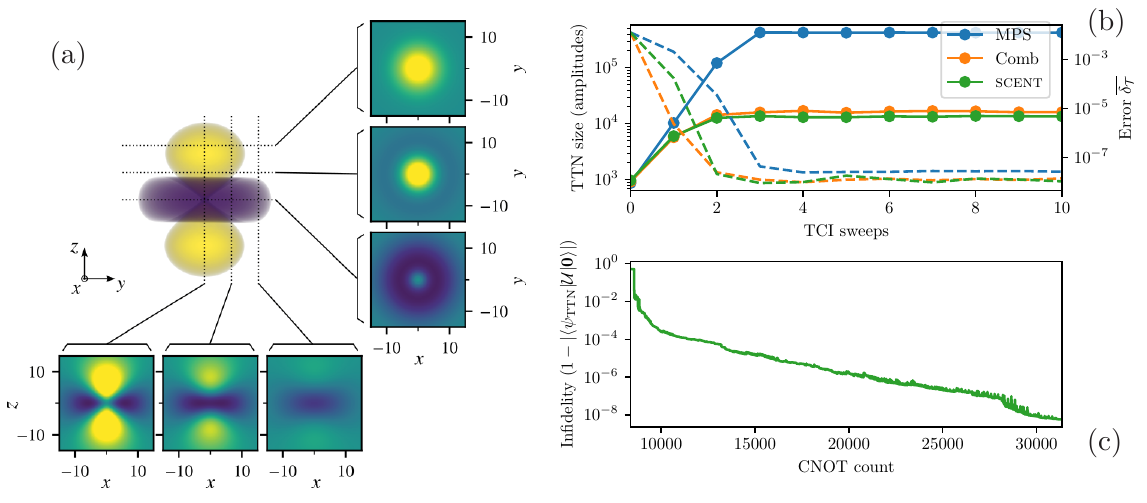}
    \caption{\textbf{State preparation of a non-Gaussian orbital for first-quantized chemistry.} Here we study the compilation of a state-preparation circuit for a hydrogenic STO. \textbf{(a)} Isosurfaces of the 3$d_{z^2}$ hydrogenic wave functions ($m_z = 0$) together with two-dimensional contour plots at different $z$ and $x$ values. \textbf{(b)} TCI ($\epsilon_{\operatorname{id}}=10^{-6}$) is used to produce tensor-network representations of the STO with MPS (blue), comb (orange) and \textsc{scent}-optimized (green) topologies. All three topologies achieve comparably small error $\overline{\delta_{\mathcal{T}}}<10^{-9}$; the TTN topologies achieve substantially more efficient representations than MPS, with \textsc{scent} proving most efficient (comb outperforms MPS by ${26}\times$, \textsc{scent} outperforms MPS by ${31}\times$). \textbf{(c)} Convergence trace of compilation of the \textsc{scent}-optimized TTN to a quantum state-preparation circuit. The infidelity steadily converges as the circuit grows, requiring fewer CNOTs than the decomposition of Ref.~\cite{iten2016quantum} and allowing a controllable tradeoff between circuit costs and fidelity.
    }
    \label{fig:chemistry_figure}
\end{figure*}

When preparing the electronic wave function of a system with more than one electron, the anti-symmetry of the wave function needs to be properly accounted for. Huggins \textit{et al.} recently showed how to efficiently prepare an approximation to the Hartree-Fock state~\cite{huggins2025efficient}. Their method uses a circuit compilation to represent each molecular orbital in a plane wave basis on a quantum register and subsequently builds the Hartree-Fock Slater determinant from the basis function registers, ensuring anti-symmetry by design.
They demonstrated their approach on a Cartesian Gaussian-type atomic orbital (GTO) basis set of the form
\begin{equation}
\label{eq:def-cartesian-gto}
    g_{lmn}^\gamma(x,y,z) \propto x^l y^m z^n e^{-\gamma r^2},
\end{equation}
where $l, m, n$ are integers, $r = \sqrt{x^2 + y^2 + z^2}$, and the exponent $\gamma$ tunes the width of the basis function. In a molecular simulation, each atom is assigned a collection of basis functions with different exponents and different values of $l$, $m$ and $n$ that are usually taken from some standard basis set~\cite{pritchard2019new}.

Cartesian GTOs are just one of many choices for atomic basis functions. They were a suitable choice for the Huggins \textit{et al.} approach since they are factorizable in Cartesian coordinates, which is a key property used in the circuit compilation to load the basis functions onto quantum registers.

However, one may wish to work with a different basis set like spherical GTOs, where the Cartesian prefactor $x^l y^m z^n$ in Eq.~\eqref{eq:def-cartesian-gto} is replaced by spherical harmonics, or Slater-type orbitals (STOs), where the Gaussian in a spherical GTO is replaced by an exponential. 
Neither of these is factorizable in Cartesian coordinates and cannot be compiled with the same strategy proposed by Huggins \textit{et al.}
With our methods, more general orbital basis functions can be loaded into a quantum register, even if they are not factorizable in Cartesian coordinates, without paying the wasteful overheads of a 3D interleaved representation (see also Appendix \ref{app:literature_comparison} for a comparison with their work). Once the basis functions are on the registers, we can use the tools of Ref.~\cite{huggins2025efficient} to build anti-symmetrized states.

In Figure~\ref{fig:chemistry_figure}, we consider the use of our methods for circuit compilation of an STO, visualized in Figure~\ref{fig:chemistry_figure}(a).
In Figure~\ref{fig:chemistry_figure}(b), we compare TCI for MPS~\eqref{eqn:mps_diagram_form}, comb~\eqref{eqn:comb}, and \textsc{scent}-optimized tensor-network geometries on this distribution. All three topologies result in similar levels of error $\overline{\delta_{\mathcal{T}}}$, however the comb topology compresses the size of the network by a factor of ${26}\times$ relative to MPS, while the \textsc{scent}-optimized topology is even more efficient at ${31}\times$ smaller than MPS, reaffirming its utility in reducing state-preparation resources. In Figure~\ref{fig:chemistry_figure}(c), we compile the \textsc{scent}-optimized TTN to a quantum state-preparation circuit with Algorithm~\ref{alg:compiler}. As the circuit grows, the infidelity decreases, eventually reaching a minimum infidelity of $5.8\times 10^{-9}$ with 31290 CNOTs, approximately 76\% of the cost of the Knill decomposition of Ref.~\cite{iten2016quantum}. As noted previously, the convergence curve of Figure~\ref{fig:chemistry_figure}(c) represents a tradeoff between circuit costs and fidelity --- e.g. one can achieve an infidelity of $10^{-4}$ with 6537 CNOTs, an improvement of ${6.3}\times$ compared to the decomposition of Ref.~\cite{iten2016quantum}. This compounds with the savings from optimizing the underlying tensor network structure in Figure~\ref{fig:chemistry_figure}(b), leading to a substantial improvement in the end-to-end state-preparation procedure.

With successful end-to-end state preparation of STOs demonstrated, one can then produce anti-symmetrized states in first quantization~\cite{huggins2025efficient}, enabling efficient initialization of first-quantized simulations on quantum computers.

\subsection{Finance\label{sec:finance_applications}}
The implementation of financial models on quantum computers is a promising application resulting from a theoretically guaranteed quadratic speed improvement over classical Monte Carlo numerical solvers~\cite{rebentrost2018quantum, orus2019quantum, stamatopoulos2020option, stamatopoulos2022towards}. Option pricing and Value-at-Risk (VaR) problems are particularly lucrative applications, as speedups in option pricing can lend a competitive edge to trading while more efficient VaR calculations can reduce computational burdens on financial institutions.

However, speedups in portfolio models rely upon two fundamental assumptions: first, that the underlying probability distribution functions of the price or return can be encoded efficiently in the register of a quantum computer, and second, that there exists an efficient oracle to implement the payoff or value function on a quantum computer. While these theoretical speed-ups are guaranteed compared to Monte Carlo methods, one should also investigate alternative classical methods that could dequantize the approach, such as tensor networks.

In this section, we consider the application of our tensor-network approaches to financial modeling to both these ends. First, and most obviously, we show the use of our methods to prepare correlated multivariate probability distributions relevant to portfolio models encoded in quantum states. Secondly, we consider (and dispel) the possibility that these same tensor-network methods might be used to dequantize the full calculation entirely, thereby further justifying the utility of a quantum algorithm in this case.

Our primary pursuit is the efficient preparation of multivariate probability distribution functions (PDF) for financial models using tree-tensor networks. Given a multivariate probability distribution $p(\vec{x})$, one wishes to prepare the state
\begin{equation}
    \ket{\psi} = \sum_{\bm{j}}\frac{\sqrt{p(\vec{x}_{\bm{j}})}}{\mathcal{Z}}\ket{\bm{j}},
\end{equation}
where we follow the notational conventions of Equation~\ref{eqn:multivariate_quantics} (i.e. $f(\vec{x}_{\bm{j}})=\sqrt{p(\vec{x}_{\bm{j}})}$). The preparation of univariate Gaussian and L\'{e}vy distributions for financial modeling using low-rank matrix product states has been demonstrated~\cite{bohun2024entanglement, iaconis2024quantum}. We consider efficient preparation of the multivariate log-normal distribution function using tree tensor networks. The multivariate log-normal distribution
\begin{equation}
    \log(\vec{x}) \sim \mathcal{N}_{(\vec{\mu}, \bm{\Sigma})}(\vec{x})\label{eqn:log_normal_distribution}
\end{equation}
is a foundational benchmark in financial modeling, since asset prices evolving under geometric Brownian motion, as in the Black–Scholes framework, are log-normally distributed~\cite{hull2018options}. Beyond equity derivatives, the log-normal law is used in portfolio modeling, Monte Carlo simulation of correlated asset dynamics, and as a baseline assumption in risk management~\cite{ mcneil2015quantitative}. 

In Figure~\ref{fig:log_normal}, we consider the end-to-end procedure for preparing these financially-relevant distributions on a quantum computer using our methods. We consider a quadrivariate ($D=4$) log-normal distribution~\eqref{eqn:log_normal_distribution} drawn from a single realization of $\bm{\Sigma}\sim \operatorname{LKJ}_{D=4}(\eta=50)$. In Figure~\ref{fig:log_normal}(a), we compare TCI for MPS~\eqref{eqn:mps_diagram_form}, comb~\eqref{eqn:comb}, and \textsc{scent}-optimized tensor-network geometries on this distribution. All three topologies result in similar levels of error $\overline{\delta_{\mathcal{T}}}$, however the comb topology compresses the size of the network by a factor of ${141}\times$ relative to MPS, while the \textsc{scent}-optimized topology is even more efficient at ${237}\times$ smaller than MPS, once again reaffirming its utility in reducing state-preparation resources. In Figure~\ref{fig:log_normal}(b), we demonstrate the compilation of the \textsc{scent}-optimized TTN to a quantum state-preparation circuit via Algorithm~\ref{alg:compiler}. We observe a steady decrease in infidelity as the circuit is grown, achieving a minimum infidelity of $8.6\times 10^{-9}$ with 5973 CNOTs. Even at this high fidelity, this represents a ${3.6}\times$ improvement over the Knill decomposition of Ref.~\cite{iten2016quantum}; as noted previously, further circuit cost savings are possible depending on the error tolerance (e.g. an improvement of ${9.9}\times$ is obtained within an infidelity tolerance of $10^{-4}$). This is in addition to the savings already achieved from the more efficient tensor-network topology achieved via \textsc{scent}. Taken together, these results validate our end-to-end ability to produce state-preparation circuits for financially-relevant optimization problems using our methodology.
\begin{figure}
    \centering
    \includegraphics[width=0.5\textwidth]{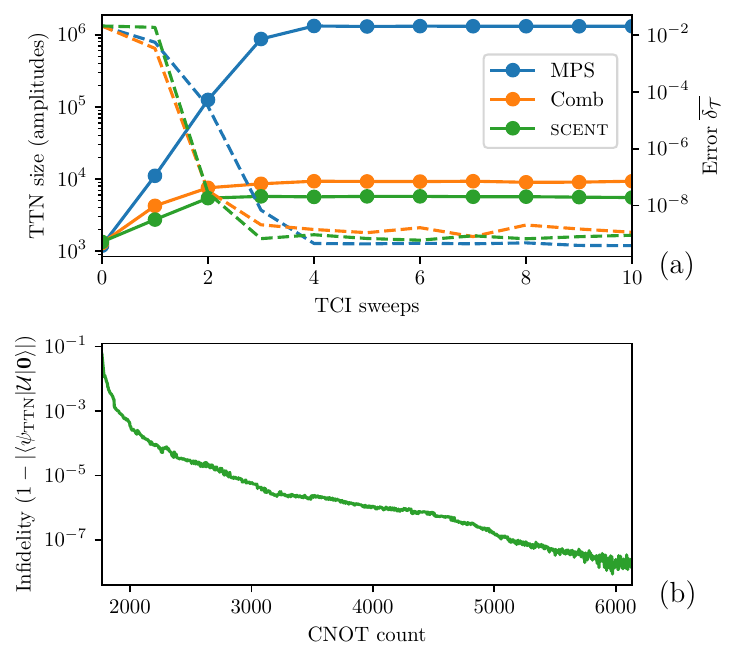}
\caption{\label{fig:log_normal}\textbf{State preparation of an asset returns distribution for financial portfolio optimization.} \textbf{(a)} TCI ($\epsilon_{\operatorname{id}}=10^{-8}$) is used to produce tensor-network representations of the asset returns distribution~\eqref{eqn:log_normal_distribution} with MPS (blue), comb (orange), and \textsc{scent}-optimized (green) topologies. All three topologies achieve comparably small error $\overline{\delta_{\mathcal{T}}}<10^{-9}$; the TTN topologies achieve substantially more efficient representations than MPS, with \textsc{scent} proving most efficient (comb outperforms MPS by ${141}\times$, \textsc{scent} outperforms MPS by ${237}\times$). \textbf{(b)}: Convergence trace of compilation of the \textsc{scent}-optimized TTN to a quantum state-preparation circuit. The infidelity steadily converges as the circuit is grown, achieving a minimum infidelity of $8.6\times10^{-9}$ with just 5973 CNOTs, a ${3.6}\times$ improvement from using the decompositions of Ref.~\cite{iten2016quantum}.}
\end{figure}

Quantics tensor networks are most commonly employed for quantum-inspired classical computations \cite{garcia2021quantuminspired,gourianov2024tensor,sakurai2025learning,niedermeier2025solving,nunez2022learning,ishida2025low,shinaoka2023multiscale,jolly2025tensorized,kim2025strong}, with efficient tensor network contractions often providing a route to dequantization. It is crucial to consider whether the same classical efficiency that enables state preparation might also dequantize our intended calculation. As an example, we now turn to the VaR problem, which requires integrating such distributions subject to discontinuous constraints. VaR calculates the depreciation of an asset collection over a time period to some desired level of confidence. Specifically, VaR defines the probability that a general loss $L(\mathbf{r})$ will exceed some threshold, $l_\alpha$ for a given confidence level $\alpha$,
\begin{equation}
\mathbb{P}\Big(L(\mathbf{r}) > l_{\alpha}\Big) = 1 - \alpha.
\end{equation}
The probability is then obtained from the cumulative distribution function
\begin{equation}
\mathbb{P}\Big(L(\mathbf{r}) > l_{\alpha}\Big) = \int_{\{\mathbf{r}: L(\mathbf{r}) > l_{\alpha}\}} f(\mathbf{r}) \, \mathrm{d}\mathbf{r},
\end{equation}
which is a constrained integral over the returns probability $f(\mathbf{r})$. We note that while we have defined VaR in return space, one can interchangeably define it in price space~\cite{stamatopoulos2020option}.
The high-dimensional nature and the addition of the loss constraint yield a difficult numerical problem which is typically solved with Monte Carlo methods. In Ref.~\cite{stamatopoulos2022towards}, the integral constraint is applied using a binary comparator function after preparing the probability distribution and the loss function on a quantum computer. Equivalently, the constrained integral can be rewritten as an unconstrained integral with an indicator function $\mathbb{I}(\bm{r})$, such that
\begin{equation}
\mathbb{P}\Big(L(\mathbf{r}) > l_{\alpha}\Big) = \int_{\mathbf{r}\in \mathbb{R}^D}\mathbb{I}(L(\mathbf{r}) > l_{\alpha})f(\mathbf{r}) \, \mathrm{d}\mathbf{r},
\label{eqn:unconstrained_integration_with_indicator}
\end{equation}
where the indicator function is defined as
\begin{equation}
    \mathbb{I}(L(\mathbf{r}) > l_{\alpha}) = \begin{cases} 
      1 &  L(\mathbf{r}) > l_{\alpha}, \\
      0 & \text{otherwise}.
   \end{cases}\label{eqn:indicator_function}
\end{equation}
While one can accommodate a small, finite number of discontinuities in the quantics representation without incurring high bond dimension~\cite{lindsey2023multiscale}, the multivariate piecewise discontinuity in Equation~\ref{eqn:indicator_function} cannot be efficiently represented. While analytically well-known, this can be demonstrated empirically by considering a `softened' indicator function
\begin{equation}
    \tilde{\mathbb{I}}(L(\mathbf{r}) > l_{\alpha}) = \frac{1}{1+e^{-\beta(L(\mathbf{r}) - l_{\alpha})}},
\end{equation}
where $\beta$ acts as an `inverse temperature' that controls function smoothness, and $\lim_{\beta\to\infty}\tilde{\mathbb{I}}(\bm{r})=\mathbb{I}(\bm{r})$. In Figure~\ref{fig:sigmoid_blowup}(a), we depict the softening of the indicator function at different values of $\beta$ graphically, for a simple three-asset portfolio with allocation weights $\bm{w}\sim \mathcal{U}_{[0,1]^3}$ and a linear loss function $L(\bm{r})=\bm{w}\cdot\bm{r}$.

\begin{figure}
    \centering
    \includegraphics[width=0.5\textwidth]{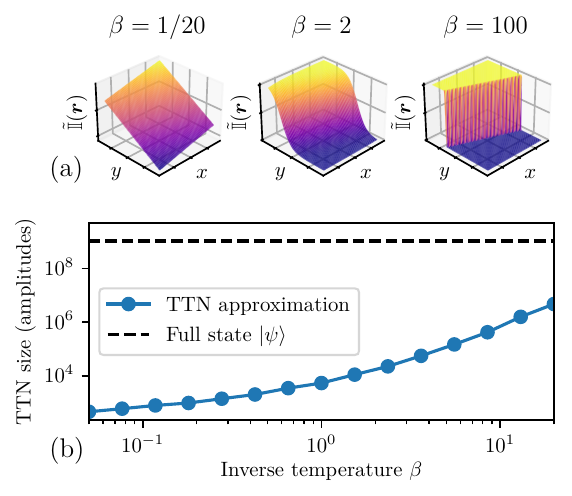}
    \caption{\textbf{Blow-up of TTN size for softened indicator functions.} We study the tensor-network approximation of a softened trivariate indicator function for a VaR problem. \textbf{(a)} Cross-sections of $\tilde{\mathbb{I}}(\bm{r})$ for $(x,y)\in [-4,4]^2$ and $z=-1$ at three different values of $\beta$. As $\beta$ increases, the approximation of the discontinuity improves.
    \textbf{(b)} The size of TTN produced by converged TCI ($\epsilon_{\operatorname{id}}=10^{-8}$, comb topology) at a range of values of $\beta$. The black dotted line represents the $2^n$ amplitudes required to store a full state vector. As $\beta$ increases, the bond dimension increases sharply, rapidly approaching the full-state cost.}
    \label{fig:sigmoid_blowup}
\end{figure}

In Figure~\ref{fig:sigmoid_blowup}(b), we study this instance of the indicator function empirically. At varying values of $\beta$, we approximate the softened indicator function $\tilde{\mathbb{I}}(\bm{r})$ with a TTN using TCI (error tolerance $\epsilon_{\operatorname{id}}=10^{-8}$). As $\beta$ increases, there is a sharp increase in the size of the TTN (in number of stored amplitudes), rapidly approaching the $2^n$ amplitudes required for a full state vector in Hilbert space. As one would anticipate analytically, approximating the indicator function for sufficiently large $\beta$ to well approximate the discontinuity in the constrained integration can be prohibitively difficult, approaching full-state costs (and therefore intractable for large $n$).

This supports the use of a quantum approach to computing VaR. Given a probability distribution that can be represented as an efficient TTN (and therefore compiled to a circuit using our methods), if the indicator function could also be represented efficiently with the same TTN structure, then one could compute the integral in Equation~\ref{eqn:unconstrained_integration_with_indicator} efficiently by contracting the tensor networks together along their shared physical indices, since an efficient contraction order exists~\cite{shi2006classical}. Crucially, in this case, the efficient tensor network methods enable state preparation but do not trivialize the full calculation --- suggesting that this may indeed be an avenue to quantum advantage on a useful financial calculation.

\section{Conclusions\label{sec:conclusions}}
In this work, we have introduced a broad set of tools for quantum state preparation with TTNs. Building upon a wealth of previous works in quantics tensor networks, we demonstrated a novel method (\textsc{scent}) for efficiently constructing optimized TTN representations of multivariate functions, with entanglement-geometry-informed structure that outperforms the previous state of the art~\cite{tindall2024compressing}. Using these efficient TTN function representations, we turned our attention towards the construction of approximate circuits to prepare these states on quantum computers. By intelligently constructing adaptive circuits for each isometry in a TTN, we demonstrated the ability to efficiently synthesize quantum circuits to amplitude-encode multivariate functions on quantum computers. Our tensor-network-based methods provide locally optimal choices of gates without the need for gradient-based or iterative updates at the gate level, avoid the additional overheads required for full unitary embedding, and allow for relaxation in the requirements of circuit synthesis within a desired error tolerance (e.g. within the margin of error set by the underlying tensor-network function approximation).

We note that our work is one part of a rich tapestry of tensor-network methods for function encoding, and further improvements can likely be made by combining our approaches with those in related papers. For example, our \textsc{scent} approach uses a single input (the affinity matrix) to construct a \emph{global approximation} to the optimal TTN structure for compressing a function --- in contrast, the `automatic structural optimization' (ASO) approach of Refs.~\cite{hikihara2023automatic,hikihara2025improving,manabe2025state} finds the \emph{locally optimal} solution to this problem in rearranging the 4 neighbouring indices of any bond in a binary TTN, which when iterated may lead to a good minimum. These approaches can be complementary rather than competing: for example, we propose that initializing a TTN with \textsc{scent} should provide an excellent initial guess that ASO can further refine, also overcoming the `chicken-and-egg' problem of needing an initial MPS in ASO (which as we have shown will be intractable for many nontrivial problems). In a similar, but opposite vein, any other TTN compilation method (e.g. exact circuits for isometries~\cite{knill1995approximation,iten2016quantum,malvetti2021quantum} or unitary disentanglers~\cite{sugawara2025embedding}) can be used to create an initial circuit that can then be refined with our methods in Section~\ref{sec:compilation}, improving their fidelity or reducing their circuit complexity within a specified error tolerance as required. The best performance is likely to arise by combining different methods, of which our work is only one.

On a broader level, we note that our methods cover all loop-free tensor network geometries, and correspondingly, we are most efficiently able to represent functions whose underlying correlation structure is tree-like (see Figure~\ref{fig:treelike_correlations}).
Moving from MPS to heuristically-chosen TTNs to optimized TTNs for function encoding has progressively improved our prospects for state preparation of multivariate functions --- following this trend, we hope that expanding these methods to more complicated looped geometries like PEPS or MERA can further expand the retinue of functions that can be prepared on quantum computers. Doing so will present both great opportunity and non-trivial technical challenges, since the interpolative gauge is not well-defined for a looped tensor network.

\section*{Acknowledgements}
We acknowledge helpful technical discussions with Gregory Boyd, B\'{a}lint Koczor, Anh Pham, and Andrew Vlasic. The code for \textsc{scent} and TCI can be found in the GitLab repository of Ref.~\cite{gitlab_repo} and was developed by the first author. We thank Vasco Ferreira for contributing to preparation of the codebase for public release. L.C. was supported by the U.S. Department of Energy, Office of Science, Office of Advanced Scientific Computing Research under Contract No. DE-AC05-00OR22725 through the Accelerated Research in Quantum Computing
Program MACH-Q. L.C. was also supported by the Laboratory Directed Research and Development program of Los Alamos National Laboratory under project number 20260043DR.

\clearpage
\newpage
\widetext
\appendix

\section*{Appendices for ``Multivariate quantum state preparation with optimized tensor networks''}
Here we present additional details for the main results in our manuscript, structured as follows.
In Appendix~\ref{app:literature_comparison}, we present a detailed comparison of our methods to the existing state of the art in the literature, with regard to both structural optimization and compression methods for loop-free tensor networks (Appendix~\ref{app:literature_compression_optimization}) and circuit synthesis for loop-free tensor network states (Appendix~\ref{app:literature_circuit_synthesis}).
In Appendix~\ref{app:entanglement_metrics}, we outline and compare different methods to construct affinity matrices for \textsc{scent}.
In Appendix~\ref{app:proofs}, we present supplementary proofs omitted from the main text for brevity.
In Appendix~\ref{app:TTNcompialtionAlg}, we present full detail on the compilation of TTNs to quantum circuits, including pseudocode.

\section{Comparison with prior literature\label{app:literature_comparison}}

\subsection{MPS/TTN structural optimization and compression methods\label{app:literature_compression_optimization}}
In most contexts where tensor networks are employed, it is desirable to optimize their structure to most efficiently represent the target object. For example, even within the linear chain topology of MPS methods, previous work has used the pairwise QMI to optimize site ordering in DMRG~\cite{rissler2006measuring,barcza2011quantum,ali2021ordering}.

For tensor network topologies beyond MPS, some methods are limited by the requirement to already have a tractable representation of the state to perform computations on. The procedure outlined in Ref.~\cite{hyatt2017extracting} discovers the entanglement geometry of a state by constructing and analyzing a unitary disentangling circuit, but requires one to perform operations on the full state. Likewise, a recent work by Okunishi \emph{et al.}~\cite{okunishi2023entanglement} considers a bipartition-based construction of TTN structure that bears similarities to our own. However, this has two key constraints that make it challenging in comparison. First, the procedure is initialized with the exponentially-sized full quantum state, which becomes impractical to store and perform matrix factorizations on beyond a small number of qubits --- entirely preventing scaling to multivariate functions of more than a few variables or of very high precision. In contrast, our methods use rank-revealing constructions that need never access the full exponentially-sized state. Second, while there are exponentially many possible bipartitions at each branching ($2^{n-1}-1$ for $n$ qubits), Ref.~\cite{okunishi2023entanglement} considers only the $n-1$ possible bipartitions that preserve a presupposed ordering. This exponential reduction in the number of available bipartitions prevents meaningful structural optimization in the case of quantics functions: if one begins with an interleaved ordering, the procedure would fail to separate variables and produce an inefficient structure that places bits of the same variable far apart in separate branches, whereas if one begins with a serial representation then the procedure would recover something akin to a comb TTN (Equation~\ref{eqn:comb}) with branches for each variable ordered in the presupposed ordering of the serial representation, therefore failing to improve on the previous state of the art~\cite{tindall2024compressing}. In contrast, our approach searches for an approximately optimal bipartition amongst all $2^{n-1}-1$ possible bipartitions.

It is therefore highly desirable to investigate techniques that do not require handling of the full state, since only these approaches can be scalable. One such technique is the generalization of TCI to TTNs by Tindall \emph{et al.}~\cite{tindall2024compressing}, bringing the utility of rank-revealing interpolative methods to the greater flexibility in entanglement structures allowed by TTNs. We use the approach introduced by Tindall \emph{et al.} extensively throughout this work. Ref.~\cite{tindall2024compressing} only considers heuristically-chosen network structures such as the comb TTN (Equation~\ref{eqn:comb}), and notes that there is likely scope to optimize the structure beyond this. We extend beyond the scope of Ref.~\cite{tindall2024compressing} by realizing this proposal with the introduction of \textsc{scent} and by employing their work as one tool in a full state-preparation pipeline for multivariate functions.

Another technique that does not require handling of the full quantum state is automatic structural optimization (ASO), introduced by Hikihara \emph{et al.}~\cite{hikihara2023automatic,hikihara2025improving}. This method `unfolds' an initial MPS into an optimized TTN structure by local optimization about each bond --- comparing the possible reorientations of the 4 indices neighbouring a bond in a binary TTN, and choosing the arrangement that minimizes entanglement. This iterative step is locally optimal, and sweeping back and forth along a tensor network structure often leads to an efficient TTN structure. Although starting from an initial MPS does prevent any need to handle the exponentially-sized full state, for multivariate functions it is often the case that an MPS representation is intractable even when an efficient TTN representation exists (see Figure~\ref{fig:size_and_correlation}(a)). In contrast, \textsc{scent} can be initialized with just an affinity matrix (which can be estimated from Monte Carlo sampling using only arithmetic query access to the function), and jumps directly to the target structure without any intermediate iterative process that may require less efficient state representations. As noted in Section~\ref{sec:conclusions}, we anticipate that a fusion of these techniques will prove useful: initializing structural optimization with \textsc{scent} and then further refining the structure with ASO could enable efficient representations of functions that are intractable with ASO alone due to the complexity of the MPS representation, as well as help initialize the iterative, locally-optimal procedure in ASO closer to a good minimum and assist its convergence.

\subsection{MPS/TTN circuit synthesis methods\label{app:literature_circuit_synthesis}}
MPS have long been understood as a promising tool for state preparation when the underlying state has suitable entanglement structure, leading to a broad range of tools for synthesizing circuits to exactly or approximately compile them~\cite{schon2005sequential,iten2016quantum,ran2020encoding,rudolph2022decomposition,gundlapalli2022deterministic,melnikov2023quantum,malz2024preparation,berry2025rapid}. When combined with the quantics representation, these have shown great promise for amplitude-encoding univariate functions on quantum computers~\cite{bohun2024entanglement}. Our work expands upon this by leveraging the more flexible structure of TTNs to more efficiently compile multivariate functions.

One particularly interesting MPS-based state-preparation method is the recent work by Huggins \emph{et al.} on state preparation for first-quantized chemistry~\cite{huggins2025efficient}. This work outlines a method to encode GTOs in Cartesian coordinates as a product of MPS, where the sites correspond to a plane-wave basis. Subsequently, they synthesize a unitary from a product of reflection operators following methods from Refs.~\cite{low2024trading, kliuchnikov2013synthesis}.
Their MPS representation relies on the fact that the Cartesian GTOs are factorizable into Cartesian coordinates. We demonstrate in Section~\ref{sec:chemistry_applications} that we can find efficient representations of non-factorizable orbitals using our methods. While we focus on grid-based representations, and Huggins \textit{et al.} focus on plane-wave representations, we expect that their techniques can be adopted to grid-based methods, allowing our respective approaches to be combined to enable first-quantized state preparation with more general orbitals.

Any TTN or MPS can be placed in isometric gauge, and therefore their preparation circuits can in principle be synthesized by using known quantum circuit decompositions for isometries~\cite{knill1995approximation,iten2016quantum,malvetti2021quantum}. Indeed, Ref.~\cite{manabe2025state} combines ASO~\cite{hikihara2023automatic,hikihara2025improving} with the circuit decompositions of Ref.~\cite{iten2016quantum} to produce state-preparation circuits for multivariate functions. First, we note that this approach inherits the comparative limitations of ASO discussed in Appendix~\ref{app:literature_compression_optimization}. Secondly, we note that the use of \emph{exact} circuit decompositions is wasteful in practice: since tensor-network function encodings involve some level of controllable approximation (e.g. for TCI as presented in this work, controlled by the ID tolerance $\epsilon_{\operatorname{id}}$), fidelity beyond a certain threshold is bottlenecked by the underlying tensor-network approximation rather than circuit synthesis --- so it is prudent to exploit this by reducing circuit costs by allowing for approximate compilation, as we do in Section~\ref{sec:compilation}. Furthermore, our approach can be directly tailored to a target architecture by varying the gate dictionary, as outlined in Ref.~\cite{bilkis2021semi}, and also benefits from exploiting gauge freedom.

Another approach to TTN circuit synthesis is that of Ref.~\cite{sugawara2025embedding}, which generalizes previous unitary disentangler approaches for MPS~\cite{ran2020encoding,rudolph2022decomposition,bohun2024entanglement} to TTNs. This approach iteratively generates a circuit by truncating the bond dimension of the TTN to $\chi=2$, generating a circuit layer from the analytically-known circuit that disentangles this truncated TTN to a product state, applying this to the TTN, and iteratively repeating --- heuristically, each layer should approximately disentangle the full TTN further, eventually converging to a good approximation of the preparation circuit's inverse. However, doing so requires one to fully embed isometries in unitaries: as discussed in Section~\ref{sec:compilation_methods}, for a (padded) isometry mapping $n_{\operatorname{in}}$ qubits to $n_{\operatorname{out}}=n_{\operatorname{in}}+n_{\operatorname{new}}$ qubits, unitary embedding incurs an additional cost that is exponential in $n_{\operatorname{new}}$ when considering exact circuit decomposition costs~\cite{iten2016quantum}. This is wasteful in practice, as many additional resources are spent on ensuring faithful action on every possible state of the $n_{\operatorname{new}}$ additional qubits despite the fact that they are guaranteed to be supplied in the computational zero state $\ket{\bm{0}_{\operatorname{new}}}$. In contrast, our objective function for optimization (Equation~\ref{eqn:compilation_cost_func}) is constructed in a manner that treats all unitary embeddings equally, imposing no particular constraint to maintain any particular one, thereby avoiding this issue. As we note in Section~\ref{sec:conclusions}, the circuit from another TTN synthesis method can in principle be used as an initial guess for our approach, which can then refine or simplify the circuit. Since Ref.~\cite{sugawara2025embedding} follows up their unitary disentangler methods with environment-tensor-based gate replacements (analogous to the MPS approach of Ref.~\cite{rudolph2022decomposition}), our methods may be a useful replacement for this step --- allowing the rough initial guess from unitary disentangling to be refined with adaptive circuit structure, platform-targeted gate choices, full awareness of gauge freedom, and a purpose-built objective function that avoids unnecessary constraints.

A recent work by Ballarin \emph{et al.}~\cite{ballarin2025efficient}, which we became aware of during preparation of this manuscript, shares several similarities with our work: their work utilizes TTNs and TCI to create an efficient representation of a multivariate function, then introduces novel circuit synthesis techniques to prepare these on quantum computers. While similar in these broad strokes, their work differs substantially on a technical level. Ref.~\cite{ballarin2025efficient} introduces interpolative quantum state preparation (IQSP), which iteratively deforms an easy-to-prepare initial function to a complicated target function, performing gradient-based optimization of a parametrized quantum circuit at each stage (and ensuring substantial gradients by warm-starting). In this paper, we do not use gradient-based optimization, instead using adaptive circuit structure and locally-optimal, environment-tensor-based gate replacement for our optimization procedure. Our circuit synthesis methods are entirely different in spirit and approach.
Another difference in scope between our work and Ref.~\cite{ballarin2025efficient} is the TTN structure employed. Ballarin \emph{et al.} consider the use of a comb TTN (Equation~\ref{eqn:comb}) only, which is suitable for variables whose correlation structure is approximately linear (i.e. strong correlations only between variables placed close on the presupposed ordering of the comb). Indeed, their largest demonstrations are for multivariate normal distributions (Equation~\ref{eqn:multivariate_normal_distribution}) with tridiagonal $\bm{\Sigma}$, such that the only non-zero correlation is between nearest neighbours in this order, a structure that is by construction particularly well-suited to the chosen comb topology. We introduce \textsc{scent} to optimize the TTN structure for an extant but undiscovered entanglement geometry, allowing functions with approximately tree-like correlations to be efficiently represented, broadening the scope of applicable functions. Indeed, this is exemplified in Figure~\ref{fig:treelike_correlations}, where we study a normal distribution with tree-like correlation patterns in $\bm{\Sigma}$ --- there, the \textsc{scent}-optimized TTN is approximately $24\times$ smaller in amplitude count compared to a comb TTN, demonstrating the significant utility of structural optimization to reduce state-preparation costs. We also note some comparative advantages of the approach by Ballarin \emph{et al.}: their interpolative approach is an inspired and useful way to initialize optimization in a favorable regime, they explicitly consider noise-aware optimization, and they demonstrate state preparation on real trapped-ion hardware.

\section{Constructing affinity matrices}
\label{app:entanglement_metrics}
The \textsc{scent} protocol introduced in Section~\ref{sec:ttn_compression} can in principle be executed with a broad range of metrics for the affinity matrix used in spectral clustering. The basic principle is to ensure that highly entangled qubits are less distant in the tree structure, and different measurements of entanglement (e.g. any entanglement monotone) or other measures of correlation can prove more suitable for different problems. Here, we define and compare several options. A good candidate will be non-negative, symmetric, take larger values when stronger entanglement or correlation is present, and have good dynamic range.

An obvious candidate for these purposes is the pairwise QMI, given by
\begin{equation}
    I(i,j)=S(\Tr_i[\rho_{ij}])+S(\Tr_j[\rho_{ij}])-S(\rho_{ij}),\label{eqn:pairwise_qmi}
\end{equation}
where $\rho_{ij}$ is the reduced density matrix at sites $i$ and $j$ \eqref{eqn:reduced_density_matrix} and $S(\rho)\equiv-\Tr[\rho \log \rho]$ is the von Neumann entropy. This has precedent in structural optimization of tensor networks as a means to optimize site ordering for DMRG calculations \cite{rissler2006measuring,barcza2011quantum,ali2021ordering}, as well as to optimize ansatz connectivity for VQE \cite{tkachenko2020correlation}.

Given access to $\rho_{ij}$, another tempting choice of metric is the entanglement of formation~\cite{bennett1996mixed}
\begin{equation}
    E_f(\rho_{ij})=\inf\left(\sum_\alpha p_\alpha E_f(\ket{\psi_\alpha}_{ij})\right),\label{eqn:entanglement_formation_defn}
\end{equation}
where the infimum is taken over all possible pure-state decompositions of the state $\rho_{ij}=\sum_\alpha p_\alpha \ket{\psi_\alpha}_{ij}\bra{\psi_\alpha}_{ij}$ and for a pure state $E_f(\ket{\psi}_{ij})=S(\Tr_i[\rho_{ij}])=S(\Tr_j[\rho_{ij}])$. For a 2-qubit state, this can be computed straightforwardly in terms of the concurrence
\begin{equation}
    \mathcal{C}(\rho)\equiv \max(0, \lambda_1 - \lambda_2 - \lambda_3 - \lambda_4),
\end{equation}
where $\lambda_1,\dots,\lambda_4$ are eigenvalues (in decreasing order) of the Hermitian operator
\begin{equation}
    R=\sqrt{\sqrt{\rho}\tilde{\rho}\sqrt{\rho}},
\end{equation}
with spin-flipped state of $\rho$ defined $\tilde{\rho} \equiv (\sigma_y \otimes \sigma_y)\rho^*(\sigma_y \otimes \sigma_y)$ and complex conjugation $\rho^*$ taken in the computational basis ($\sigma_z$ eigenbasis). The entanglement of formation is then given by
\begin{equation}
    E_f=h\left(\frac{1+\sqrt{1-\mathcal{C}^2}}{2}\right),
\end{equation}
for Shannon entropy function $h(x)=-x\log_2(x)-(1-x)\log_2(1-x)$. We trialled this for several problems and found it to be a poor choice for \textsc{scent} in practice, since the RDMs for pairs of all but the most significant bits are typically highly mixed, meaning that these localized states could be prepared without entanglement and therefore leading to $E_f=0$.

For amplitude-encoded, real-valued functions in the quantics representation, where qubits correspond to bits of binary significance in discretized continuous variables, a particularly useful measure of pairwise correlation between qubits can be devised using Boolean function theory. First, note that a real-valued $n$-qubit amplitude-encoded quantics function may be thought of as a Boolean function $f:\{0,1\}^{n} \to \mathbb{R}$. Boolean Fourier analysis of such functions is a well-established field~\cite{o2021analysis}, and we note that any such function may be written in the form
\begin{equation}
    f(x)=\sum_{S \subseteq [n]} \hat{f}(S)\chi_S(x),
\end{equation}
where $x\in \{0,1\}^n$, the $\chi_S(x)=(-1)^{\sum_{i\in S}x_i}$ are parity functions (i.e. they are characters on the group $(\mathbb{Z}_2)^n$), and the $\hat{f}(S)$ are Boolean Fourier coefficients that define the function. Given two qubits labelled by indices $i$ and $j$, we may gain first-order information about the correlation between these qubits in $f$ by considering only terms with $S\subseteq \{i,j\}$. Marginalizing the function with respect to all other qubits $[n]\setminus \{i,j\}$ isolates these Fourier coefficients, yielding
\begin{equation}
    \mu(x_i,x_j)=\frac{1}{2^{n-2}}\sum_{x_k}^{k\in \overline{ij}}f(x_k)=\sum_{S\subseteq\{i,j\}}\hat{f}(S)\chi_S(x_i,x_j),
\end{equation}
where the second equality follows from the fact that $\sum_{x_k=0}^1 \chi_S(x_k)=0$ vanishes for $k\in S$, since the parity change from a bit-flip ensures one term is $-1$ and the other term is $+1$. Put simply, averaging over a qubit cancels out the Fourier components with support on that qubit. Writing $\mu(x_i,x_j)$ as a (potentially unnormalized) 2-qubit state $\ket{\mu_{ij}}$, a suitably normalized metric of the pairwise correlation between qubits $i$ and $j$ can be obtained from its entanglement entropy
\begin{equation}
\kappa(i,j)=S(\Tr_i[\ket{\mu_{ij}}\bra{\mu_{ij}}])=S(\Tr_j[\ket{\mu_{ij}}\bra{\mu_{ij}}]).\label{eqn:boolean_fourier_entropy}
\end{equation}
This quantifies how inseparable the dependence of $f$ is on those two bits after marginalization, i.e. how inseparable the remaining Fourier support is, and we therefore refer to it as the `Boolean Fourier entropy'. Since smooth functions lead to reduced entanglement between distant bits in the quantics representation~\cite{holmes2020efficient,ali2023approximation,ali2024multivariate,bohun2024entanglement,holmes2020entanglement,garcia2021quantuminspired,marin2021quantum,lindsey2023multiscale}, it is reasonable to expect that the Fourier coefficients with support on many qubits will be much smaller and less relevant to determining entanglement geometry.

\begin{figure*}
    \centering
    \includegraphics[width=0.7\textwidth]{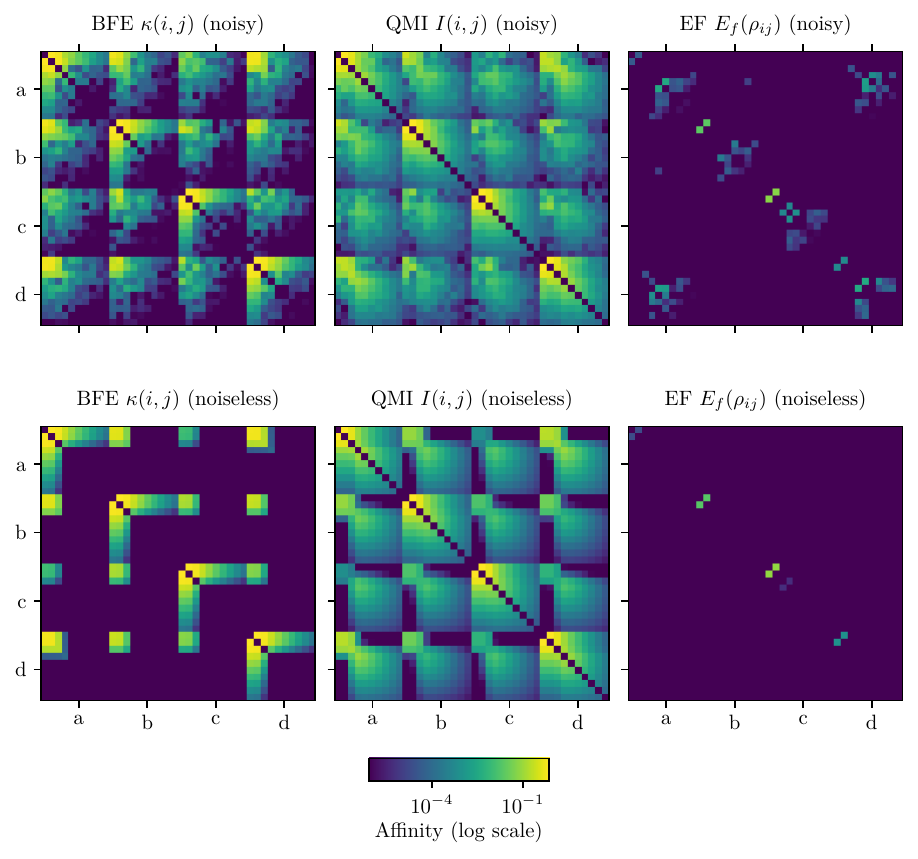}
    \caption{\label{fig:affinities_comparison}\textbf{Comparison of different affinity metrics for a multivariate normal distribution.} Affinity matrices for a multivariate normal distribution ($D=4$, $\eta=1$) in both the Monte-Carlo-sampled noisy case (top row, $N_s=10^4$) and noiseless case computed from a tensor network (bottom row). These are constructed using the Boolean Fourier entropy $\kappa(i,j)$ (left, Equation~\ref{eqn:boolean_fourier_entropy}), the pairwise QMI $I(i,j)$ (middle, Equation~\ref{eqn:pairwise_qmi}), and the entanglement of formation $E_f(\rho_{ij})$ (right, Equation~\ref{eqn:entanglement_formation_defn}).}
\end{figure*}

The appropriate choice of metric for the affinity matrix elements can be highly problem-dependent. In Figure~\ref{fig:affinities_comparison}, we compare these three affinity metrics for the multivariate normal distribution studied previously in Figure~\ref{fig:multivariate_normal_example} (a single realization of the covariance matrix drawn from $\bm{\Sigma}\sim \operatorname{LKJ}_{D=4}(\eta=1)$). For this particular problem instance, we see a great difference in suitability. The Boolean Fourier entropy clearly demonstrates the expected correlation patterns for a highly-correlated smooth function: there are strong correlations between high-significance bits of each variable, decaying correlations in bit significance within the same variable, and extremely weak correlations between low-significance bits of different variables. In contrast, the pairwise QMI indicates moderate correlation between low-significance bits of different variables, possibly due to classical correlation (as opposed to genuine quantum entanglement). For this problem instance, the entanglement of formation is not fit for purpose, yielding exactly zero for most qubit pairs and therefore providing minimal information on the entanglement geometry.

\section{Proofs\label{app:proofs}}

\subsection{Spectral clustering and mean inter-cluster affinity\label{app:spectral_clustering_mean_affinity}}
In Section~\ref{sec:rescaling_affinity_matrices}, we noted that binary spectral clustering approximately minimizes the mean pairwise affinity between elements of each cluster. While this follows straightforwardly from standard proofs, it is not normally presented this way in the literature, so we formalize this here.

\begin{theorem}\label{theorem:scent_bipartition}
    Let $Q$ be the set of qubits to be bipartitioned by an iteration of \textsc{scent} (Algorithm~\ref{alg:scent}) via spectral clustering, into sets $C$ and $\overline{C}$ (such that $C \cap \overline{C} = \emptyset$, $C\cup\overline{C}=Q$). Let $\bm{A}$ be the affinity matrix with elements $A_{ij}$. The bipartition approximately minimizes the mean inter-cluster affinity (i.e. affinities between an element of $C$ and element of $\overline{C}$)
    \begin{equation}
    \frac{1}{|C||\overline{C}|}\sum_{i\in C, j\in \overline{C}}A_{ij}.
    \end{equation}
\end{theorem}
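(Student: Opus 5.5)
The plan is to recast the mean inter-cluster affinity as a Rayleigh quotient of the graph Laplacian over a constrained set of indicator vectors, and then identify binary spectral clustering with the standard continuous relaxation of that problem. Write $L=\bm{D}-\bm{A}$ for the unnormalized Laplacian, where $\bm{D}=\operatorname{diag}(d_i)$ and $d_i=\sum_j A_{ij}$. Let $m=|Q|$, and to each bipartition $(C,\overline{C})$ attach the vector $f\in\mathbb{R}^m$ given by
\[
f_i=\sqrt{|\overline{C}|/|C|}\ \text{for } i\in C, \qquad f_i=-\sqrt{|C|/|\overline{C}|}\ \text{for } i\in\overline{C}.
\]
This is the usual RatioCut indicator. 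I will then check three identities directly: $f^{T}\mathbf{1}=0$, $\|f\|^2=m$, and
\[
f^{T}Lf=\tfrac12\sum_{ij}A_{ij}(f_i-f_j)^2=\operatorname{cut}(C,\overline{C})\,\frac{m^2}{|C||\overline{C}|}.
\]
The last one holds because every crossing pair contributes $(f_i-f_j)^2=m^2/(|C||\overline{C}|)$, while non-crossing pairs contribute nothing. Putting these together gives
\[
\frac{f^{T}Lf}{f^{T}f}=\frac{m}{|C||\overline{C}|}\sum_{i\in C,j\in\overline{C}}A_{ij},
\]
which is exactly $m$ times the mean inter-cluster affinity in the statement. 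Since $m$ is fixed for a given recursion step, minimizing the mean inter-cluster affinity over all bipartitions is equivalent to minimizing this Rayleigh quotient over the discrete set of such indicator vectors.

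The second step is the relaxation. Drop the discreteness constraint and keep only $f\perp\mathbf{1}$, with $f$ real and nonzero. Because $L$ is symmetric positive semidefinite and $L\mathbf{1}=0$, the Courant–Fischer theorem says the relaxed minimum is the second-smallest eigenvalue $\lambda_2$, attained at the Fiedler vector $v_2$. Binary spectral clustering computes $v_2$ and rounds it to a bipartition, either by sign or by $k$-means on its entries. So the clustering exactly solves the relaxation of the mean-affinity problem and then projects back onto the feasible set. This is the precise meaning of ``approximately minimizes'' that I would state. It also gives a certified lower bound: $\lambda_2/m$ is at most the mean inter-cluster affinity of every bipartition.

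The hard part is making ``approximately'' honest. There is no general multiplicative guarantee, since RatioCut is NP-hard and spectral rounding can be far from optimal on adversarial graphs. I would therefore not claim a worst-case ratio. I would state the result as optimality for the relaxed problem together with the lower bound above. I would add a short remark that, when $\bm{A}$ is close to block-diagonal, Davis–Kahan perturbation bounds make the rounding exact. If the implementation uses the normalized Laplacian rather than $L$, I would note that the same argument goes through with $\operatorname{vol}(C)$ in place of $|C|$. In that case the ``mean'' becomes a degree-weighted mean, and this caveat would be recorded.
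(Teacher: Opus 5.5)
Your proposal is correct and follows essentially the same route as the paper, which likewise uses the RatioCut indicator vector $f$ to rewrite the mean inter-cluster affinity as $f^T\bm{L}f/|Q|^2$. That is your Rayleigh quotient divided by $m$, since $f^Tf=m$ for every bipartition; the paper then simply cites von Luxburg's tutorial for the fact that unnormalized spectral clustering approximately minimizes this quantity. Your explicit Courant--Fischer relaxation, the certified lower bound $\lambda_2/m$, and the normalized-Laplacian caveat spell out what the paper leaves to that citation.
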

\begin{proof}
Beginning with the mean inter-cluster affinity, simple manipulations show
\begin{align}
    \frac{1}{|C||\overline{C}|}\sum_{i\in C, j\in \overline{C}}A_{ij}&=\frac{1}{2|Q|^2}\left(\frac{|C|+|\overline{C}|}{|C|}+\frac{|C|+|\overline{C}|}{|\overline{C}|}\right) \sum_{i\in C, j\in \overline{C}}A_{ij}\\
    &= \frac{1}{|Q|^2}\left(\frac{|\overline{C}|}{|C|}+\frac{|C|}{|\overline{C}|}+2\right)\sum_{i\in C, j\in \overline{C}}A_{ij} \\
    &= \frac{1}{2|Q|^2}\left(\sum_{i\in C, j\in \overline{C}}A_{ij}\left(\sqrt{\frac{|\overline{C}|}{|C|}}+\sqrt{\frac{|C|}{|\overline{C}|}}\right)^2+\sum_{i\in \overline{C}, j\in C}A_{ij}\left(-\sqrt{\frac{|\overline{C}|}{|C|}}-\sqrt{\frac{|C|}{|\overline{C}|}}\right)^2\right).
\end{align}
Consider now the unnormalized graph Laplacian $\bm{L}\equiv \bm{D}-\bm{W}$, where $\bm{D}$ is the degree matrix and $\bm{W}$ is the weighted adjacency matrix, whose elements are $(\bm{W})_{ij}=A_{ij}$. Defining the vector $\vec{f}$ with elements
\begin{equation}
    (\vec{f})_i=\begin{cases}\sqrt{|\overline{C}|/|C|} & i\in C, \\ -\sqrt{|C|/|\overline{C}|} & i\in \overline{C},\end{cases}
\end{equation}
further manipulations show
\begin{align}
    \frac{1}{|C||\overline{C}|}\sum_{i\in C, j\in \overline{C}}A_{ij} &= \frac{1}{|Q|^2}\sum_{i,j=1}^n w_{ij}((\vec{f})_i-(\vec{f})_j)^2 \\
    &= \vec{f}^T \bm{L} \vec{f} / |Q|^2,
\end{align}
where we have invoked Proposition 1 of Ref.~\cite{von2007tutorial} and followed the manipulations of Section 5.1 therein. It therefore follows that
\begin{equation}
    \operatorname{argmin}_{C\subset Q} \left(\frac{1}{|C||\overline{C}|}\sum_{i\in C, j\in \overline{C}}A_{ij}\right) = \operatorname{argmin}_{C\subset Q}\left(\vec{f}^T \bm{L}\vec{f}\right),
\end{equation}
where the latter is known to be approximately minimized by unnormalized spectral clustering for $k=2$ clusters~\cite{von2007tutorial}, completing the proof. 
\end{proof}
\subsection{Proof of Theorem~\ref{theorem:hilbert_schmidt_distance}\label{app:hilbert_schmidt_distance_proof}}
\begin{proof}
We begin by noting that
\begin{equation}
    ((\widehat{G}_L \otimes \widehat{G}_R)U-\widehat{V})^\dagger ((\widehat{G}_L \otimes \widehat{G}_R)U-\widehat{V})=2-U^\dagger(\widehat{G}_L \otimes \widehat{G}_R)^\dagger \widehat{V} - \widehat{V}^\dagger(\widehat{G}_L \otimes \widehat{G}_R)U, \label{eqn:hs_distance_quadratic_expand}
\end{equation}
where we have used unitarity to simplify $U^\dagger(\widehat{G}_L \otimes \widehat{G}_R)^\dagger (\widehat{G}_L \otimes \widehat{G}_R)U=\mathds{1}$ and $\widehat{V}^\dagger\widehat{V}=\mathds{1}$ (due to the isometric property $V^\dagger V=\mathds{1}$ in the latter case). Applying $\sum_{\alpha=1}^{\chi_{\operatorname{in}}}\widehat{\bra{\psi_\alpha}}\cdot\widehat{\ket{\psi_\alpha}}$ to both sides of Equation~\ref{eqn:hs_distance_quadratic_expand}, we obtain for the right-hand side
\begin{align}
    \operatorname{RHS} &= 2\chi_{\operatorname{in}}-\sum_{\alpha=1}^{\chi_{\operatorname{in}}}\left(\widehat{\bra{\psi_\alpha}}U^\dagger(\widehat{G}_L \otimes \widehat{G}_R)^\dagger \widehat{V}\widehat{\ket{\psi_\alpha}}+\widehat{\bra{\psi_\alpha}}U^\dagger(\widehat{G}_L \otimes \widehat{G}_R)^\dagger \widehat{V}\widehat{\ket{\psi_\alpha}}^*\right) \\
    &=2\chi_{\operatorname{in}}-2\operatorname{Re}\left(\sum_{\alpha=1}^{\chi_{\operatorname{in}}}\widehat{\bra{\psi_\alpha}}\widehat{V}^\dagger(\widehat{G}_L \otimes \widehat{G}_R) U\widehat{\ket{\psi_\alpha}} \right) \\
    &= 2\chi_{\operatorname{in}}\left(1-\mathcal{F}_V(U)\right), \label{eqn:frobenius_norm_rhs}
\end{align}
and for the left-hand side
\begin{align}
\operatorname{LHS}&=\sum_{\alpha=1}^{\chi_{\operatorname{in}}}\widehat{\bra{\psi_\alpha}}((\widehat{G}_L \otimes \widehat{G}_R)U-\widehat{V})^\dagger ((\widehat{G}_L \otimes \widehat{G}_R)U-\widehat{V})\widehat{\ket{\psi_\alpha}} \\
&= \Tr[\mathcal{P}^\dagger ((\widehat{G}_L \otimes \widehat{G}_R)U-\widehat{V})^\dagger ((\widehat{G}_L \otimes \widehat{G}_R)U-\widehat{V}) \mathcal{P}]\\
&= \|(\widehat{G}_L \otimes \widehat{G}_R)U\mathcal{P} - \widehat{V}\mathcal{P}\|^2_{\operatorname{HS}}, \label{eqn:frobenius_norm_lhs}
\end{align}
where $\mathcal{P}$ is the projector onto the subspace spanned by $\widehat{\mathcal{S}}$. Equating \eqref{eqn:frobenius_norm_rhs} and \eqref{eqn:frobenius_norm_lhs}, it follows that maximizing $F_V(U)$ minimizes $\|(\widehat{G}_L \otimes \widehat{G}_R)U\mathcal{P} - \widehat{V}\mathcal{P}\|_{\operatorname{HS}}$, completing the proof.
\end{proof}

\section{TTN compilation algorithm\label{app:TTNcompialtionAlg}}

This appendix provides further details on the TTN compilation method introduced in Section~\ref{sec:compilation_methods} and summarized in Algorithm~\ref{alg:compiler}. The method proceeds as follows:

\begin{algorithmic}[1]
\Procedure{CompileTTN}{$\mathcal{T}=\{T_\nu\}_\nu,k,\delta k,
                        \vec{\varepsilon}$}
    \State $\widetilde{\mathcal{T}} \gets \mathcal{T}$
    \State $\pi \gets$
        \Call{GaugeCompatibleOrder}{$\widetilde{\mathcal{T}}$}

    \ForAll{$\nu\in\pi$}
        \State \textbf{Step 1: Compile up to gauge}
        \State $(C_\nu^{(k)},G_L,G_R) \gets$
            \Call{CompileUpToGauge}{$\widetilde{T}_\nu,k,\varepsilon_1$}
        \State \Call{AbsorbGauges}
            {$\widetilde{\mathcal{T}},\nu,G_L,G_R$}
            \Comment{Push gauges to uncompiled neighbors}

        \State \textbf{Step 2: Reduce gate bodyness}
        \State $C_\nu^{(2)} \gets$
            \Call{LowerGateBodyness}{$C_\nu^{(k)},k,\delta k,\varepsilon_2$}

        \State \textbf{Step 3: Construct gate-reduction family}
        \State $\mathcal{R}_\nu \gets$
            \Call{GateReductionFamily}
                {$C_\nu^{(2)},\widetilde{T}_\nu,\varepsilon_3$}
    \EndFor

    \State $(C_\nu^\star)_\nu \gets$
        \Call{SelectCompilations}
            {$\{\mathcal{R}_\nu\}_\nu,\varepsilon_4$}
    \State \Return
        \Call{AssembleTTNCircuit}{$(C_\nu^\star)_\nu$}
\EndProcedure
\end{algorithmic}

The algorithm starts with an analysis of the TTN connectivity, performed in \textsc{GaugeCompatibleOrder}. Its purpose is to establish the order $\pi$ in which the isometries $T_\nu$ can be compiled. Recall that the isometries are compiled up to gauge transformations that are pushed down the TTN. Thus, the compilation order cannot be arbitrary. However, subject to constraints imposed by the TTN connectivity, some compilations can be performed in parallel.

The main loop of the algorithm goes through all isometries $T_\nu$. Each isometry is generally compiled in three steps. In step 1, the isometry is compiled down to unitaries acting on $k$ qubits using the \textsc{CompileUpToGauge} function, defined as follows.
\begin{algorithmic}[1]
\Function{CompileUpToGauge}{$T,k,\varepsilon$}
    \State $C\gets I$ \Comment{Empty circuit}
    \State $G_L\gets I$ and $G_R\gets I$
    \State $\epsilon \gets$
        \Call{DecompositionError}{$T,C,G_L,G_R$}

    \While{$\epsilon > \varepsilon$}
        \State Add a parameterized $k$-qubit gate to $C$
        \State $(C,G_L,G_R)\gets$
            \Call{OptimizeGatesPositionsAndGauges}
                {$C,T,G_L,G_R$}
        \State $\epsilon \gets$
            \Call{DecompositionError}{$T,C,G_L,G_R$}
    \EndWhile

    \State \Return $(C,G_L,G_R)$
\EndFunction
\end{algorithmic}
Here, $k$ is one of the hyperparameters of the algorithm. It is chosen based on the bond dimension of the input isometry, the time available for compilation, and the observed empirical performance (as measured by the decomposition error as a function of the number of gates; see the discussion around Fig.~\ref{fig:single_isometry_compilation} in the main text). The function \textsc{OptimizeGatesPositionsAndGauges} uses an SVD-based procedure described in Section~\ref{sec:compilation_methods} to update the gates and gauge matrices. Gates are added sequentially to the circuit $C$. The position of each gate is optimized based on its impact on the cost function. The implementation used in this paper considers all possible additions of a 2-qubit gate to the end of the circuit $C$ and chooses the one that yields the greatest reduction in the cost function. Many alternative approaches to growing the circuit may be better suited to particular applications. Such strategies include probabilistic insertion and gradient-magnitude analysis. The choice is dictated primarily by the desired balance between compilation accuracy and the allocated computational resources. The output of \textsc{CompileUpToGauge} function is a quantum circuit built from $k$-qubit gates, together with two gauge matrices, $G_L$ and $G_R$, such that the resulting representation is $\varepsilon$ close to $T$ on the relevant subspace, as defined by the cost function in Eq.~\eqref{eqn:compilation_cost_func}. The gauge matrices are pushed to other isometries $\{ T_{\nu'}\}_{\nu'}$ in \textsc{AbsorbGauges}, according to the connectivity of $\mathcal{T}$.

Step 2 of the algorithm decomposes each gate in the circuit $C$ obtained in step 1 into 2-qubit gates using the \textsc{LowerGateBodyness} function.
\begin{algorithmic}[1]
\Function{LowerGateBodyness}{$C,k,\delta k,\varepsilon$}
    \While{$k \geq 2$}
        \ForAll{each $k$-qubit gate $U$ in $C$}
            \State $\mathcal{D} \gets$
                \Call{CompileGate}{$U,k-\delta k,\varepsilon$}
            \State Replace $U$ in $C$ by the circuit $\mathcal{D}$
        \EndFor
        \State $k \gets k - \delta k$ 
    \EndWhile
    \State \Return $C$
\EndFunction
\end{algorithmic}
The function iteratively decomposes multiqubit gates into smaller gates. It decomposes $k$-qubit gates into $(k-\delta k)$-qubit ones. $\delta k$ is another hyperparameter of the method. We used $\delta k \in \{2,3\}$ to obtain the results discussed in the main text. $\delta k$ may need to be reduced to $1$ if achieving the specified precision $\varepsilon$ proves to be too difficult. The function uses \textsc{CompileGate}, which is similar to the \textsc{CompileUpToGauge} function defined above. The only difference is that it does not include gauge matrices in the circuit optimization:
\begin{algorithmic}[1]
\Function{CompileGate}{$U,r,\varepsilon$}
    \State $C\gets I$ \Comment{Empty circuit}
    \State $\epsilon \gets$ \Call{DecompositionError}{$U,C$}

    \While{$\epsilon > \varepsilon$}
        \State Add a parameterized $r$-qubit gate to $C$
        \State $C\gets$
            \Call{OptimizeGatesAndPositions}{$C,U$}
        \Comment{Jointly optimize parameters and gate positions}
        \State $\epsilon \gets$ \Call{DecompositionError}{$U,C$}
    \EndWhile
    \State \Return $C$
\EndFunction
\end{algorithmic}
The \textsc{CompileUpToGauge} function uses the training set discussed in Section~\ref{sec:compilation_methods}. We apply the same logic to \textsc{CompileGate}. The idea behind the compilation performed here is that the input unitary $U$ needs to be represented accurately only in the context of its position in the circuit. Demanding perfect, unconditioned compilation (as measured by, e.g., a small $||U - U_\mathrm{compiled}||$) will typically lead to circuits with a suboptimal number of gates. The idea of targeted compilation described above is implemented by recomputing the original training data used to compile $T_\nu$ in step 1 with other gates in $C$ that resulted from that compilation.

Step 3 performs gate reduction. It goes through the circuit obtained in step 2 and removes gates while performing additional optimization. The algorithm chooses the gate whose removal causes the smallest increase in the decomposition error. The outcome of step 3 is a family $\mathcal{R}$ of compilations with different gate counts and decomposition errors. Gate reduction is performed by the following function:
\begin{algorithmic}[1]
\Function{GateReductionFamily}{$C,T$}
    \State $C\gets$ \Call{OptimizeGates}{$C,T$}
    \State $\mathcal{R}\gets\emptyset$
    \State Add $C$, its gate count, and its error to $\mathcal{R}$

    \While{another gate can be removed}
        \State Choose the gate $g^\star$ whose optimized removal
            causes the smallest increase in error
        \State $C\gets$
            \Call{OptimizeGates}{$C\setminus\{g^\star\},T$}
        \State Add $C$, its gate count, and its error to $\mathcal{R}$
    \EndWhile

    \State \Return $\mathcal{R}$
\EndFunction
\end{algorithmic}
Here, \textsc{OptimizeGates} is a version of \textsc{OptimizeGatesAndPositions} that performs only SVD-like updates to the gates.

After steps 1--3 are completed for all isometries $T_\nu$, the algorithm moves to its final stage, in which complete circuits approximating the quantum state represented by the TTN $\mathcal{T}$ are constructed. Recall that the compilation of each isometry $T_\nu$ produces a family $\mathcal{R}_\nu$ of circuits characterized by their gate counts and decomposition errors. The algorithm selects one compilation from each family to construct a complete circuit that minimizes the total number of gates while keeping the predicted error below a given threshold. The \textsc{SelectCompilations} function performs this optimization:
\begin{algorithmic}[1]
\Function{SelectCompilations}
    {$\{\mathcal{R}_\nu\}_\nu,\varepsilon$}
    \State Choose
        $(C_\nu^\star,N_\nu^\star,\epsilon_\nu^\star)
        \in\mathcal{R}_\nu$ for every $\nu$,
    \Statex \hspace{\algorithmicindent}
        minimizing $\sum_\nu N_\nu^\star$ subject to
    \Statex \hspace{\algorithmicindent}
        $\epsilon_{\mathrm{total}}((C_\nu^\star)_\nu)
        \leq\varepsilon$.
    \State \Return $(C_\nu^\star)_\nu$
\EndFunction
\end{algorithmic}
The results shown in the main text are obtained by running \textsc{SelectCompilations} with various target errors $\varepsilon$.

\end{document}